\documentclass[a4paper, 11pt]{article}

\usepackage[utf8]{inputenc}
\usepackage[T1]{fontenc}   
\usepackage[hidelinks]{hyperref}
\usepackage{url}           
\usepackage{booktabs}       
\usepackage{amsfonts}      
\usepackage{nicefrac}     
\usepackage{microtype}    
\usepackage{xcolor}      
\usepackage{amsmath}
\usepackage[numbers,sort&compress]{natbib}
\usepackage{fullpage}
\usepackage{subcaption}

\usepackage{amsmath, amssymb, amsthm}
\usepackage{enumitem}
\usepackage{tikz} 
\usepackage[ruled,vlined,linesnumbered]{algorithm2e}
\usepackage{algpseudocode}
\usepackage{tikz}
\usepackage{amsmath}
\usepackage{xcolor}
\usepackage{mathtools}
\usepackage{algpseudocode}
\usepackage{thmtools, thm-restate}
\usepackage{subcaption}

\usetikzlibrary{positioning}
\usetikzlibrary{fit}
\usetikzlibrary{arrows.meta}

\algrenewcommand\algorithmicrequire{\textbf{Input:}}
\algrenewcommand\algorithmicensure{\textbf{Output:}}

\allowdisplaybreaks

\newcommand{\mmsf}{PMMS-feasible }

\newcommand{\efx}{{\normalfont\textsc{EFX}}}

\newtheorem{innercommon}{X}[section]
\newtheorem{observation}[innercommon]{Observation}
\newtheorem{definition}[innercommon]{Definition}
\newtheorem{lemma}[innercommon]{Lemma}
\newtheorem{example}[innercommon]{Example}

\newtheorem{claim}[innercommon]{Claim}
\newtheorem{remark}[innercommon]{Remark}
\newtheorem{proposition}[innercommon]{Proposition}

\newcommand\blfootnote[1]{%
  \begingroup
  \renewcommand\thefootnote{}%
  \NoHyper\footnote{#1}\endNoHyper
  \addtocounter{footnote}{-1}%
  \endgroup
}

\title{Epistemic Pairwise Maximin Share}

\author{ 
Michal Feldman$^\ast$
\quad
Amos Fiat$^\dagger$
\quad
Yael Nissan$^\mathsection$
\quad 
Tomasz Ponitka$^\ddagger$
}

\date{June 17, 2026}

\begin{document}

\maketitle

\blfootnote{
This project has been partially funded by the European Research Council (ERC) under the European Union's Horizon Europe Program (FACT, grant agreement No.~101170373), by an Amazon Research Award, by the NSF-BSF (grant number 2020788), and by the Israel Science Foundation Breakthrough Program (grant No.~2600/24).}
\blfootnote{$^\ast$Tel Aviv University and Microsoft ILDC, Israel. Email: \texttt{mfeldman@tauex.tau.ac.il}}
\blfootnote{$^\dagger$Tel Aviv University, Israel. Email: \texttt{fiat@tau.ac.il}}
\blfootnote{$^\ddagger$Tel Aviv University, Israel. Email: \texttt{yaelshemesh@mail.tau.ac.il}}
\blfootnote{$^\mathsection$Tel Aviv University, Israel. Email: \texttt{tomaszp@mail.tau.ac.il}}

\begin{abstract}
We introduce \emph{epistemic pairwise maximin share} (EPMMS), a new fairness notion for 
{fair division} of indivisible goods. 
{Two 
fundamental notions in this setting are envy-freeness up to any item (EFX) and pairwise maximin share (PMMS), with PMMS being stronger than EFX. While EFX has been extensively studied, far less is known about PMMS.}
Recent work shows that relaxing EFX via an epistemic perspective leads to substantial progress on the EFX problem, raising the question of whether a similar approach can advance our understanding of PMMS. Motivated by this, we initiate the study of EPMMS, the epistemic relaxation of PMMS.
EPMMS is 
more challenging than EEFX: the key approaches underlying recent progress on epistemic EFX inherently fail to extend to EPMMS.

We establish the following results. 
\begin{enumerate}
    \item For additive valuations, 
$4/5$-EPMMS allocations exist and can be efficiently computed.
\item For bivalued valuations, 
EPMMS allocations exist and can be efficiently computed; in fact, we obtain the stronger guarantee of epistemic groupwise maximin share (EGMMS), which also strengthens the existence of MMS allocations for this setting.
\item We prove that 
EPMMS allocations exist in two settings where MMS allocations need not exist: instances with three additive agents or two types of additive agents.
\end{enumerate}
\end{abstract}

\newpage
\clearpage

\section{Introduction}

The fair division problem asks how to allocate a set of items among {$n$} agents with differing preferences so that every agent is treated fairly. It arises in many settings, from allocating seats in university courses \cite{Budish10} to assigning papers to reviewers at conferences \cite{LianMNW18}. {Our primary focus is on the setting with \emph{indivisible} items (an allocation $X_1, \ldots, X_n$ forms a partition of the items into disjoint bundles) and \emph{additive} valuations (agent $i$'s value for a bundle $X$ is $v_i(X) = \sum_{g \in X} v_i(\{g\})$).}

For two agents, the canonical solution is the \emph{cut-and-choose} mechanism: one agent divides the items into two bundles, and the other picks their preferred one.
{With more than two agents, however, the situation becomes more complex, and no obvious canonical solution emerges.}
A central challenge in fair division is to identify properties of 
the cut-and-choose mechanism
that 
extend
beyond two agents.

Toward this goal, \citet*{CaragiannisKMPSW2019} proposed two fairness notions. 
The first is \emph{envy-freeness up to any good} (EFX). We say that agent $i$ envies agent $j$ if $i$ prefers $j$'s allocation to their own.
{Although cut-and-choose guarantees that the chooser does not envy the cutter, it is generally impossible to guarantee envy-freeness (EF) for both agents, as even the simple case of a single item demonstrates.}
{EFX is a relaxation of EF requiring that any}
envy disappears after hypothetically removing any single good from the envied bundle: $v_i(X_i) \geq v_i(X_j \setminus \{g\})$ for all $i,j \in [n]$ and all $g \in X_j$.

The second {fairness notion} is \emph{pairwise maximin share} (PMMS). Consider any pair of agents  $i$ and $j$, and imagine running cut-and-choose over their combined items $X_i \cup X_j$. The best guarantee $i$ could secure as the cutter is their \emph{maximin share},
$\max_{(A,B)} \min\{v_i(A), v_i(B)\}$, taken over all partitions $(A,B)$ of $X_i \cup X_j$. 
PMMS requires that every agent $i$ receives at least their
maximin share for the set $X_i \cup X_j$, for all $j\neq i$.
Under mild assumptions, PMMS is strictly stronger than EFX \cite{CaragiannisKMPSW2019,ByrkaMP25,GargS2025}.

Both EFX and PMMS are satisfied by cut-and-choose when there are only two agents, but whether either can be satisfied for an arbitrary number of agents with additive valuations is a long-standing open problem. The existence of EFX is known in important special cases: for three agents \cite{ChaudhuryGM24,AkramiACGMM2022}, for bivalued valuations \cite{AmanatidisBFHV20,JinT25,ByrkaMP25}, and for instances with few distinct valuation types \cite{Mahara2025,PrakashGNN2025}. 
Despite being introduced in the same paper as EFX, PMMS remains significantly less explored, and as the stronger notion, much less is known about its existence.

As a successful line of attack on the EFX problem, \citet*{CaragiannisGRSV25} introduced a compelling relaxation of EFX called \emph{Epistemic EFX} (EEFX). An allocation is EEFX if each agent 
{could hypothetically}
rearrange the items held by the other agents into bundles that, from their own perspective, constitute an EFX allocation. Notably, each agent can verify fairness on the basis of their own bundle alone, without reference to how the remaining items are actually partitioned among the other agents.

This relaxation has proven more tractable than EFX itself. \citet*{CaragiannisGRSV25} established the existence of EEFX allocations under additive valuations via a reduction to the identical-rankings setting, followed by an application of Envy Cycle Elimination algorithm. \citet*{AkramiN24} subsequently extended this result to arbitrary monotone valuations via the Lone Divider approach.

{Motivated by the substantial progress on EFX enabled by the epistemic approach, and by the comparatively limited understanding of PMMS,}
we initiate the formal study of \emph{Epistemic PMMS} (EPMMS), the analogous relaxation of PMMS. 
Notably, both of the two approaches underlying the existing EEFX results inherently fail to carry over to EPMMS,
as we demonstrate through explicit counterexamples in Appendix~\ref{sec:separations}.

Beyond comparison-based fairness notions like EFX and PMMS, a substantial body of work concerns \emph{share-based} fairness. The central share-based notion is the \emph{maximin share} (MMS), defined as the maximum value a cutter can guarantee by partitioning the entire set of items into $n$ bundles, and receiving the least valuable bundle. It is known that allocations giving every agent at least their MMS need not exist in general, even for three agents with additive valuations \cite{KurokawaPW18,FeigeST21}. The epistemic approach does not apply to MMS directly, as MMS is defined independently of how items are distributed among the remaining agents. Nonetheless, every MMS allocation satisfies EEFX \cite{CaragiannisGRSV25}. By contrast, in Appendix~\ref{sec:separations}, we show that MMS does not imply EPMMS, which yields a further separation between EEFX and EPMMS.
{See Figure~\ref{fig:implications} for a broader picture of these relationships.}

More broadly, a recurring theme in the literature on EFX and PMMS is the role of \emph{PMMS-feasibility}\footnote{PMMS-feasibility is exactly the condition introduced by~\citet*{AkramiACGMM2022} under the name MMS-feasibility. We refer to it as PMMS-feasibility for three reasons: first, to align the terminology with the GMMS-feasibility condition that we introduce in this paper (see Definition~\ref{def:gmmsfeasible}); second, to emphasize that PMMS-feasibility (resp.\ GMMS-feasibility) precisely characterizes the valuations for which envy-freeness implies PMMS (resp.\ GMMS), see Section~\ref{sec:ef_implies_pmms}; and third, because the original name can potentially be misleading: although PMMS-feasibility guarantees the existence of an MMS allocation for two agents (where MMS coincides with PMMS), no such guarantee holds for three or more agents.
},
a structural condition that generalizes additivity and also captures budget-additive, unit-demand, and multiplicative valuations \cite{AkramiACGMM2022}; see 
{Section~\ref{sec:model} and Appendix~\ref{sec:valuation_classes}.}
Without any assumptions, both notions face strong impossibilities: EFX can fail to exist under monotone valuations for three or more agents \cite{akrami2026counterexampleefxnge}, and PMMS fails to exist already for two agents \cite{ByrkaMP25}. PMMS-feasibility circumvents these impossibilities: EFX exists for three agents when at least one is PMMS-feasible \cite{AkramiACGMM2022}, and PMMS exists for two agents when one of them is PMMS-feasible \cite{ByrkaMP25}. The case of $n$ PMMS-feasible agents remains open for both notions. 
Notably, EEFX is guaranteed to exist for arbitrary monotone valuations \cite{AkramiN24}, whereas EPMMS inherits the non-existence of PMMS for two monotone agents, since the two notions coincide in this case. This marks yet another contrast between EEFX and EPMMS.

\subsection{Our Results}

We begin with additive valuations, for which we obtain a multiplicative approximation of EPMMS.

\begin{restatable}[Additive Valuations]{theorem}{thmadditive}
\label{thm:4_5_EPMMS}
Every instance with additive valuations admits a $4/5$-EPMMS allocation.
Moreover, this allocation can be computed in polynomial time.
\end{restatable}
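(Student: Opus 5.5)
The plan is to build the allocation with a bag-filling-type procedure and to certify the $4/5$ guarantee through an explicit hypothetical repartition, for each agent, of the goods held by the others. An agent $i$ is $\alpha$-EPMMS if the items $M\setminus X_i$ can be rearranged into bundles $(Y_j)_{j\neq i}$ so that $v_i(X_i)\ge \alpha\cdot \mathrm{MMS}^2_i(X_i\cup Y_j)$ for every $j$. Two elementary upper bounds on the two-agent maximin share drive the argument:
\[
\mathrm{MMS}^2_i(X_i\cup Y)\le \tfrac12\bigl(v_i(X_i)+v_i(Y)\bigr)
\quad\text{and}\quad
\mathrm{MMS}^2_i(X_i\cup Y)\le v_i(X_i\cup Y)-\max_{g\in X_i\cup Y} v_i(g).
\]
Consequently it suffices, for each agent $i$, to partition $M\setminus X_i$ into $n-1$ bundles each of which is either \emph{light} ($v_i(Y)\le \tfrac32 v_i(X_i)$) or \emph{dominated} (a single heavy good $h$ plus a remainder $R$ with $v_i(R)\le \tfrac54 v_i(X_i)-v_i(X_i\setminus\{\text{its top good}\})$, or a similar inequality coming from the second bound).

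The algorithm runs as follows. Goods are processed in decreasing order of value; to make this order well defined across agents, I would first try to establish that the EPMMS condition is preserved under the standard reduction to identically ordered instances. This should work because agent $i$'s condition depends only on $X_i$ and on the multiset of values of the remaining goods, which $i$ repartitions freely. The reduction only needs to hold for the specific certificate used here, since the paper warns that the EEFX route fails in general.

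In the ordered instance I would set a threshold $\tau_i$ for each agent, for instance a scaled greedy/LPT benchmark of $i$ on the goods still present. First, while some remaining top good is worth at least $\tau_i$ to some remaining agent $i$, that agent is given this good alone and removed. Afterwards, bags are filled by appending goods (largest remaining first, with one smallest good added at the end if needed) until some remaining agent values the bag at least $\tau_i$. That agent takes the bag and leaves. Each step is a greedy scan, so the running time is polynomial.

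The certificate for agent $i$ is an LPT partition, by $i$'s values, of the goods held by others into $n-1$ bundles. For agents who left early, every good that remained was worth less than $\tau_i\le v_i(X_i)$ when $i$ left, and each bag later removed by another agent was not yet worth $\tau_i$ to $i$ before its last good was added. Hence $i$ values each such bag at less than $\tau_i$ plus one good, which makes it light or dominated. Goods allocated before $i$ are repacked together with the rest via LPT. Agents served at the end receive the leftover goods; for them I would argue through the choice of $\tau_i$ that the leftover is large enough.

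The main obstacle is choosing $\tau_i$ so that two demands hold simultaneously. First, the process must never run out of goods before every agent is served, which needs $\tau_i$ at most roughly a $4/5$-scaled share. Second, the LPT repacking must keep every bundle within the $5/4$ factor, which needs $\tau_i$ large. The critical case is a bag of two goods, each slightly below $\tau_i$, paired with a singleton $X_i$. Here I expect a case analysis on whether $i$'s own bundle is a single good or contains at least two goods, and to find that $4/5$ is exactly where the two-good and three-good configurations balance. I would first try $\tau_i=\tfrac45\cdot$ (the value of $i$'s worst bundle in an LPT partition of $M$ into $n$ parts), and adjust this definition if the counting fails.
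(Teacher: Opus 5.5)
Your proposal does not yet contain a proof. The two demands you flag as ``the main obstacle'' are the whole content of the theorem, and neither is established. As written, the plan also fails in two places.

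\textbf{The certificate step for early leavers is false.} Once agent $i$ has left, bags are closed according to the \emph{remaining} agents' valuations. So nothing bounds $i$'s value for those later bags, or for the final leftover bundle, by $\tau_i$ plus one good. Repacking via LPT would then need a bound on $v_i(M\setminus X_i)$ and on its large goods relative to $v_i(X_i)$, and a value threshold does not supply one.

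\textbf{The suggested $\tau_i=\tfrac45\,\mathrm{LPT}_i$ breaks already for $n=2$,} where EPMMS coincides with PMMS. Take identical valuations with two goods of value $1$ and many tiny goods of total value $1$. Then $\mathrm{LPT}=\tfrac32$ and $\tau=\tfrac65$, so no singleton is claimed. Filling largest-first closes the bag $\{1,1\}$, which is worth $2\ge\tau$. The other agent is left with value $1<\tau$, while $\mu(2,M)=\tfrac32$, so she is only $\tfrac23$-PMMS.

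This is exactly the configuration that defeats threshold arguments: a bundle of small goods facing a bag of two goods, each worth more than half of it. More generally, a closed bag can be worth almost $2\tau$ to the agent left behind, so with $n=2$ she is guaranteed $\tau$ only if $\tau\le\tfrac13 v(M)$. That is well below $\tfrac45$ of her share. Changing the filling rule or raising $\tau_i$ just moves the difficulty into the ``never run out'' demand, for which you offer no analysis.

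\textbf{The missing idea is that no threshold or repacking is needed.} The paper reduces to identical rankings. This reduction holds for arbitrary $\alpha$-EPMMS certificates, via the rank bijection of Caragiannis et al.\ and an exchange step that uses $\alpha\le 1$. The paper's negative examples concern envy-cycle steps in general instances and the lone-divider lemma, not the ordering reduction. The paper then perturbs to a non-degenerate instance and runs Envy Cycle Elimination, giving goods in decreasing order to unenvied agents.

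Because every agent stays in play throughout, every bundle $X_j$ satisfies $v_i(X_j\setminus\{g\})\le v_i(X_i)$ for \emph{all} $i$, where $g$ is the last and smallest good added to $X_j$. The argument then splits into three cases.
\begin{itemize}
\item If $|X_j|=1$, then $\mu_i(2,X_i\cup X_j)\le v_i(X_i)$.
\item If $v_i(g)\le\tfrac12 v_i(X_i)$, averaging gives $\mu_i(2,X_i\cup X_j)\le\tfrac54 v_i(X_i)$. This is your ``light'' case.
\item Otherwise every earlier good exceeds $\tfrac12 v_i(X_i)$. This forces $X_j=\{g',g\}$, and forces the bundle that later becomes $X_i$ to hold a single good $h$ at that moment, with $v_i(h)>v_i(g')>v_i(g)$. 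The paper's lemma $\mu_i(2,P\cup Q)\le\mu_i(2,(P\cup Q)\setminus S)+v_i(S)$, applied with $S=X_i\setminus\{h\}$, gives $\mu_i(2,X_i\cup X_j)\le v_i(h)+v_i(X_i\setminus\{h\})=v_i(X_i)$.
\end{itemize}

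This yields genuine $\tfrac45$-PMMS in the ordered instance, which the reduction transfers to $\tfrac45$-EPMMS. Your intuition that $\tfrac45$ comes from balancing a two-good bag against the averaging bound is right. What rules out the bad configuration is the dominating good $h$ in $X_i$, and that comes from envy-cycle elimination under a common order, not from any threshold.
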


To establish the existence of $4/5$-EPMMS for additive valuations, we prove that $4/5$-PMMS allocations exist under the \emph{identical-rankings} assumption\footnote{Both our work and that of \citet*{DaiGMGXZ2024} claim the existence of a $4/5$-PMMS allocation for instances with identical rankings via the Envy Cycle Elimination algorithm. However, the proof of \citet{DaiGMGXZ2024} contains issues that, to the best of our knowledge, remain unresolved. Crucially, they do not argue that an envy cycle elimination step (Line~\ref{line:envy_cycle} of Algorithm~\ref{alg:envy_cycles}) preserves the $4/5$-PMMS guarantee. This does not follow automatically: in Appendix~\ref{pmmsincycleelimination}, we show that in general a single envy cycle elimination step can only guarantee $2/3$-PMMS, even when starting from an exact PMMS allocation. We show that the $4/5$-PMMS guarantee can nevertheless be recovered for the Envy Cycle Elimination algorithm in the setting with identical rankings, but only under the additional assumption that the valuations are non-degenerate, which is missing from the analysis of \citet{DaiGMGXZ2024}. In Appendix~\ref{sec:envy_cycles_degenerate}, we provide examples showing that without non-degeneracy, the Envy Cycle Elimination algorithm does \emph{not} guarantee $4/5$-PMMS, even under identical rankings and even in the absence of zero-valued items.
\label{footnote_4_5_pmms}
}.
For comparison, \citet*{Kurokawa17} proved the existence of $0.781$-PMMS allocations for additive valuations; our result shows that the epistemic  relaxation improves this factor to $0.8$ {(but for the weaker notion of EPMMS)}.

We next establish the existence of exact EPMMS allocations in several important settings. While EFX is known to exist in each of these settings \cite{AmanatidisBFHV20,JinT25,ByrkaMP25,ChaudhuryGM24,AkramiACGMM2022,Mahara2025,PrakashGNN2025}, EPMMS and EFX are incomparable notions, so our results neither imply nor are implied by these EFX existence results.

First, we consider \emph{bivalued} valuations, in which each agent $i$'s valuation function $v_i$ is additive and satisfies $v_i(\{g\}) \in \{a, b\}$ for every item $g$ and some fixed $a, b \geq 0$. In this setting, we in fact establish a stronger guarantee: the existence of an \emph{Epistemic GMMS} (EGMMS) allocation. The groupwise maximin share (GMMS) criterion strengthens both PMMS and MMS by requiring that each agent receive at least their maximin share within every subgroup of agents to which they belong; EGMMS is the analogous epistemic relaxation (see Definition~\ref{def:egmms}).

\begin{restatable}[Bivalued Valuations]{theorem}{thmbivalued}
\label{thm:GMMS_bivalued}
Every instance with bivalued valuations admits an EGMMS, and hence an EPMMS, allocation. Moreover, this allocation can be computed in polynomial time.\end{restatable}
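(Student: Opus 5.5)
The plan is to reduce EGMMS to an envy-freeness-type condition and then build that condition with a bivalued-specific procedure.

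\textbf{Reduction.} For agent $i$ with bundle $X_i$, EGMMS asks for a rearrangement $Y$ of the items outside $X_i$ into bundles for the other agents such that, for every group $S\ni i$, $v_i(X_i)$ is at least $i$'s $|S|$-maximin share of $X_i\cup\bigcup_{j\in S\setminus\{i\}}Y_j$. Section~\ref{sec:ef_implies_pmms} shows that GMMS-feasibility is exactly the condition under which envy-freeness implies GMMS, and additive valuations should be GMMS-feasible, since they are PMMS-feasible. So it suffices to find a \emph{certificate} for $i$: a rearrangement $Y$ that is ``locally'' fair from $i$'s view. Envy-freeness may be unattainable, so I will aim for a stronger but still achievable certificate. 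Let $M_{-i}$ denote the items outside $X_i$. The rearrangement should split $M_{-i}$ into $n-1$ bundles that are as balanced as possible under $v_i$. For bivalued $v_i$ with values $a>b$, balance means: the numbers of $a$-items in the bundles differ by at most one, and the numbers of $b$-items are then balanced to even out the totals. I will then show that GMMS for $i$ within every group reduces to one inequality: $v_i(X_i)\ge$ the value of the smallest bundle in a balanced split of the union of $X_i$ with any $k-1$ of these bundles.

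\textbf{Algorithm.} I would start from an MMS allocation for bivalued instances; known constructions give one via a round-robin on $a$-items followed by a greedy fill with $b$-items. I would strengthen it so that each agent $i$'s own bundle satisfies a \emph{threshold condition}. Let $k_i$ be the number of items worth $a$ to $i$ and $\ell_i$ the number worth $b$. The condition is that $v_i(X_i)\ge \min$ bundle value in the balanced $n$-partition of the whole set from $i$'s perspective, and, more strongly, that $v_i(X_i)$ is at least the balanced $|S|$-share of $X_i\cup$ any $|S|-1$ balanced blocks of $M_{-i}$.

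The key lemma is a bivalued exchange argument: if $v_i(X_i)$ meets the balanced-$n$ share of $M$, then balancing $M_{-i}$ into $n-1$ blocks yields blocks each worth at most what $X_i$ can ``absorb''. I would argue this via integer counting of $a$- and $b$-items. Any $k$ blocks plus $X_i$ contain item counts whose balanced $k$-split has minimum at most $v_i(X_i)$. The reason is that otherwise $M_{-i}$ could be rebalanced together with $X_i$ to beat the $n$-share, contradicting the definition of MMS as a maximum. The step requires care because maximin values are not monotone in a simple averaged way; the discrete structure with two values is what makes the counting feasible.

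\textbf{Main obstacle and computation.} The main obstacle is the key lemma for small groups $S$, where rebalancing locally is not obviously bounded by the global $n$-share. If direct counting fails, I would instead design the algorithm to guarantee directly, for each $i$, that $X_i$ is at least the ``$(n-1)$-balanced share'' of $M\setminus X_i$. This can be done by a matching-based phase assigning $a$-items as evenly as possible per agent's view, then a $b$-item greedy phase with envy-cycle style swaps preserving the invariant. Polynomial time follows because each phase processes items once and cycle eliminations reduce a potential.
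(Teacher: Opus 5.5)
\textbf{There is a genuine gap: your central ``key lemma'' is false.} The lemma says that once $v_i(X_i)\ge\mu_i(n,M)$, splitting $M\setminus X_i$ into $n-1$ balanced blocks certifies GMMS for $i$, since otherwise a rebalancing would beat the $n$-share. The paper's Observation~\ref{MMS_and_EPMMS_incomparable} refutes this.

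Take three agents with identical bivalued valuations, two items of value $5$, seven items of value $3$, and $X_i=\{5,5\}$.
\begin{itemize}
\item Your hypothesis holds: $v_i(X_i)=10\ge 9=\mu_i(3,M)$.
\item Your fallback invariant also holds: $v_i(X_i)\ge\mu_i(n-1,M\setminus X_i)=9$.
\item Yet agent $i$ has no certificate at all. Any split of the seven $3$'s into two bundles leaves one bundle with at least four of them, and $\mu_i(2,\{5,5,3,3,3,3\})=11>10$.
\end{itemize}
The contradiction you invoke never materializes, because $v_i(X_i)$ can strictly exceed $\mu_i(n,M)$. The local split into $11,11$ only extends to the global partition $(11,11,9)$, which does not beat $9$. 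So no condition of the form ``$v_i(X_i)$ exceeds a share threshold'' can carry the proof, and starting from an arbitrary MMS allocation is exactly what fails. What matters is the composition and cardinality of $X_i$ relative to the remaining items.

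\textbf{The missing idea is how the bundles are chosen.} The paper proceeds as follows.
\begin{itemize}
\item It reduces to identical rankings (Proposition~\ref{prop:EPMMS_EGMMS_identical_rankings_WLOG}) and orders agents so that their high-value prefixes are nested.
\item Each agent in turn takes the \emph{minimum-cardinality} bundle of a \emph{leximin} (not merely MMS) partition of the remaining items (Algorithm~\ref{alg:leximin-feige}).
\item The certificate for $i$ is $(X_1,\dots,X_i,B^i_{i+1},\dots,B^i_n)$, and GMMS is verified by backward induction on $j_1=\min J$.
\item If $X_{j_1}$ contains an item that $i$ values highly, nestedness makes agents $j_1,\dots,i$ agree on the remaining items. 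Lemma~\ref{lem:leximin_substitution} then makes the certificate suffix a leximin partition.
\item Otherwise $X_{j_1}$ consists of low items for $i$. If it is small, Lemma~\ref{lem:mms_monotonicity} drops it; if it is large, it is swapped for a larger bundle outside $J$.
\end{itemize}
None of these ingredients appears in your plan. Your algorithmic phase (``matching'', ``envy-cycle swaps preserving the invariant'') is unspecified, so the polynomial-time claim is also unsupported; the paper needs a dynamic program for leximin partitions under identical bivalued valuations.

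\textbf{Two smaller inaccuracies.}
\begin{itemize}
\item Additive valuations are GMMS-feasible because they are nice cancelable, not because they are PMMS-feasible, which is a strictly larger class.
\item Spreading high-value items evenly is neither a maximin split nor a known MMS construction. For $\{3,3,2,2,2\}$ and two agents, giving each agent one $3$ yields minimum $5$, while $\{3,3\},\{2,2,2\}$ achieves $6$.
\end{itemize}
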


In particular, our result strengthens that of \citet*{Feige2022MMSab}, who proved the existence of MMS allocations for bivalued instances, since EGMMS implies MMS. We note that EGMMS is a strictly stronger notion: in fact, even PMMS and MMS simultaneously do not imply EGMMS; see Section~\ref{sec:sep_egmms_epmms}.  

We now turn to the case of three agents. For this case, we additionally guarantee that the allocation is individually MMS-satisfying or PMMS-satisfying (IMMP), which requires that each agent either (1) receives at least her MMS share, or (2) does not PMMS-envy any other agent.

\begin{restatable}[Three Agents]{theorem}{thmthreeagents}\label{thm:threeagents}
Every instance with three additive agents admits an allocation that is simultaneously EPMMS and IMMP. More generally, (a) to achieve EPMMS it suffices that at least two agents are PMMS-feasible, (b) to achieve both EPMMS and IMMP it suffices that one of the agents is PMMS-feasible and another agent is GMMS-feasible.
\end{restatable}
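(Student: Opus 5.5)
We plan to prove Theorem~\ref{thm:threeagents} with a cut-and-choose style argument for three agents, in the spirit of the Lone Divider approach and of the three-agent EFX constructions. Throughout we assume the basic facts about PMMS-feasibility (Section~\ref{sec:model}). For a PMMS-feasible agent, if she weakly prefers her bundle to $B$, then she receives her PMMS share on $X_i\cup B$; the analogous statement holds for GMMS-feasibility and the whole item set (Section~\ref{sec:ef_implies_pmms}). Additive agents are both PMMS- and GMMS-feasible, so the three-agent statement follows from parts (a) and (b). Recall that agent $i$ is EPMMS-satisfied if the items $M\setminus X_i$ can be redistributed into two bundles $Y_j,Y_k$ such that $i$ gets her pairwise maximin share on $X_i\cup Y_j$ and on $X_i\cup Y_k$.

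\textbf{Part (a).} Let agents $1,2$ be PMMS-feasible. We first let agent $3$ (arbitrary monotone) act as divider. Agent $3$ computes a partition $(A_1,A_2,A_3)$ of $M$ that is PMMS from her perspective, for instance a leximin-style $3$-partition that is pairwise optimal for $v_3$ on every pair of bundles. Then whichever bundle she receives, she receives her pairwise share against the two others, and in particular she is EPMMS-satisfied. Agents $1$ and $2$ then choose bundles. If they have distinct favorite bundles, each is envy-free, and by PMMS-feasibility each is PMMS-satisfied, hence EPMMS-satisfied. If both prefer the same bundle $A_1$, we merge the remaining two bundles $A_2\cup A_3$ and ask agent $1$ to re-cut the pair $(A_1, A_2\cup A_3)$ differently. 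This step needs care, since agent $3$ must keep a bundle she values at least as much. Our plan is to let agent $3$ take her preferred bundle among those not desired, and then run cut-and-choose between agents $1$ and $2$ on the union of the other two bundles. Agent $2$ (the chooser) is then envy-free toward agent $1$ and, being the top choice, toward the bundle taken by agent $3$ as well. Agent $1$ (the cutter) receives her PMMS share against agent $2$. Against agent $3$, agent $1$ uses the epistemic freedom: the items outside $X_1$ may be regrouped so that the comparison bundle is one she values at most her own.

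\textbf{Part (b).} Let agent $1$ be PMMS-feasible and agent $2$ GMMS-feasible. We follow the same scheme, but additionally ensure IMMP. Each agent either receives her MMS value, which for agent $3$ is arranged by having her divide according to an MMS partition refined to be pairwise optimal, or receives a bundle she does not envy. For the GMMS-feasible agent, the absence of envy implies both PMMS and MMS.

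The main obstacle is the conflict case, where two agents want the same bundle. In that case the reshuffle must simultaneously keep the divider's pairwise guarantee and give the cutter an epistemic witness partition. We would first try to handle it by choosing the divider's partition to be lexicographically optimal, so that any rebalancing of two bundles cannot hurt her minimum pairwise value. If that fails, we would fall back on a potential argument: iterate re-cuts, with the minimum value strictly increasing, so the process terminates.
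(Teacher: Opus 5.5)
You have a genuine gap in the conflict case, which is the heart of the theorem (your non-conflict case is fine). Both claims you make there are false:
\begin{itemize}
\item After agent $3$ takes $A_3$ and agents $1,2$ run cut-and-choose on $A_1\cup A_2$, agent $2$ receives only a piece of $A_1\cup A_2$. Her preference for $A_1$ over $A_3$ says nothing about that piece versus $A_3$.
\item Nothing lets agent $1$ regroup $M\setminus X_1$ into bundles she values at most $v_1(X_1)$. Her cut only controls her comparison with agent $2$.
\end{itemize}

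Concretely, let $|A_1|=|A_2|=|A_3|=10$. Let agent $3$ value every item at $1$ except one item of $A_3$ worth $2$. Then $(A_1,A_2,A_3)$, with values $(10,10,11)$, is a leximin partition for her, and she strictly prefers $A_3$ to $A_2$. Let agents $1$ and $2$ share the additive valuation $v$ giving each item of $A_1$, $A_2$, $A_3$ value $1$, $0.1$, $0.9$, respectively. Both prefer $A_1$, agent $3$ takes $A_3$, and every maximin cut of $A_1\cup A_2$ has both halves worth exactly $5.5$. For $i\in\{1,2\}$ we get $v(M\setminus X_i)=14.5$, so every $2$-partition of $M\setminus X_i$ has a part $Y$ with $v(Y)\ge 7.25$. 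Since all items are worth at most $1$, $\mu(2,X_i\cup Y)\ge(12.75-1)/2>5.5$. Hence neither agent $1$ nor agent $2$ is EPMMS-satisfied. Both also fall below $\mu(3,M)\ge 6$ and PMMS-envy agent $3$, so IMMP fails as well.

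Your fallbacks do not help. The divider's partition is already leximin here, and the divider is not the agent in trouble. The iterated re-cut/potential argument is not specified.

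The missing idea is to decide in advance which agent will rely on a hypothetical partition, and to make her envy-free within a genuine partition of $M$. In your labels:
\begin{itemize}
\item Let $A_1$ be agent $2$'s \emph{least} preferred bundle of the divider's partition.
\item Let agent $2$ re-split $A_2\cup A_3$ by a leximin $2$-partition $(A_2',A_3')$, so that $\min\{v_2(A_2'),v_2(A_3')\}\ge\min\{v_2(A_2),v_2(A_3)\}\ge v_2(A_1)$.
\item Agent $1$ chooses first among $A_1,A_2',A_3'$. She is envy-free within this partition of $M$, which is therefore her certificate by PMMS-feasibility, even if the final allocation differs. It also gives her MMS if she is GMMS-feasible.
\item If $A_1$ is still available, the divider takes it and agent $2$ gets the remaining primed bundle. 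She is then envy-free toward $A_1$ and PMMS-satisfied toward the other primed bundle by the choice of the split.
\item Otherwise agent $2$ takes her favorite of $A_2,A_3$, which leaves her envy-free since $A_1$ was her least preferred, and the divider takes the last bundle.
\end{itemize}
The divider thus always holds one of her own leximin bundles, which gives her both a certificate and MMS. Agent $2$ is actually PMMS-satisfied. This is the paper's construction (with the divider called agent~1 there), and it yields (a), and (b) with the GMMS-feasible agent as the first chooser.
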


The IMMP property
is a natural generalization of the {individually MMS-satisfying or EFX-satisfying} (IMMX) property introduced by \citet*{babaioff2026nearoptimalbestofbothworldsfairnessagents}, 
which requires that each agent either (1) receives at least her MMS share, or (2) does not EFX-envy any other agent.
While the existence of IMMX for three additive agents follows from the existence of EFX in this setting \cite{ChaudhuryGM24}, the same reasoning does not apply to IMMP, since the existence of PMMS for three additive agents remains open.

{Finally}, we consider instances with two \emph{types} of agents, meaning that the agents are partitioned into two groups within which all members share the same valuation function.

\begin{restatable}[Two Types of Agents]{theorem}{thmtwotypes}
\label{thm:two_types_pmms}
Every instance with two types of additive agents admits an EPMMS allocation.
More generally, this holds whenever at least one of the types is PMMS-feasible.
\end{restatable}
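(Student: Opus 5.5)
The plan is a cut-and-choose style construction that exploits the fact that EPMMS only asks each agent to certify fairness against some rearrangement of the other agents' items. Let the two types be $A$ (with $n_A$ agents and valuation $v_A$) and $B$ (with $n_B$ agents and valuation $v_B$), and assume $v_A$ is PMMS-feasible. Additive valuations are PMMS-feasible, so the additive statement follows from the general one.

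First we handle the degenerate case $n_B = 0$. Here all agents share $v_A$, and it suffices to produce a PMMS allocation for identical PMMS-feasible valuations. A natural candidate is a leximin-optimal partition under $v_A$: any pair violating PMMS could be repartitioned to raise the minimum, contradicting leximin optimality. So identical agents admit PMMS, and hence EPMMS.

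In the general case we proceed in three steps.
\begin{enumerate}
\item We take an $n$-partition $P = (P_1,\dots,P_n)$ that is \emph{PMMS} with respect to $v_A$, i.e.\ each pair of bundles is balanced from $A$'s view. We choose it leximin-optimal for $v_A$, so it is simultaneously PMMS and MMS for type $A$.
\item The $B$-agents pick their $n_B$ favorite bundles according to $v_B$, and the $A$-agents receive the remaining bundles.
\item Every $A$-agent is PMMS with respect to every other bundle, since $P$ is PMMS for $v_A$. This holds for the actual allocation, hence certainly in the epistemic sense.
\end{enumerate}

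The difficulty lies with the $B$-agents. A $B$-agent $i$ holds one of the $v_B$-top $n_B$ bundles. Against the $A$-agents' bundles $i$ is envy-free, and under PMMS-feasibility of $v_B$ envy-freeness implies PMMS; without assuming $v_B$ is feasible, additivity of $v_B$ already suffices. The remaining comparisons are between $B$-agents, who may envy one another.

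To resolve this, we use the epistemic freedom. Agent $i$ may rearrange all items outside $X_i$ arbitrarily into $n-1$ bundles, so we need a partition of $M\setminus X_i$ against which $X_i$ is pairwise-MMS from $i$'s view. The fix is to let the $B$-agents take their bundles via a different procedure. We would first run the identical-valuation construction for $v_B$ on a carefully chosen subset. Concretely, we iterate: find a leximin $v_B$-partition restricted so that the $A$-agents (choosing last among leftovers) remain PMMS.

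A cleaner approach is symmetric. We would run a leximin $n$-partition for $v_B$, let the $A$-agents choose, and use PMMS-feasibility of $v_A$ so that envy-freeness among the choosers gives them PMMS. The $B$-agents, holding bundles of a leximin $v_B$-partition, are PMMS against all of them.

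The main obstacle is that choosers of the same type may envy each other when they pick sequentially. So we must show that $n_A$ agents with identical $v_A$ can select $n_A$ bundles that are pairwise fine for them, and epistemically fine relative to the others. We would argue as follows. Among the chosen bundles, any $v_A$-envy between two $A$-agents is removed by swapping, since they share a valuation; this can always be done. Each $A$-agent then holds a bundle at least as good as every unchosen bundle. For pairs of $A$-agents, the epistemic rearrangement lets agent $i$ rebalance the union of the other $A$-bundles. Here we would invoke PMMS-feasibility of $v_A$ together with a leximin re-split of the other chosen $A$-bundles' union to remove residual PMMS-envy.

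I expect this pairwise-within-type step to be the crux, since it is exactly where the feasibility assumption must be used.
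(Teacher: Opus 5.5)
Your ``symmetric'' route has the right skeleton, and it matches the paper's: the non-feasible type $B$ cuts with a leximin $n$-partition $P$ under $v_B$, and the PMMS-feasible type $A$ takes the $n_A$ bundles of $P$ that are best under $v_A$. Your first route should be dropped, because in the general statement $v_B$ need not be additive, so the choosers' envy-freeness does not yield PMMS. The step you flag as the crux, however, fails as written. Swapping bundles between two agents with identical valuations never removes envy; it only moves it. And if the $A$-agents keep bundles of $P$, no epistemic rearrangement can rescue an agent holding a small one.

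For a counterexample, take $n_A=2$, $n_B=1$, items $a,b,c_1,\dots,c_{10}$, with $v_B(a)=v_B(b)=10$, $v_B(c_k)=1$, and $v_A(a)=1$, $v_A(b)=2$, $v_A(c_k)=1$. The unique $v_B$-leximin partition is $P=(\{a\},\{b\},\{c_1,\dots,c_{10}\})$. Whichever two of its bundles go to the $A$-agents, some $A$-agent $i$ holds a singleton $X_i\in\{\{a\},\{b\}\}$. Any split of the eleven remaining items into two bundles puts at least six items into some bundle $D$. For any $d\in D$, the partition $(X_i\cup\{d\},\,D\setminus\{d\})$ of $X_i\cup D$ has both parts worth more than $v_A(X_i)$, so agent $i$ has no EPMMS certificate.

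The missing idea is that the $A$-agents should \emph{actually} re-partition, their own bundles included. Pool the $n_A$ chosen bundles and allocate a leximin $n_A$-partition of their union under $v_A$; in the example this gives two bundles of $v_A$-value $6$. A leximin re-split does not decrease the minimum bundle value. So each new $A$-bundle is worth, under $v_A$, at least the smallest chosen bundle, and hence at least every $B$-bundle. Envy-freeness toward the $B$-agents plus PMMS-feasibility of $v_A$ (Proposition~\ref{prop:ef_implies_pmms}) then gives PMMS toward them. Proposition~\ref{prop:leximin_is_gmms} gives PMMS among the $A$-agents. The $B$-agents' bundles are untouched, so each of them can use the original partition $P$ as her certificate, in which she is PMMS-satisfied by Proposition~\ref{prop:leximin_is_gmms}. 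This is exactly the paper's proof. The epistemic relaxation is spent only on the cutting type, the feasible type ends up exactly PMMS-satisfied, and no within-type epistemic argument is needed.
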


For both of the last two settings, MMS allocations need not exist~\citep{FeigeST21}. Consequently, since EGMMS implies MMS, neither theorem can be strengthened to EGMMS in general.

\subsection{Related Works}
\label{sec:related}

In this subsection, we summarize the literature most closely related to our work. For a broader overview of the field of fair division, see the survey of \citet*{AmanatidisABFLMVW23}.

\paragraph{Epistemic-EFX.}
The epistemic perspective on fairness was initiated by \citet*{AzizBCGL18}, who introduced \emph{epistemic envy-freeness}. Building on this idea, \citet*{CaragiannisGRSV25} introduced the analogous relaxation of EFX, namely EEFX, and established its existence for additive valuations. \citet*{AkramiN24} subsequently extended the existence result to arbitrary monotone valuations. More recently, \citet*{AkramiMMR26} strengthened these guarantees by showing that EEFX can be achieved simultaneously with envy-freeness up to some item (EF1). 

\paragraph{PMMS.} \citet*{CaragiannisKMPSW2019} showed that Nash-welfare-maximizing allocations are $0.618$-PMMS. The best known approximation of PMMS for additive valuations is $0.781$, due to \citet*{Kurokawa17}. Exact PMMS allocations exist in several restricted settings: \citet*{ByrkaMP25} established existence for binary-valued and pair-demand valuations, and \citet*{ChristodoulouM26} showed existence under additive valuations whenever each item has non-negative value for at most two agents. PMMS has also been studied alongside other fairness notions: \citet*{AmanatidisBM2018} showed that EFX implies $2/3$-PMMS and EF1 implies $1/2$-PMMS; \citet*{AmanatidisNM20} obtained allocations that are simultaneously $0.618$-EFX and $0.717$-PMMS; and \citet*{FeldmanMP24} achieved $0.618$-EFX, $0.618$-PMMS, and a $0.618$-fraction of the maximum Nash welfare simultaneously.

\paragraph{GMMS.} The GMMS notion was introduced by \citet*{BarmanBKN2017}. For additive valuations, the best known multiplicative approximation is $4/7$, established independently by \citet*{AmanatidisNM20} and \citet*{chaudhuryKMS20}. Several works have further studied GMMS in conjunction with other fairness notions: \citet*{AmanatidisNM20} established the existence of $0.55$-GMMS and $0.618$-EFX allocations, while \citet*{FeldmanMP24} obtained allocations that are simultaneously $0.44$-GMMS, $0.618$-EFX, and achieve a $0.618$-fraction of the maximum Nash welfare. Beyond additive valuations, \citet*{BarmanV21} established a $1/6$-GMMS approximation for XOS valuations.

\paragraph{Bivalued Valuations.}
The class of bivalued valuations, where $v_i(\{g\}) \in \{a,b\}$ for every agent $i$ and item $g$, has received significant attention as a natural restriction under which several fairness notions become tractable. \citet*{Feige2022MMSab} established the existence of MMS allocations in this setting, in contrast to the additive case \cite{FeigeST21,KurokawaPW16}. \citet*{AmanatidisBFHV20} showed that EFX allocations always exist under bivalued valuations, and this was extended to personalized bivalued valuations, where $v_i(\{g\}) \in \{a_i,b_i\}$, by \citet*{ByrkaMP25} and \citet*{JinT25}. For the important special case of binary additive valuations ($v_i(\{g\}) \in \{0,1\}$), \citet*{ByrkaMP25} further showed that PMMS allocations exist. 

\paragraph{Few Agents.} A natural line of research approaches the EFX existence problem by restricting the number of agents. The case of two agents was settled by \citet*{PlautR17}, who showed that EFX allocations always exist; for three agents, existence was established by \citet*{ChaudhuryGM24} for additive valuations and subsequently extended to PMMS-feasible valuations by \citet*{AkramiACGMM2022}. For four agents, \citet*{BergerCFF2021} showed that EFX allocations exist provided that one item can be left unallocated, and \citet*{AshuriGS25} established the existence of EF2X allocations, a relaxation of EFX in which envy must vanish after removing any two items from the envied bundle. 
 
\paragraph{Few Types of Agents.} A complementary line of research restricts the number of distinct valuation functions rather than the number of agents. \citet*{Mahara2025} established the existence of EFX allocations when the agents partition into at most two types, and \citet*{PrakashGNN2025} extended this to three types. For an arbitrary number of types $k$, \citet*{GhosalHN025} showed that EFX allocations exist provided that $k-2$ items can be left unallocated.

\section{Model and Preliminaries}
\label{sec:model}

\begin{figure}[t]
\centering

\begin{subfigure}[b]{\textwidth}
    \centering
    \begin{tikzpicture}[
        scale=1,
        transform shape,
        box/.style={draw,rectangle,minimum height=0.8cm,minimum width=1.8cm},
        arrshort/.style={-{Latex[length=1.5mm,width=1mm]},shorten >=1pt},
        arrlong/.style={-{Latex[length=1.5mm,width=1mm]}}
    ]
    \node[box] (gmms)   at (0,0)            {GMMS};
    \node[box] (pmms)   at (2.8,1)        {PMMS};
    \node[box] (efx)    at (5.6,2)        {EFX};
    \node[box] (egmms)  at (2.8,-1)       {EGMMS};
    \node[box] (epmms)  at (5.6,0)          {EPMMS};
    \node[box] (eefx)   at (8.4,1)        {EEFX};
    \node[box] (mms)    at (8.4,-1)      {MMS};
    \node[box] (approx) at (11.2,0)      {$4/7$-MMS};
    
    \draw[arrshort] (gmms) -- (pmms);
    \draw[arrshort] (gmms) -- (egmms);
    \draw[arrlong] (egmms) -- (mms);
    \draw[arrshort] (egmms) -- (epmms);
    \draw[arrshort] (pmms) -- (epmms);
    \draw[arrshort] (pmms) -- (efx);
    \draw[arrshort] (epmms) -- (eefx);
    \draw[arrshort] (efx) -- (eefx);
    \draw[arrlong] (mms) -- (eefx);
    \draw[arrshort] (eefx) -- (approx);
    \draw[arrshort] (mms) -- (approx);
    
    \draw[{Latex[length=1.5mm,width=1mm]}-{Latex[length=1.5mm,width=1mm]}, gray, shorten >=1pt, shorten <=1pt] (epmms) -- (mms) 
        node[midway, sloped, font=\Large\bfseries] {$\times$};
    \draw[{Latex[length=1.5mm,width=1mm]}-{Latex[length=1.5mm,width=1mm]}, gray, shorten >=1pt, shorten <=1pt] (epmms) -- (efx) 
        node[midway, sloped, font=\Large\bfseries] {$\times$};
    \end{tikzpicture}
    \caption{Implications involving the epistemic fairness notions.}
\end{subfigure}

\vspace{1.5em} %

\begin{subfigure}[b]{\textwidth}
    \centering
    \begin{tikzpicture}[
        scale=1,
        transform shape,
        box/.style={draw,rectangle,minimum height=0.8cm,minimum width=1.8cm},
        arrshort/.style={-{Latex[length=1.5mm,width=1mm]},shorten >=1pt},
        arrlong/.style={-{Latex[length=1.5mm,width=1mm]}}
    ]
    \node[box] (pmms)   at (-1.5, 1.5)   {PMMS};
    \node[box] (efx)    at (1.5, 1.5)    {EFX};
    
    \node[box] (mms)    at (-4.5, 0)   {MMS};
    \node[box] (immp)   at (-1.5, 0)   {IMMP};
    \node[box] (immx)   at (1.5, 0)    {IMMX};
    \node[box] (approx) at (4.5, 0)    {$4/7$-MMS};
    
    \draw[arrshort] (pmms) -- (efx);
    
    \draw[arrshort] (mms) -- (immp);
    \draw[arrshort] (immp) -- (immx);
    \draw[arrshort] (immx) -- (approx);
    
    \draw[arrshort] (pmms) -- (immp);
    \draw[arrshort] (efx) -- (immx);
    \end{tikzpicture}
    \caption{Implications involving IMMP and IMMX.}
\end{subfigure}

\caption{Implications between fairness notions under non-degenerate additive valuations. Most implications follow directly from the definitions; the following require proof: (1)~PMMS $\Rightarrow$ EFX and EPMMS $\Rightarrow$ EEFX on non-degenerate instances~\cite{CaragiannisKMPSW2019}; (2)~MMS $\Rightarrow$ EEFX~\cite{CaragiannisGRSV25}; and (3)~EEFX and IMMX $\Rightarrow$ $4/7$-MMS~\cite{AmanatidisBM2018}. In Appendix~\ref{sec:mms_and_pmms}, we show that EPMMS is incomparable to both MMS and EFX.}
\label{fig:implications}
\end{figure}

We consider a set of $n$ agents, $N = \{1,2,\ldots,n\}$, and a set $M$ of $m$ indivisible goods. Each agent $i \in N$ is associated with a monotone valuation function 
{$v_i : 2^M \to \mathbb{R}_{\geq 0}$}.
We will specify our assumptions on the valuation classes later. For notational convenience, we write $v_i(g)$ instead of $v_i(\{g\})$ for each $g \in M$.

Our goal is to study allocations that assign all goods in $M$ to the agents. Formally, such an allocation is a partition $X = (X_1, X_2, \ldots, X_n)$ of $M$, where $X_i \cap X_j = \emptyset$ for all $i \ne j$, and $\bigcup_{i \in N} X_i = M$, and $X_i$
is the set of goods allocated to agent $i$. We denote by $\Pi_n(S)$ the set of all partitions of a set $S$ into $n$ bundles.
We now introduce the fairness notions, beginning with the maximin share.

\begin{definition}[Maximin Share, \cite{Budish10}]
    Given n agents and a subset $S\subseteq M$ of goods, the n-maximin share of agent $i$ with respect to $S$ is defined as:
    $
    \mu_i(n, S) = \max_{X \in \Pi_n(S)} \min_{X_j \in X} v_i (X_j)
    $
\end{definition}

\subsection{Fairness Notions}
We now define the following fairness notions.
We refer to Figure~\ref{fig:implications} for an illustration of the implications between those fairness notions.

\begin{definition}[$\alpha$-EFX, \cite{CaragiannisKMPSW2019}] An allocation $X = (X_1, ..., X_n)$ is 
$\alpha$-envy-free up to any good ($\alpha$-EFX)
if, for any pair of agents $i,j \in N$, and any item $g \in X_j$, it holds that
   $v_i(X_i) \ge \alpha v_i(X_j \setminus \{g\})$.

\end{definition}

\begin{definition}[$\alpha$-MMS, \cite{Budish10}]
\label{def:MMS}
    Let $\alpha \in [0,1]$. An allocation $X = (X_1,..., X_n)$ is 
    $\alpha$-maximin-share-fair ($\alpha$-MMS)
    if, for every $i \in N$, it holds that $v_i(X_i) \ge \alpha \mu_i(n,M)$.
\end{definition}

\begin{definition}[$\alpha$-PMMS, \cite{CaragiannisKMPSW2019}]
\label{def:PMMS}
 Let $\alpha \in [0,1]$. An allocation $X$ is 
 $\alpha$-pairwise-maximin-share-fair 
($\alpha$-PMMS)
if, for any pair of agents $i,j \in N$, it holds that
$v_i(X_i) \ge \alpha \mu_i(2, X_i \cup X_j).$
\end{definition}

{Whenever this inequality is violated for agents $i$ and $j$, we say that agent $i$ $\alpha$-PMMS-envies agent $j$.}

\begin{definition}[$\alpha$-GMMS, \cite{BarmanBKN2017}]
An allocation $X$ is a $\alpha$-groupwise-maximin-share-fair ($\alpha$-GMMS)  if, for all $J \subseteq [n]$ with $i\in J$, it holds that
$v_i(X_i) \ge \alpha \mu_i(|J|, \cup_{j\in J} X_j)$.

\end{definition}
Note that GMMS strictly generalizes both MMS and PMMS \cite{BarmanBKN2017}. {Moreover, for non-degenerate valuations (i.e., valuations under which no two bundles share the same value), PMMS implies EFX \cite{CaragiannisKMPSW2019,ByrkaMP25,GargS2025}.} Next, we define the epistemic variants of the fairness notions above. The notion of EEFX has been defined by \cite{CaragiannisGRSV25}, and the notions of EPMMS and EGMMS are introduced herein.

\begin{definition}[EEFX, \cite{CaragiannisGRSV25}] 
An allocation $X = (X_1, X_2, \ldots, X_n)$ is {\em Epistemic $\alpha$-\efx\ ($\alpha$-EEFX)}
if for every $i \in [n]$, there exists an allocation $Y^i = (Y^i_1, Y^i_2, \ldots, Y^i_n)$ such that
$Y^i_i = X_i$ and agent $i$ is $\alpha$-EFX-satisfied with allocation $Y^i$, i.e., $v_i(Y_i^i) \geq \alpha v_i(Y^i_j \setminus \{g\})$ for all $j$ and $g \in Y^i_j$.
\end{definition}

\begin{definition}[EPMMS] 
An allocation $X = (X_1, X_2, \ldots, X_n)$ is {\em Epistemic $\alpha$-PMMS ($\alpha$-EPMMS)}
if for every  $i \in [n]$, there exists an allocation $Y^i = (Y^i_1, Y^i_2, \ldots, Y^i_n)$ such that
$Y^i_i = X_i$ and agent $i$ is $\alpha$-PMMS-satisfied with allocation $Y^i$, i.e., $v_i(Y_i^i) \geq \alpha  \mu_i(2, Y^i_i \cup Y^i_j)$ for all $j$.
\end{definition}

\begin{definition}[EGMMS]\label{def:egmms}
An allocation $X = (X_1, X_2, \ldots, X_n)$ is {\em Epistemic $\alpha$-GMMS ($\alpha$-EGMMS)}
if for every $i \in [n]$, there exists an allocation $Y^i = (Y^i_1, Y^i_2, \ldots, Y^i_n)$ such that
$Y^i_i = X_i$ and agent $i$ is $\alpha$-GMMS-satisfied with $Y^i$, i.e., $v_i(X_i) \ge \alpha \mu_i(|J|, \cup_{j\in J} Y^i_j)$ for all $i \in J \subseteq [n]$.
\end{definition}

For each notion above, the allocation $Y^i$ is called a \emph{certificate} of agent $i$ for bundle $X_i$.
When $\alpha = 1$, we omit it from the notation, writing, e.g., EPMMS rather than $1$-EPMMS.

We also consider and extend fairness notions that are of the form ``for each agent individualy, either {\sl this} or {\sl that} holds''. 
The first such notion, of either MMS or EFX, was introduced by
\citet{babaioff2026nearoptimalbestofbothworldsfairnessagents}.
We introduce the notion of either MMS or PMMS.

\begin{definition}[IMMX, \cite{babaioff2026nearoptimalbestofbothworldsfairnessagents}]\label{def:immx}
    An allocation $X = (X_1, \ldots, X_n)$ is \emph{Individually MMS-Satisfying or EFX-Satisfying} (IMMX) if, for each agent $i$, at least one of the following holds: (1) agent $i$ receives at least her MMS share, i.e.,
    $v_i(X_i) \geq \mu_i(n, M)$; or (2) agent $i$ does not EFX-envy any other agent, i.e., $v_i(X_i) \geq v_i(X_j \setminus \{g\})$ for all $j \neq i$ and all $g \in X_j$.
\end{definition}

\begin{definition}[IMMP]\label{def:immp}
    An allocation $X = (X_1, \ldots, X_n)$ is \emph{Individually MMS-Satisfying or PMMS-Satisfying} (IMMP) if, for each agent $i$, at least one of the following holds: (1) agent $i$ receives at least her MMS share, i.e., $v_i(X_i) \geq \mu_i(n, M)$; or (2) agent $i$ does not PMMS-envy any other agent, i.e., $v_i(X_i) \geq \mu_i(2, X_i \cup X_j)$ for all $j \neq i$.
\end{definition}

\subsection{Valuation Classes} 
We next define the valuation classes considered in this work, beginning with the standard class of additive valuations.

\begin{definition}[Additive Valuations]
    A valuation function $v: 2^M \rightarrow \mathbb{R}_{\geq 0}$ is additive if for every subset of goods $S \subseteq M$, we have $v_i(S) = \sum_{g \in S} v_i(g)$. 
\end{definition}

Beyond additive valuations, we consider two natural generalizations: the class of PMMS-feasible valuations, introduced by \cite{AkramiACGMM2022}, and the class of GMMS-feasible valuations, which we introduce here.

\begin{definition}[PMMS-feasible Valuations, \cite{AkramiACGMM2022}]\label{def:mmsfeasible}
    A valuation function $v: 2^M \rightarrow \mathbb{R}_{\geq 0}$ is \mmsf if for every subset of goods $S \subseteq M$ and partitions $A = (A_1, A_2)$ and $B = (B_1, B_2)$ of $S$, we have
$\max\{ v(B_1), v(B_2) \} \geq \min \{ v(A_1), v(A_2) \}$. 
\end{definition}

\begin{definition}[GMMS-Feasible Valuations]\label{def:gmmsfeasible}
    {A valuation function $v: 2^M \rightarrow \mathbb{R}_{\geq 0}$ is GMMS-feasible if for any subset of goods $S \subseteq M$, every $k \le m$, and any pair of partitions $A = (A_1, \ldots, A_k)$ and $B = (B_1, \ldots, B_k)$ of $S$, we have
        $\max \{ v(B_1), \ldots, v(B_k) \}  \geq \min \{ v(A_1), \ldots, v(A_k) \}$.}
\end{definition}
\section{Structural Insights}

\subsection{Relations Between Valuation Classes}

\begin{figure}[t]
\centering
\begin{tikzpicture}[scale=1]
\fill[gray!10] (2,0) ellipse (3.9 and 2.3);
\draw[thick] (2.75,0) ellipse (4.75 and 2.55);
\node[scale=0.9] at (6.6, 0.2)  {PMMS-};
\node[scale=0.9] at (6.6,-0.2)  {Feasible};
\draw[thick] (-2.5,0) ellipse (3.5 and 2.5);
\node[scale=0.9] at (-5, 0.2)  {Gross};
\node[scale=0.9] at (-5,-0.2)  {Substitutes};
\draw[thick] (-1.5,0) ellipse (2.4 and 2);
\node[scale=0.9] at (-3, 0) {OXS};
\draw[thick] (2,0) ellipse (3.9 and 2.3);
\node[scale=0.9] at (4.5, 0.2)  {GMMS-};
\node[scale=0.9] at (4.5,-0.2)  {Feasible};
\draw[thick] (0.8,0) ellipse (2.6 and 1.8);
\node[scale=0.9] at (1.8,-0.6) {Nice};
\node[scale=0.9] at (1.8,-1) {Cancelable};
\draw[thick] (0.6,0.55) ellipse (2.0 and 0.9);
\node[scale=0.9] at (1.7,0.75) {Budget};
\node[scale=0.9] at (1.7,0.35) {Additive};
\draw[thick] (-0.4,0.55) ellipse (0.85 and 0.45);
\node[scale=0.9] at (-0.4,0.55) {Additive};
\draw[thick] (-0.4,-0.85) ellipse (0.8 and 0.5);
\node[scale=0.8] at (-0.4,-0.73) {Unit-};
\node[scale=0.8] at (-0.4,-0.97) {Demand};
\end{tikzpicture}
\caption{An illustration of the relationships between the valuation classes derived in Appendix~\ref{sec:valuation_classes}.}
\label{fig:gsandmmsfeasible}
\end{figure}

In Appendix~\ref{sec:valuation_classes}, we establish several structural properties of GMMS-feasible valuations. 
\begin{enumerate}
    \item GMMS-feasible valuations form a \emph{strict} subclass of PMMS-feasible valuations.
    \item  GMMS-feasible valuations form a \emph{strict} superclass of nice cancelable valuations \cite{BergerCFF2021}, which contain all additive, budget-additive, unit-demand, and multiplicative valuations.
    \item PMMS-feasibility (resp.\ GMMS-feasibility) 
precisely characterizes valuations where envy-freeness implies PMMS (resp.\ GMMS).
\item  The class of gross substitutes (GS) valuations is incomparable with both MMS- and GMMS-feasible valuations: there exist GS valuations that are not PMMS-feasible, and GMMS-feasible valuations that are not GS.
\end{enumerate}
See Figure~\ref{fig:gsandmmsfeasible} for an illustration of these relations.

\subsection{The Leximin Solution}
The leximin (lexicographic maximin) solution is a foundational concept in fair division \cite{rawls1971theory,PlautR17} that selects the allocation which maximizes the minimum utility;
then, if there are multiple allocations which achieve that minimum utility, it chooses among those
the allocation which maximizes the second minimum utility, and so on. We adopt the following formal definition.

\begin{definition}[Leximin Solution, \cite{PlautR17}]
Fix a valuation function $v$.
Let $\Pi_n(M)$ be the set of all possible allocations.
For any $X \in \Pi_n(M)$, let $v(X) = (v(X_1), \dots, v(X_n))$ be the utility profile, and let $v^{\uparrow}(X)$ be this profile sorted in non-decreasing order. An allocation $X$ is a leximin solution under $v$ if it lexicographically maximizes $v^{\uparrow}(X)$ over all $X \in \Pi_n(M)$. That is, $X$ is strictly preferred to $Y$ if at the first index $k$ where $v^{\uparrow}_k(X) \neq v^{\uparrow}_k(Y)$, we have $v^{\uparrow}_k(X) > v^{\uparrow}_k(Y)$.
\end{definition}
\citet*{BarmanBKN2017} showed that any leximin allocation is GMMS when all agents have identical valuations. Since every GMMS allocation is, by definition, both a PMMS and an MMS allocation, it naturally follows that the leximin solution guarantees both properties for identical-valuation instances.
\begin{proposition}[\cite{BarmanBKN2017}] \label{prop:leximin_is_gmms}
    For instances where all agents have identical monotone valuations (i.e., $v_i = v$ for all $i \in N$ and some $v$), any leximin allocation under $v$ is a GMMS allocation.
\end{proposition}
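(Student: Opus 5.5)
The plan is a direct exchange argument. Let $X$ be a leximin allocation under the common valuation $v$. Suppose, for contradiction, that some group $J \subseteq N$ with $i \in J$ satisfies $v(X_i) < \mu(|J|, S)$, where $S = \bigcup_{j\in J} X_j$ and $\mu$ is the maximin share under $v$. Let $k = |J|$ and let $Z = (Z_1,\dots,Z_k)$ be a partition of $S$ attaining $\mu(k,S)$. Then every $Z_\ell$ has $v(Z_\ell) \ge \mu(k,S) > v(X_i)$. I would build a new allocation $X'$: keep $X_j$ for $j \notin J$, and hand the bundles $Z_1,\dots,Z_k$ to the agents of $J$ in any bijection. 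This is a valid allocation of $M$, since $Z$ partitions exactly the goods held by $J$. Because valuations are identical, it does not matter which agent of $J$ receives which $Z_\ell$.

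Next I would show that $X'$ is lexicographically strictly better than $X$, contradicting leximin optimality. Let $t = \min_{j\in J} v(X_j) \le v(X_i) < \mu(k,S) \le \min_\ell v(Z_\ell)$.

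\begin{itemize}
\item Values strictly below $t$ come only from agents outside $J$. Their number is identical in the two profiles.
\item In $v^{\uparrow}(X)$, the number of entries equal to $t$ is the count from outside $J$ plus at least one from $J$.
\item In $v^{\uparrow}(X')$, the number of entries equal to $t$ is just the count from outside $J$.
\end{itemize}

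So let $p$ be the number of entries below $t$ and $q$ the number of entries from outside $J$ equal to $t$. The sorted profiles agree on the first $p+q$ positions. At position $p+q+1$, $X$ has value $t$. For $X'$, every remaining value is $>t$: the $J$-agents have value $>t$, and the outside agents with values $\ge t$ not yet counted have value $>t$. Hence $v^{\uparrow}(X') >_{\text{lex}} v^{\uparrow}(X)$.

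The main thing to handle carefully is this counting step, namely making the lexicographic comparison rigorous when there are ties among agents outside $J$. The threshold argument above handles it. Monotonicity is not even needed beyond the definitions, though the proposition assumes it. The PMMS and MMS consequences follow by specializing to $|J|=2$ and to $J=N$.
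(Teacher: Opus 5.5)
Your proof is correct and complete. Replacing the group's bundles by a maximin partition of their union leaves every agent outside $J$ unchanged, and your threshold count at $t=\min_{j\in J} v(X_j)$ shows that the sorted profiles agree on the first $p+q$ positions and that $X'$ is strictly larger at position $p+q+1$, which contradicts leximin optimality. The paper gives no proof of its own and only cites \citet{BarmanBKN2017}; your exchange argument is the standard one for this result, and you are right that monotonicity is never used.
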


\subsection{Identical-Ranking Instances}
We now show how to reduce instances with arbitrary rankings over items to instances in which all agents share an identical ranking.

\begin{definition}[Identical-Ranking Instances] An instance has identical rankings (or is an ordered instance) if there exists a common ordering of the goods $g_1, g_2, \ldots, g_m$ such that for every agent $i \in N$, it holds that 
    $v_i(g_1) \ge v_i(g_2) \ge \ldots \ge v_i(g_m)$.
\end{definition}

{The reduction to identical rankings is a powerful tool in fair division, as it enables one to establish existence results on a significantly simpler instance. This perspective has been particularly useful in the study of Maximin Share (MMS) allocations, where it has been leveraged to compute $2/3$-MMS allocations \cite{AmanatidisMNS17}, as well as several other MMS approximation results \cite{BouveretL16,KurokawaPW18}. More recently, \cite{CaragiannisGRSV25} employed this framework to prove the existence of Epistemic EFX (EEFX) allocations for additive valuations. In this paper, we build on this approach to obtain guarantees for EPMMS approximation under additive valuations and EGMMS allocations under bivalued valuations.}

\begin{proposition}[Reduction to Identical Rankings]
\label{prop:EPMMS_EGMMS_identical_rankings_WLOG}
Let $0 < \alpha \le 1$.
\begin{itemize}
    \item If an $\alpha$-EPMMS allocation exists for all additive instances with identical rankings, then an $\alpha$-EPMMS  allocation exists for all additive instances with arbitrary rankings.
    \item If an $\alpha$-EGMMS allocation exists for all bivalued instances with identical rankings, then an $\alpha$-EGMMS  allocation exists for all bivalued instances with arbitrary rankings.
\end{itemize}
{Moreover, in both cases, if the former allocations can be computed in polynomial time, then so can the latter}.
\end{proposition}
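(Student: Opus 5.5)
We follow the standard ordering reduction of \citet*{BouveretL16} and \citet*{CaragiannisGRSV25}. Given an arbitrary additive (resp.\ bivalued) instance $I$ with valuations $v_i$, we build the ordered instance $I'$ on the same set of $m$ goods $g_1,\dots,g_m$. In $I'$, agent $i$ assigns to $g_k$ the $k$-th largest value among $\{v_i(g) : g\in M\}$; call this valuation $v_i'$. The multiset of item values of each agent is unchanged. Hence $v_i'$ is additive, and it is bivalued with the same $\{a,b\}$ whenever $v_i$ is. So the hypothesis yields an $\alpha$-EPMMS (resp.\ $\alpha$-EGMMS) allocation $X'$ of $I'$. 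We then convert $X'$ into an allocation $X$ of $I$ by a picking sequence. We process $k=1,\dots,m$. At step $k$, the agent $i$ with $g_k\in X'_i$ picks her favourite remaining good under $v_i$. Both steps are clearly polynomial time, which gives the algorithmic claim.

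The key property of the picking sequence is as follows. When agent $i$ makes the pick corresponding to $g_k$, only $k-1$ goods have been removed. So her pick has $v_i$-value at least the $k$-th largest value, i.e.\ at least $v_i'(g_k)$. Summing over her picks gives an injection $\phi_i: X'_i\to X_i$ with $v_i(\phi_i(g))\ge v'_i(g)$, and in particular $v_i(X_i)\ge v_i'(X'_i)$.

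It remains to build certificates in $I$. Fix agent $i$ and let $Y'$ be her certificate in $I'$, with $Y'_i=X'_i$. Let $\sigma_i$ be a bijection from $M$ (with the order of $I'$) to $M$ that sorts goods by $v_i$ in non-increasing order, so that $v_i(\sigma_i(g_k))=v'_i(g_k)$. The natural certificate $\sigma_i(Y')$ does not put $X_i$ into coordinate $i$, so we modify it. We set $Y_i=X_i$, and we must distribute $M\setminus X_i$ among the other agents so that agent $i$'s view only gets worse. Our plan is to take $Y_j$ to be the image of $Y'_j$ under a bijection $\psi:M\setminus X'_i\to M\setminus X_i$ with $v_i(\psi(g))\le v'_i(g)$ for all $g$. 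Such a $\psi$ exists by a counting (Hall-type) argument. The injection $\phi_i$ shows that, in the sorted $v_i$ order, for every threshold $t$ the set $X_i$ contains at least as many goods of value $\ge t$ as $X'_i$ does under $v_i'$. Therefore the complement $M\setminus X_i$ has, for every $t$, at most as many goods of value $\ge t$ as $M\setminus X'_i$. Matching the two complements in sorted order then yields $\psi$ by pointwise domination. This is the step we expect to need the most care, in particular to make sure the dominance counts are stated correctly with ties.

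Finally, we check fairness of the certificate. For any group $J\ni i$, consider the bundle $Y_i\cup\bigcup_{j\in J\setminus\{i\}}Y_j$. Under $v_i$, this bundle is obtained from $Y'_i\cup\bigcup_{j}Y'_j$ (under $v'_i$) by replacing $X'_i$ with $X_i$ and applying $\psi$ elsewhere. The maximin share $\mu_i(|J|,\cdot)$ can rise only through the gain $v_i(X_i)-v'_i(X'_i)$. A clean way to make this precise is to compare term by term. Take any $|J|$-partition of the union in $I$, map it back via $\phi_i^{-1}$ and $\psi^{-1}$, and let surplus goods of $X_i$ not in the image of $\phi_i$ count as zero. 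Each bundle's value then drops by at most the gains that $X_i$ got. Bundles not containing goods of $X_i$ do not increase. Since the minimum bundle under any partition is at most the bundle avoiding the largest gain, this yields $\mu_i(|J|,\cdot)\le \mu'_i(|J|,\cdot)+\big(v_i(X_i)-v_i'(X_i')\big)$. Combining this with $v'_i(X'_i)\ge\alpha\mu'_i$ and $\alpha\le1$ gives $v_i(X_i)\ge\alpha\mu_i$. If this bound proves delicate, the fallback is the simpler bound $\mu_i\le\mu'_i$ obtained by ignoring $X_i$'s extra goods, which uses monotonicity only on the $\psi$-part. Applying the argument to $|J|=2$ gives EPMMS, and applying it to all $J$ gives EGMMS; bivaluedness is preserved throughout.
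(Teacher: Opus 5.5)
Your proposal is correct and follows essentially the same route as the paper. It uses the same ordered instance and picking sequence, and a bijection with the properties of Lemma~\ref{lem:properties-of-bijection}, which you rederive by combining $\phi_i$ (in fact a bijection $X_i' \to X_i$, so there are no surplus goods) with the sorted-order matching $\psi$. It then pulls back an arbitrary $|J|$-partition in the same way, with your total-gain bound $\mu_i \le \mu_i' + \bigl(v_i(X_i) - v_i'(X_i')\bigr)$ together with $\alpha \le 1$ playing the role of the paper's per-piece estimate on $R = X_i \cap P_1$.

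Discard your stated fallback $\mu_i \le \mu_i'$: it is false in general, because upgrading $X_i$ can raise the maximin share (e.g.\ a pair union worth $\{2,2,1\}$ in the ordered instance can become $\{3,2,1\}$). Your main argument does not need it.
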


{It is worth noting that the reduction for $\alpha$-EGMMS extends to additive valuations as well. We state it here for bivalued instances, as this is the setting for which we establish EGMMS guarantees.}
{While the reduction algorithm itself is essentially identical to that of \cite{CaragiannisGRSV25} for EEFX (and similarly to \cite{AmanatidisMNS17} for $2/3$-MMS), the analysis is more involved. Unlike these settings, where one reasons about the removal of a single item from another agent's bundle, here we must handle substantially richer deviations: arbitrary repartitions of pairs of bundles for EPMMS, and combinations of multiple bundles for EGMMS. These features introduce new technical challenges that require a more delicate argument. The proof of Proposition~\ref{prop:EPMMS_EGMMS_identical_rankings_WLOG} and  the presentation of the ordering framework appear in Section~\ref{sec:EPMMS_EGMMS_Reductions_framework}.}

\section{Approximating EPMMS for Additive Valuations}\label{sec:additivew}
In this section, we show that 4/5-EPMMS allocation exists for additive valuations. 

\thmadditive*

In fact, our proof yields a stronger guarantee: an allocation that is simultaneously $4/5$-EPMMS and EEFX. 
By the reduction in Proposition~\ref{prop:EPMMS_EGMMS_identical_rankings_WLOG}, it suffices to prove the following proposition in order to establish Theorem~\ref{thm:4_5_EPMMS}.

\begin{restatable}[Additive Valuations with Identical Rankings]{proposition}{propositionidenticalrankings}
\label{prop:4_5_PMMS_identical_rankings_1}
Every instance with additive valuations and identical rankings admits a $4/5$-PMMS and EFX allocation. Moreover, this allocation  can be computed in polynomial time.
\end{restatable}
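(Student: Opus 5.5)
We will run the Envy Cycle Elimination algorithm (Algorithm~\ref{alg:envy_cycles}) on the ordered instance, allocating goods in the common order $g_1, g_2, \ldots, g_m$. At each step the next good goes to an agent whom no one envies, and envy cycles are rotated away (Line~\ref{line:envy_cycle}) whenever no such agent exists. As the footnote warns, the argument will need non-degenerate valuations. So we first perturb the instance: add tiny distinct amounts $\varepsilon_{i,g}$ that keep identical rankings and make all bundle values distinct for every agent. We prove the guarantee for the perturbed instance and then pass to the limit. Since only finitely many allocations exist, some allocation works for a sequence $\varepsilon\to 0$, and both EFX (non-strict inequalities) and $4/5$-PMMS are closed conditions, so they transfer to the original instance. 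The algorithm runs in polynomial time in the usual way: each item addition is followed by at most polynomially many cycle rotations, since the envy graph loses edges.

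EFX follows from the standard argument for identical rankings. Whenever an unenvied agent $j$ receives a good $g$, that good is the least valuable item of $X_j\cup\{g\}$ for everyone. Any agent $i$ satisfied $v_i(X_i)\ge v_i(X_j)=v_i(X_j\cup\{g\}\setminus\{g\})$, so EFX holds towards $j$. Cycle rotations only weakly increase every agent's value and only permute bundles, so existing EFX relations are preserved.

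For $4/5$-PMMS, we will maintain the invariant that the current partial allocation is $4/5$-PMMS, by induction over the steps. For an item addition to unenvied $j$, take an agent $i$ and write $S=X_i\cup X_j\cup\{g\}$ with $v_i(X_i)\ge v_i(X_j)$. There are two cases. If $|X_i|$ is small (at most one or two items), we bound $\mu_i(2,S)$ directly: the minimum bundle of any 2-partition cannot exceed $v_i(X_i)$, because every item of $X_j\cup\{g\}$ is worth at most the items of $X_i$ (they come later in the order). Otherwise, $g$ is worth at most $v_i(S)/(|S|)$-ish. We then combine $\mu_i(2,S)\le v_i(S)/2$ with the EFX bound $v_i(X_i)\ge v_i(X_j)$ and the fact that $g$ is the smallest item. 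The aim is an inequality of the form $v_i(X_i)\ge \tfrac45\mu_i(2,S)$, derived by splitting on whether $X_i$ has one, two, or at least three items; the constant $4/5$ should arise as the extreme case of two items against three. For $i=j$ there is nothing to check, since $j$ only gains.

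The main obstacle is the cycle-elimination step, where bundles move and PMMS pairs change. After rotation, agent $i$ holds a bundle $X_k$ that she strictly prefers to her old one, and she is compared to bundles $X_l$ whose previous holders were other agents. The PMMS guarantee with respect to $X_l$ was therefore never established from $i$'s perspective, so induction on $i$'s own past is insufficient. Our plan is to use the stronger structural invariant that the allocation is EFX under identical rankings: every bundle $X_l$ satisfies $v_i(X_l\setminus\{\text{last item}\})\le v_i(X_i)$. We then prove directly, as a lemma, that under identical rankings, non-degeneracy, and the ``bundles built in order'' structure, EFX from $i$'s perspective implies $v_i(X_i)\ge \tfrac45\mu_i(2,X_i\cup X_l)$. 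This is essentially the case analysis from the previous paragraph applied to the final allocation rather than incrementally. It uses that the last item of $X_l$ is the smallest in $X_i\cup X_l$ whenever $X_l$ received an item after $X_i$'s smallest item, and it handles the other ordering symmetrically. The case where $X_i$ is a singleton or pair with large items is the delicate part; non-degeneracy is used there to rule out tie configurations that give only $2/3$.
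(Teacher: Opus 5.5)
Your setup is right and matches the paper. You run Algorithm~\ref{alg:envy_cycles} on a non-degenerate perturbation, get EFX by the standard argument, and correctly abandon an incremental $4/5$-PMMS invariant (which Line~\ref{line:envy_cycle} does not preserve) in favor of analyzing the final allocation. One minor point: the $\varepsilon\to 0$ limit gives existence only, so for the polynomial-time claim you need an explicit $\varepsilon$ or symbolic tie-breaking, as in Proposition~\ref{prop:nd_wlog}.

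The genuine gap is the $4/5$ bound itself, which you never derive, and the ingredients you name are not enough. Your planned lemma says that EFX from $i$'s perspective, identical rankings, non-degeneracy, and bundles filled in descending order together give $v_i(X_i)\ge\tfrac45\mu_i(2,X_i\cup X_l)$. This is false. Take a common ranking $g'\succ h\succ g\succ c$ with $v_i(g'),v_i(h),v_i(g),v_i(c)\approx 1,\tfrac23,\tfrac23,\tfrac13$, generically perturbed so that $v_i(\{h,c\})>v_i(g')$, and let $X_i=\{h,c\}$, $X_j=\{g',g\}$. Both bundles are filled in descending order and $i$ is EFX towards $j$. Yet $\mu_i(2,X_i\cup X_j)\ge\min\{v_i(\{g',c\}),v_i(\{h,g\})\}\approx\tfrac43$, so the ratio is only about $\tfrac34$, and non-degeneracy does not help.

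Your sub-claims fail as well:
\begin{itemize}
\item With $|X_i|=2$, items of $X_j$ need not be ranked below those of $X_i$; here $g'\succ h$.
\item The bound $v_i(g)\le v_i(S)/|S|$ fails whenever $X_i$ has items allocated after $g$.
\item A split on $|X_i|$ cannot control $v_i(g)$ relative to $v_i(X_i)$; consider $X_i$ consisting of $h$ plus two tiny items, with $h$ just above $g$.
\item The ``other ordering'' is not symmetric, since the PMMS condition is one-sided.
\end{itemize}

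What excludes the configuration above is a fact your plan never uses. When the last item $g$ was added to $X_j\setminus\{g\}$, its holder was unenvied. In particular, the agent $i'$ then holding the prefix $\tilde X_{i'}$ of the bundle that ends up as $X_i$ did not envy it; in the example, everyone prefers $\{g'\}$ to $\{h\}$, so this cannot happen.

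The paper splits on the value of $g$, not on $|X_i|$:
\begin{itemize}
\item If $|X_j|=1$, Lemma~\ref{lemmamms} gives exact PMMS.
\item Otherwise $v_i(X_i)\ge v_i(X_j\setminus\{g\})$. If $v_i(g)\le\tfrac12 v_i(X_i)$, then $\mu_i(2,X_i\cup X_j)\le\tfrac12\bigl(v_i(X_i)+v_i(X_j)\bigr)\le\tfrac54 v_i(X_i)$.
\item If $v_i(g)>\tfrac12 v_i(X_i)$, every item allocated before $g$ is worth more than $\tfrac12 v_i(X_i)$ to $i$. So $\tilde X_{i'}$ has at most one item, and at least one by $i'$'s non-envy plus non-degeneracy. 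Hence $\tilde X_{i'}=\{h\}$ and $X_j=\{g',g\}$. The preference of $i'$ for $h$ over $g'$ transfers to $i$ via the common ranking, giving $v_i(h)>v_i(g')>v_i(g)$. Lemma~\ref{lemmamms} with $S=X_i\setminus\{h\}$ then yields $\mu_i(2,X_i\cup X_j)\le\mu_i(2,\{h,g',g\})+v_i(X_i\setminus\{h\})\le v_i(X_i)$.
\end{itemize}
It is this use of a different agent's non-envy at the moment $X_j$ is completed, transferred to $i$ through the common ranking, that delivers $4/5$; EFX alone does not.
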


We refer the reader to Footnote~\ref{footnote_4_5_pmms} for previous work related to Proposition~\ref{prop:4_5_PMMS_identical_rankings_1}.

Before proving Proposition~\ref{prop:4_5_PMMS_identical_rankings_1}, we state a useful lemma.

\begin{restatable}{lemma}{lemmamms}
\label{lemmamms}
Fix agent $i$, and let $P, Q$ be two disjoint sets of items. Then, for any $S \subseteq P \cup Q$, 
\[
\mu_i(2, P \cup Q) \le \mu_i(2, (P \cup Q) \setminus S) + v_i(S).
\]
\end{restatable}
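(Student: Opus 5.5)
The plan is to take an optimal $2$-partition of $P \cup Q$ for agent $i$ and restrict it to the smaller set $(P\cup Q)\setminus S$. Concretely, let $(A,B)$ be a partition of $P \cup Q$ achieving $\mu_i(2, P\cup Q) = \min\{v_i(A), v_i(B)\}$. Set
\[
A' = A \setminus S, \qquad B' = B \setminus S .
\]
Then $(A', B')$ is a partition of $(P\cup Q)\setminus S$, so by the definition of the maximin share,
\[
\mu_i(2,(P\cup Q)\setminus S) \ge \min\{v_i(A'), v_i(B')\}.
\]

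Next I bound each part using additivity, which is the standing assumption in this section. We have $v_i(A') = v_i(A) - v_i(A\cap S) \ge v_i(A) - v_i(S)$, since $v_i(A\cap S) \le v_i(S)$ by nonnegativity of item values. The same argument gives $v_i(B') \ge v_i(B) - v_i(S)$. Taking minima,
\[
\min\{v_i(A'),v_i(B')\} \ge \min\{v_i(A),v_i(B)\} - v_i(S) = \mu_i(2,P\cup Q) - v_i(S).
\]
Rearranging yields the claimed inequality.

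No step is a real obstacle. The only point to check is that the argument needs only subadditivity-type control: $v_i(A) \le v_i(A\setminus S) + v_i(S)$ holds for additive, and more generally subadditive, valuations. The disjointness of $P$ and $Q$ plays no role beyond making $P\cup Q$ the relevant ground set.
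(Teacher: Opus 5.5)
Your proof is correct and is essentially the paper's argument: take an optimal $2$-partition of $P \cup Q$, restrict it to $(P\cup Q)\setminus S$, and use $v_i(A) \le v_i(A\setminus S) + v_i(S)$. The only cosmetic difference is that the paper applies this inequality only to the piece whose restriction is smaller (via a WLOG), whereas you bound both pieces and take the minimum.
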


\begin{proof}
Assume that $(A_1, A_2)$ is a partition of $P \cup Q$ such that $\min \{ v(A_1), v(A_2) \} = \mu_i(2, P \cup Q)$.  
Without loss of generality, assume that $v_i(A_1 \setminus S) \le v_i(A_2 \setminus S)$.  
Since $(A_1 \setminus S, A_2 \setminus S)$ is a partition of $(P \cup Q) \setminus S$, it follows that 
$v_i(A_1 \setminus S) \le \mu_i(2, (P \cup Q) \setminus S)$. Thus,
\[
\mu_i(2, P \cup Q) \le v_i(A_1) \le v_i(A_1 \setminus S) + v_i(S) \le \mu_i(2, (P \cup Q) \setminus S) + v_i(S). \qedhere
\]
\end{proof}

We are now ready to prove Proposition~\ref{prop:4_5_PMMS_identical_rankings_1}.

\begin{algorithm}[t]
\caption{Envy Cycle Elimination Algorithm \cite{LiptonMMS04}}
\label{alg:envy_cycles}
\KwIn{An instance with non-degenerate additive valuations and identical rankings.}
\KwOut{A $4/5$-PMMS and EFX allocation.}
Initialize $X \gets (\emptyset, \ldots, \emptyset)$ and $U \gets M$\;
\While{$U \neq \emptyset$}{
    \eIf{$\exists\, i \in [n]$ such that $v_j(X_j) \geq v_j(X_i)$ for all $j \in [n]$}{
        Let $g \in \arg\max_{g' \in U} v_i(g')$ be the most valuable item in $U$\;
        $X_i \gets X_i \cup \{g\}$ and $U \gets U \setminus \{g\}$\tcp*{Assign $g$ to an unenvied agent.}\label{line:add_item}
    }{
        Let $(i_1, \ldots, i_\ell)$ satisfy \ $v_{i_r}(X_{i_{r+1}}) > v_{i_r}(X_{i_r})$ $\,\forall r \in [\ell]$,
        with indices taken mod $\ell$\;
        $X_{i_r} \gets X_{i_{r+1}}$ for all $r \in [\ell]$\tcp*{Eliminate an envy cycle.}\label{line:envy_cycle}
    }
}
\Return $X$\;
\end{algorithm}

\begin{proof}[Proof of Proposition~\ref{prop:4_5_PMMS_identical_rankings_1} ]
    We assume that the valuations are non-degenerate, meaning that $v_i(S) \neq v_i(T)$ for every agent $i$ and every pair of distinct bundles $S,T$. This assumption is without loss of generality; see Proposition~\ref{prop:nd_wlog} for a formal argument.

        We claim that the Envy Cycle Elimination algorithm (Algorithm~\ref{alg:envy_cycles}, \cite{LiptonMMS04}) produces an $4/5$-PMMS and EFX allocation in polynomial time. Polynomial-time termination was established by \citet*{LiptonMMS04}, and \citet*{PlautR17} showed that the resulting allocation satisfies EFX; it therefore remains to prove that it also satisfies  $4/5$-PMMS.

    {Let $X$ be an allocation produced by the above procedure, and let}
    $i,j \in N$. We show that 
\begin{equation}
\label{eq:4-5-bound}
    v_i(X_i) \ge \tfrac{4}{5}\mu_i(2, X_i \cup X_j).
\end{equation}

\textbf{Case 1:} $|X_j| = 1$. 
    Denote $X_j = \{g\}$. Using Lemma~\ref{lemmamms}
    we obtain
    \[
     \mu_i(2, X_i \cup X_j) \le \mu_i(2, \{g\}) + v_i(X_i) = v_i(X_i),
     \]
    {implying Equation~\eqref{eq:4-5-bound}, as desired}.

\textbf{Case 2:} $|X_j| \geq 2$. 
Consider the point during the execution of Algorithm~\ref{alg:envy_cycles} at which bundle $X_j$ is formed: some agent $j'$ holds a bundle $\tilde{X}_{j'} = X_j \setminus \{g\}$, and item $g$ is added to it in Line~\ref{line:add_item}, producing $X_j$. Note that $j'$ need not equal $j$, since the bundle may be reassigned to agent $j$ at a later stage via the envy-cycle operation in Line~\ref{line:envy_cycle}.

Let $\tilde{X}_i$ denote the bundle held by agent $i$ at that point. Since agent $i$ does not envy agent $j'$ at the moment $g$ is added to $\tilde{X}_{j'}$, we have $v_i(\tilde{X}_i) \geq v_i(\tilde{X}_{j'})$. Moreover, by the design of the algorithm, the value each agent has for her own bundle can only increase over time. In particular, $v_i(X_i) \geq v_i(\tilde{X}_i) \geq v_i(\tilde{X}_{j'}) = v_i(X_j \setminus \{g\})$.

{We distinguish between two further cases.}
{If $v_i(g) \le \tfrac{1}{2}v_i(X_i)$, then}
\begin{align*}
\mu_i(2, X_i \cup X_j) &\le \tfrac{1}{2} (v_i(X_i) + v_i(X_j)) && (\text{since $\mu_i(2,S) \leq \tfrac{1}{2} v_i(S)$ for all $S$}) \\
&=  \tfrac{1}{2} (v_i(X_i) + v_i(X_j \setminus \{g\}) + v_i(g)) && (\text{by additivity of $v_i$}) \\  
&\le \tfrac{1}{2}(v_i(X_i) + v_i(X_i) + \tfrac{1}{2}v_i(X_i)) && (\text{by the  assumption and observation above}) \\
&= \tfrac{5}{4} v_i(X_i),    
\end{align*}
{implying Equation~\eqref{eq:4-5-bound}, as desired}.

Consider next the case where $v_i(g) > \tfrac{1}{2}v_i(X_i)$. Note that at the moment $g$ is added to $X_j$, the bundle $X_i$ may not yet have been fully formed; however, there must be some agent $i'$ holding a bundle $\tilde{X}_{i'} \subseteq X_i$ at that point, with all items in $X_i \setminus \tilde{X}_{i'}$ still unallocated. Since items are allocated in descending order of value, for every $g' \in \tilde{X}_{i'} \cup (X_j \setminus \{g\})$ we have
\begin{equation}\label{largeitems}
    v_i(g') \ge v_i(g) > \tfrac{1}{2} v_i(X_i).
\end{equation}
        
We now show that $|\tilde{X}_{i'}| = 1$. On one hand, since $v_i(\tilde{X}_{i'}) \leq v_i(X_i)$ and each item in $\tilde{X}_{i'}$ is worth at least $\tfrac{1}{2}v_i(X_i)$ to agent $i$ by Equation~\eqref{largeitems}, we have $|\tilde{X}_{i'}| \le 1$. On the other hand, since item $g$ is assigned to agent $j'$, agent $i'$ does not envy the bundle $\tilde{X}_{j'} = X_j \setminus \{g\}$; hence, by the non-degeneracy of the valuations, agent $i'$ must hold at least one item.

We next show that $|X_j| = 2$. By assumption, we already have $|X_j| \geq 2$, and strict inequality is impossible since $v_i(X_i) \geq v_i(X_j \setminus \{g\})$ and, by Equation~\eqref{largeitems}, each item in $X_j \setminus \{g\}$ is worth at least $\tfrac{1}{2}v_i(X_i)$ to agent $i$.

Denote $X_j = \{g', g\}$ and $\tilde{X}_{i'} = \{h\}$.
We next show that 
\begin{equation}
    \label{eq:ranking}
    v_i(h) > v_i(g') > v_i(g).
\end{equation}
Note that at the moment item $g$ is first allocated, we have $\tilde{X}_{j'} = \{g'\}$ and $\tilde{X}_{i'} = \{h\}$. Since items are allocated in descending order and, by assumption, no two items share the same value, it follows that $v_i(g') > v_i(g)$ and $v_i(h) > v_i(g')$. Moreover, since item $g$ is assigned to agent $j'$, agent $i'$ does not envy agent $j'$ at that point, so $v_{i'}(h) > v_{i'}(g')$; by the identical ranking assumption, this implies $v_i(h) > v_i(g')$. We get
\begin{align*}
            \mu_i(2, X_i \cup X_j) 
            &\le \mu_i(2, (X_i \cup X_j) \setminus (X_i \setminus \{h\})) + v_i(X_i \setminus \{h\}) && (\text{by Lemma~\ref{lemmamms}})\\ 
            &= \mu_i(2, \{g, g', h\}) + v_i(X_i \setminus \{h\}) && (\text{by the above}) \\
            &\leq v_i(h) + v_i(X_i \setminus \{h\}) && (\text{by Equation~\eqref{eq:ranking}})\\
            &= v_i(X_i) && (\text{by additivity})
        \end{align*}
{implying Equation~\eqref{eq:4-5-bound}, as desired}.
\end{proof}

\begin{remark}
In fact, our proof of the existence of a $4/5$-PMMS allocation holds
under the slightly weaker assumption that the agents agree on the
identity and ranking of the top $n$ items, i.e., there exist distinct items
$g_1, \ldots, g_n$ such that $v_i(g_1) \geq v_i(g_2) \geq \cdots \geq v_i(g_n)
\geq v_i(g)$ for every agent $i$ and every $g \notin \{g_1, \ldots,
g_n\}$. We note, however, that EFX requires the
identical-rankings assumption.
\end{remark}
\begin{remark}
In Section~\ref{sec:tightness_envy_cycles}, we show that this analysis is tight by exhibiting an instance on which Algorithm~\ref{alg:envy_cycles} achieves no better than a $4/5$ approximation of PMMS.
\end{remark}

\section{EGMMS for Bivalued Valuations}

In this section we establish the existence of EGMMS for instances with bivalued valuations.

\begin{definition}[Bivalued Valuations]
    An instance  has {bivalued valuations} if the valuations are additive and there exist common values $0 \leq a <  b$ 
   with $v_i(\{g\}) \in \{a, b\}$ for all  $i \in N$ and $g \in M$.
\end{definition}

\thmbivalued*

By~Proposition~\ref{prop:EPMMS_EGMMS_identical_rankings_WLOG}, to establish the existence of EGMMS for general additive valuations, it suffices to prove it for instances with identical rankings.
In this setting, given the common item ordering $M = (g_1, \dots, g_m)$, each agent $i$ has an index $k_i \in \{0, 1, \dots, m\}$ such that $v_i(g_j) = b$ for all $j \le k_i$, and $v_i(g_j) = a$ for all $j > k_i$. We define agent $i$'s \emph{$b$-valued prefix}, denoted by $p_b(i)$, as this set of high-valued items: $p_b(i) := \{ g_j \in M \mid 1 \le j \le k_i \}.$

Algorithm~\ref{alg:leximin-feige} builds upon the algorithmic framework of \citet*{Feige2022MMSab}, which computes MMS allocations for bivalued instances. The key difference lies in Line~\ref{line:leximin}: \citet*{Feige2022MMSab} select an arbitrary MMS partition according to agent~$i$, which suffices to establish the existence of MMS allocations for bivalued valuations. To obtain the stronger guarantee of EGMMS, however, an arbitrary MMS partition does not suffice, as we show in Observation~\ref{MMS_and_EPMMS_incomparable}. For this reason, we instead select the leximin partition.

\begin{algorithm}[t]
\caption{\citet*{Feige2022MMSab}'s algorithm with leximin partitions}
\label{alg:leximin-feige}
\KwIn{An instance with bivalued valuations and identical rankings.}
\KwOut{An EGMMS allocation.}
$M^{(1)} \gets M$;  Set indices s.t. $v_1(M) \ge \cdots \ge v_n(M)$ and 
$v_i(g_1) \ge \cdots \ge v_i(g_m)$ $\forall i$\;
\For{$i = 1, \dots, n$}{
    Let $(B_i^i, \dots, B_n^i)$ be the leximin partition of $M^{(i)}$ under $v_i$ into $(n{-}i{+}1)$ bundles\label{line:leximin}\;
    Rearrange bundles so that $B_i^i$ has minimum cardinality\label{line:min_cardinality}\;
    Rearrange items so that $B_i^i$ contains the $|B_i^i \cap p_b(i)|$ highest-indexed items of $p_b(i) \cap M^{(i)}$\;
    $X_i \leftarrow B_i^i$ and $M^{(i+1)} \gets M^{(i)} \setminus X_i$\;
}
\Return $X = (X_1, \dots, X_n)$\;
\end{algorithm}

Before presenting the full proof, we outline the high-level structure of the proof. 
The proof constructs, for an arbitrary agent $i$, a specific valid certificate $Y^i$ that combines the actual allocation for agents $1, 2, \ldots, i-1, i$ with agent $i$'s optimal leximin partition for the remaining items, thereby showing that the algorithm yields an EGMMS allocation.

To verify that this certificate satisfies the GMMS condition for every subset of agents $J$ containing $i$, we proceed by backward induction on the smallest index $j_1$ in $J$. The base case ($j_1 = i$) follows from the fact that leximin partitions satisfy GMMS.

For the inductive step ($j_1 < i$), the argument splits according to whether agent $j_1$ receives any of agent $i$'s high-value items. If so, the nested prefix structure forces agents $j_1$ through $i$ to agree on the values of all remaining items, and the leximin substitution lemma (Lemma~\ref{lem:leximin_substitution}) allows us to show the certificate suffix forms a valid leximin partition; GMMS then follows from valuation dominance.

Otherwise, agent $j_1$'s bundle contains only low-value items, and we split further by the size of that bundle: if it is of below-average size, the MMS monotonicity lemma (Lemma~\ref{lem:mms_monotonicity}) allows us to drop it and apply the inductive hypothesis to $J \setminus {j_1}$; if it is of above-average size, an averaging argument identifies an agent $j' \notin J$ with a strictly larger bundle, and swapping $j_1$ for $j'$ 
allows us to apply the inductive hypothesis since the minimum index in the subset increases,
while replacing 
agent $j_1$'s bundle
with 
agent $j'$'s 
larger bundle 
in any MMS partition can only increase the MMS value.

We next present the full proof of Theorem~\ref{thm:GMMS_bivalued}.
We first establish
two structural lemmas regarding optimal leximin and MMS partitions.
The first establishes a
substitution property for the leximin partition.

\begin{restatable}[Leximin Substitution]{lemma}{lemmaleximinsubstitution}
\label{lem:leximin_substitution}
Let $S$ be a set of items, $k \ge 2$ be an integer, and $v$ be an additive valuation function. Let $P = \{A_1, A_2, \dots, A_k\}$ be an optimal leximin partition of $S$ into $k$ bundles. If $P^\star$ is any optimal leximin partition of 
$S \setminus A_1$ into $k-1$ bundles under $v$, then $\{A_1\} \cup P^\star$ is also an optimal leximin partition of $S$ into $k$ bundles under $v$.
\end{restatable}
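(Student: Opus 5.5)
The plan is to compare sorted utility profiles directly. Write $v^{\uparrow}(\cdot)$ for the non-decreasingly sorted profile of a partition, as in the definition of the leximin solution. Let $P' = \{A_2, \dots, A_k\}$. Then $P'$ is a partition of $S \setminus A_1$ into $k-1$ bundles. Since $P^\star$ is leximin-optimal for $S \setminus A_1$, we have $v^{\uparrow}(P^\star) \succeq_{\mathrm{lex}} v^{\uparrow}(P')$.

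The goal is to lift this comparison to the $k$-bundle partitions $\{A_1\} \cup P^\star$ and $\{A_1\} \cup P' = P$. Both are obtained by inserting the same number $c = v(A_1)$ into the respective sorted profiles. So the heart of the proof is the following monotonicity fact. If $x, y$ are non-decreasing vectors of equal length with $x \succeq_{\mathrm{lex}} y$, then inserting a common value $c$ into both and re-sorting yields $x' \succeq_{\mathrm{lex}} y'$.

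To prove the insertion fact, first dispose of the trivial case $x = y$. Otherwise let $t$ be the first index with $x_t \neq y_t$, so $x_s = y_s$ for $s < t$ and $x_t > y_t$. I will split on the position of $c$.

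\emph{Case $c \le y_t$.} The value $c$ can be placed among the first $t$ positions, and since $x$ and $y$ agree before index $t$, it lands at the same index in both vectors. Because re-sorting depends only on the multiset of values, ties with $c$ do not matter. Hence $x'$ and $y'$ agree up to index $t$, and at index $t+1$ we get $x'_{t+1} = x_t > y_t = y'_{t+1}$.

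\emph{Case $c > y_t$.} Now $y'$ keeps $y_1, \dots, y_t$ in positions $1, \dots, t$. On the other side, $x'$ agrees with $y'$ on positions $1, \dots, t-1$ and has $x'_t = \min(x_t, c) > y_t$.

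In both cases $x' \succ_{\mathrm{lex}} y'$. This bookkeeping of positions after insertion is the only delicate step, and I expect it to be the main (though mild) obstacle.

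Applying the insertion fact with $x = v^{\uparrow}(P^\star)$, $y = v^{\uparrow}(P')$ and $c = v(A_1)$ gives
\[
v^{\uparrow}(\{A_1\} \cup P^\star) \succeq_{\mathrm{lex}} v^{\uparrow}(P).
\]
Since $P$ is leximin-optimal among all partitions of $S$ into $k$ bundles, the reverse inequality also holds. Hence the two sorted profiles coincide, and $\{A_1\} \cup P^\star$ is an optimal leximin partition of $S$ as well.

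Note that additivity of $v$ is not actually needed here. The argument uses only that bundle values are fixed numbers, so the lemma holds for any valuation.
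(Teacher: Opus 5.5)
Your proof is correct and follows essentially the same route as the paper: both rest on the fact that inserting the common value $v(A_1)$ into two sorted profiles preserves (strict) lexicographic order. The paper uses this fact without proof to show that $P\setminus\{A_1\}$ is itself leximin-optimal and then compares sorted profiles, whereas you prove the fact by your two-case analysis and conclude by antisymmetry with the optimality of $P$; your remark that additivity is not needed also applies to the paper's argument.
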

\begin{proof}
First, observe that the sub-partition $P \setminus \{A_1\} = \{A_2, \dots, A_k\}$ must itself be an optimal leximin partition of $S \setminus A_1$ into $k-1$ bundles. Otherwise, there would exist a partition of $S \setminus A_1$ with a strictly larger lexicographically sorted value vector, which, when combined with $A_1$, would yield a partition of $S$ that strictly dominates $P$, contradicting the optimality of $P$.

Since both $P^\star$ and $P \setminus \{A_1\}$ are optimal leximin partitions of the same instance, their sorted valuation vectors must be identical.

Now consider the partition $\{A_1\} \cup P^\star$. Its multiset of bundle values consists of $v(A_1)$ together with the values induced by $P^\star$. Because the sorted valuation vector of $P^\star$ matches that of $P \setminus \{A_1\}$, it follows that the sorted valuation vector of $\{A_1\} \cup P^\star$ is identical to that of $P$.

Since $P$ is an optimal leximin partition of $S$, any partition with the same lexicographically sorted valuation vector is also optimal. Therefore, $\{A_1\} \cup P^\star$ is a  leximin partition of $S$.
\end{proof}

We next establish the following monotonicity property of the maximin partition.

\begin{restatable}[MMS Monotonicity]{lemma}{lemmammsmonotonicity}
\label{lem:mms_monotonicity}
Let $M$ be a set of $m$ items, $n \ge 2$ be an integer, and $v$ be an additive valuation function. Let $X \subseteq M$ be a subset with $|X| \le {m}/{n}$, consisting of the least-valued items in $M$. 
Then, $\mu(n-1, M \setminus X) \ge \mu(n, M)$.
\end{restatable}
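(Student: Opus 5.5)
The plan is a direct exchange argument starting from an optimal $n$-partition of $M$. Fix a partition $(A_1,\dots,A_n)\in\Pi_n(M)$ attaining $\mu(n,M)$, so that $v(A_\ell)\ge \mu(n,M)$ for every $\ell$. We will construct a partition of $M\setminus X$ into $n-1$ bundles, each worth at least $\mu(n,M)$. This immediately gives $\mu(n-1,M\setminus X)\ge \mu(n,M)$.

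\textbf{Step 1 (choose a large bundle to dissolve).} The bundles have total cardinality $m$, so some bundle $A_k$ satisfies $|A_k|\ge m/n\ge |X|$. We will dissolve $A_k$ and use its items to repair the other $n-1$ bundles.

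\textbf{Step 2 (counting).} Let $X' = X\setminus A_k$ be the items of $X$ lying in the surviving bundles $A_\ell$, $\ell\neq k$. Then
\[
|A_k\setminus X| = |A_k| - |A_k\cap X| \ge |X| - |A_k\cap X| = |X'|.
\]
So there is an injection $\phi: X'\to A_k\setminus X$.

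\textbf{Step 3 (exchange).} For each $\ell\ne k$, define $A'_\ell = (A_\ell\setminus X')\cup \phi(A_\ell\cap X')$. Put all leftover items of $A_k\setminus X$, those outside the image of $\phi$, into an arbitrary one of these bundles. We check two things.

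First, the bundles $A'_\ell$ partition $M\setminus X$. Every item of $X$ has been removed: items in $X'$ were swapped out, and items in $A_k\cap X$ disappear together with $A_k$. Every other item is placed exactly once, since $\phi$ is injective into $A_k\setminus X$.

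Second, the values do not drop. Because $X$ consists of the least-valued items, every $y\notin X$ satisfies $v(y)\ge v(x)$ for all $x\in X$; in particular $v(\phi(x))\ge v(x)$. Additivity then gives $v(A'_\ell)\ge v(A_\ell)\ge \mu(n,M)$, and adding the leftover items can only increase values further.

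The only delicate point is Step 2, namely handling the possibility that $X$ intersects $A_k$ itself. The counting above shows that this overlap only helps, because items of $X$ inside $A_k$ are discarded together with $A_k$ and need no replacement. We should also read "least-valued" as "every item outside $X$ is weakly more valuable than every item of $X$," which is exactly what the exchange inequality $v(\phi(x))\ge v(x)$ requires when there are ties.
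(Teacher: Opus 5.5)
Your proof is correct and takes the same route as the paper: pigeonhole gives a bundle $A_k$ of an MMS partition with $|A_k|\ge m/n\ge |X|$, you discard it, and you argue that removing the least-valued set $X$ instead cannot hurt. The only difference is that the paper goes through $\mu(n-1,M\setminus A_k)$ and simply asserts $\mu(n-1,M\setminus X)\ge \mu(n-1,M\setminus A_k)$, whereas your injection $\phi\colon X\setminus A_k\to A_k\setminus X$ proves that step explicitly, including the overlap $X\cap A_k$ and ties.
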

\begin{proof}
Let $P = \{A_1, \dots, A_n\}$ be an MMS partition of $M$ into $n$ bundles, i.e., $v(A_j) \ge \mu(n, M)$ for all $j \in [n]$.
By the pigeonhole principle, since $m$ items are distributed among $n$ bundles, there exists a bundle $B \in P$ such that $|B| \ge {m}/{n}$.

Since $P \setminus \{B\}$ is a valid partition of $M \setminus B$ into $n-1$ bundles, and for every $A \in P \setminus \{B\}$ we have $v(A) \ge \mu(n, M)$, it must hold that
\begin{equation}
\label{mms_removing_set_from_partition}
\mu(n-1, M \setminus B) \ge \mu(n, M).
\end{equation}

In addition, since $|X| \le {m}/{n} \le |B|$, it follows that $|X| \le |B|$. Moreover, because $X$ consists of the least-valued items in $M$, removing $X$ can only increase (or at least not decrease) the MMS compared to removing $B$. Therefore,
\begin{equation}
\label{mms_removing_smaller_bundle}
\mu(n-1, M \setminus X) \ge \mu(n-1, M \setminus B).
\end{equation}

Combining~\eqref{mms_removing_set_from_partition} and~\eqref{mms_removing_smaller_bundle} yields
\[
\mu(n-1, M \setminus X) \ge \mu(n, M). \qedhere
\]
\end{proof}

We are now ready to prove the main theorem of this section.

\begin{proof}[Proof of Theorem~\ref{thm:GMMS_bivalued}] 
 We claim that Algorithm~\ref{alg:leximin-feige} produces an EGMMS allocation. 
Since the algorithm orders the agents by their total valuation of the grand bundle ($v_1(M) \ge v_2(M) \ge \dots \ge v_n(M)$), and valuations are bivalued with identical rankings, this ordering coincides with the lengths of their $b$-valued prefixes. Thus, $|p_b(1)| \ge |p_b(2)| \ge \dots \ge |p_b(n)|$, and in particular, for any $j < i$ we have $p_b(i) \subseteq p_b(j)$.
We now show that the resulting allocation $X$ is EGMMS. 
Fix an agent $i$. We construct a valid EGMMS certificate $Y^i$ for agent $i$:
\[
Y^i = (X_1, X_2, \ldots, X_i, {B}^i_{i+1}, B^i_{i+2}, \ldots, {B}^i_n).
\]
Namely, $Y^i_j = X_j$ for $j \le i$, and $Y^i_j = {B}^i_j$ for $j > i$. 
We claim that this partition forms an EGMMS certificate for agent $i$.
Since it forms a partition of the entire set of items $M$, and $Y^i_i = X_i$, it remains to show that agent $i$ is GMMS-satisfied with $Y^i$. That is, we need to show that for every subset $J \subseteq [n]$ with $i \in J$, it holds that:

\begin{equation}
\label{eq:gmms-cert}
v_i(X_i) \ge \mu_i(|J|, \mathcal{M}_J), \quad \mbox{where} \quad \mathcal{M}_J = \bigcup_{j \in J} Y^i_j.
\end{equation}

Let $j_1 = \min(J)$. Because $i \in J$, it holds that $j_1 \le i$. 
We prove Equation~\eqref{eq:gmms-cert} by (backward) induction on the minimum index $j_1$, starting from $j_1 = i$ down to $j_1 = 1$.

\textbf{Base Case ($j_1 = i$):} 
If $\min(J) = i$, then $J \subseteq \{i, i+1, \dots, n\}$. The subset $\{Y^i_j \mid j\in J\} = \{B^i_j \mid j\in J\} $ consists of bundles corresponding to a subset of agent $i$'s leximin partition of $M^{(i)}$. By Proposition~\ref{prop:leximin_is_gmms}, this is a GMMS allocation for the instance in which all agents have identical valuations $v_i$. Thus $v_i(X_i) \ge \mu_i(|J|, \mathcal{M}_J)$, as desired.

\textbf{Inductive Step ($j_1 < i$):} 
Assume the GMMS condition holds for all valid subsets $J'$ where $i \in J'$ and $\min(J') > j_1$, i.e., $v_i(X_i) \ge \mu_i(|J'|, \bigcup_{j \in J'} Y^i_j)$.

\textbf{Case 1: $Y^i_{j_1} \cap p_b(i)  \ne \emptyset$} (some item in $p_b(i)$ is allocated to agent $j_1$).
Since $j_1 < i$, it holds that $p_b(i) \subseteq p_b(j_1)$. In the algorithm, agent $j_1$ takes the 
\emph{highest-indexed}
available items from $p_b(j_1)$. If agent $j_1$ is forced to take an item from $p_b(i)$ (which 
is a prefix of
$p_b(j_1)$), it implies that all available items in the disagreement zone, $p_b(j_1) \setminus p_b(i)$, 
have already been allocated either to agent $j_1$ or to an earlier agent.

Consequently, for every remaining item $t \in M^{(j_1 + 1)}$, the valuations of agents $j_1$ and $i$ are identical. Since agent $i$ has the shortest prefix among all agents between $j_1$ and $i$
{by the assumption on the ordering of agents}, 
it holds that for every agent $\ell \in \{j_1, \dots, i\}$, 
{we have $p_b(i) \subseteq p_b(\ell) \subseteq p_b(j_1)$, and thus,}
for every remaining item $t \in M^{(j_1 + 1)}$, the valuations are identical: $v_{\ell}(t) = v_{j_1}(t)$. 

We now prove that the certificate sequence $(Y^i_{j_1}, Y^i_{j_1+1}, \dots, Y^i_{n})$ forms a valid leximin partition of $M^{(j_1)}$ under valuation $v_{j_1}$. We show this by working backward from agent $i$ and applying Lemma~\ref{lem:leximin_substitution}.

First, consider the items $M^{(i)}$ remaining at agent $i$'s turn, which are partitioned into the suffix $(Y^i_i, \dots, Y^i_n)$. By definition of the certificate, this sequence corresponds to an optimal leximin partition under $v_i$. Since $v_i$ and $v_{j_1}$ are identical over these items, $(Y^i_i, \dots, Y^i_n)$ is also an optimal leximin partition of $M^{(i)}$ under $v_{j_1}$.

Now, assume inductively that for some index $k \in \{j_1 + 1, \dots, i\}$, the suffix $(Y^i_{k}, \dots, Y^i_n)$ is an optimal leximin partition of $M^{(k)}$ under $v_{j_1}$. At step $k-1$, agent $k-1$ computes an optimal leximin partition of $M^{(k-1)}$ under $v_{k-1}$ 
{and is assigned bundle $Y^i_{k-1} = X_{k-1}$.}
{Observe that $v_{k-1}$ and $v_{j_1}$ agree on $M^{(k-1)}$: this follows from the observation above when $k-1 > j_1$, and trivially when $k-1 = j_1$. Thus,}
 bundle $Y^i_{k-1}$ is also 
{part of a valid}
leximin partition of $M^{(k-1)}$ under $v_{j_1}$.

By Lemma~\ref{lem:leximin_substitution}, since $Y^i_{k-1}$ is a bundle from 
{a valid}
leximin partition under $v_{j_1}$, and the subsequent sequence $(Y^i_{k}, \dots, Y^i_n)$ forms 
{a valid}
leximin partition of the remaining items $M^{(k)} = M^{(k-1)} \setminus Y^i_{k-1}$ under $v_{j_1}$, joining them yields an optimal leximin partition. Thus, $(Y^i_{k-1}, Y^i_k, \dots, Y^i_n)$ is 
{a valid}
leximin partition of $M^{(k-1)}$ under $v_{j_1}$.

Applying this logic inductively backward from $i$ down to $j_1$, we conclude that the entire constructed sequence $(Y^i_{j_1}, \dots, Y^i_n)$ forms 
{a valid}
leximin partition of $M^{(j_1)}$ under valuation $v_{j_1}$. 

Because a leximin partition is a GMMS allocation for identical agents (Proposition~\ref{prop:leximin_is_gmms}),  it holds that:
\begin{equation}
\label{eq:case1_bundle_value}
    v_{j_1}(X_i) = v_{j_1}(Y^i_i) \ge \mu_{j_1}(|J|, \mathcal{M}_J).
\end{equation}

Moreover, since $X_i \subseteq M^{(j_1 + 1)}$, agent $i$ and agent $j_1$ value its items identically:
\begin{equation}
\label{eq:case1_identical_val}
    v_i(X_i) = v_{j_1}(X_i).
\end{equation}

Finally, because $j_1 < i$, agent $j_1$ has at least as many $b$-valued items in $\mathcal{M}_J$ as agent $i$ does. Consequently, agent $j_1$'s valuation weakly dominates agent $i$'s. Thus, the maximin share for agent $j_1$ is at least as large as that for agent $i$:
\begin{equation}
\label{eq:case1_mms_domination}
    \mu_{j_1}(|J|, \mathcal{M}_J) \ge \mu_i(|J|, \mathcal{M}_J).
\end{equation}

Chaining \eqref{eq:case1_identical_val}, \eqref{eq:case1_bundle_value}, and \eqref{eq:case1_mms_domination} together, we conclude that
\[
v_i(X_i) \ge \mu_i(|J|, \mathcal{M}_J).
\]

\textbf{Case 2: $ Y^i_{j_1}  \cap p_b(i) = \emptyset$} (No item of $p_b(i)$ is allocated to agent $j_1$). 
In this case, according to $v_i$, all items in $X_{j_1}$ have value $a$. 

We distinguish between two further cases. Consider first the case where $|X_{j_1}| \le {|\mathcal{M}_J|}/{|J|}$. \\
    Let $J' = J \setminus \{j_1\}$. Because $j_1$ 
    {is}
    the unique minimum index of $J$, we have $\min(J') > j_1$. By the inductive hypothesis, we have $v_i(X_i) \ge \mu_i(|J'|, \mathcal{M}_{J'})$.

    By Lemma~\ref{lem:mms_monotonicity}, since $|X_{j_1}| \le {|\mathcal{M}_J|}/{|J|}$ and $X_{j_1}$ consists only of items of minimum value $a$ (under $v_i$), it follows that
\[
\mu_i(|J'|, \mathcal{M}_{J'}) = \mu_i(|J|-1, \mathcal{M}_J \setminus X_{j_1}) \ge \mu_i(|J|, \mathcal{M}_J).
\]

    Chaining this with the inductive hypothesis yields $v_i(X_i) \ge \mu_i(|J|, \mathcal{M}_J)$, completing the proof for this subcase.
    
    Now, let us examine the complementary case where $|X_{j_1}| > {|\mathcal{M}_J|}/{|J|}$. 
    In Algorithm~\ref{alg:leximin-feige}, $X_{j_1}$ is chosen as the minimum cardinality bundle of a partition of $M^{(j_1)}$ into $n-j_1+1$ bundles. Thus, $|X_{j_1}| \le {|M^{(j_1)}|}/({n-j_1+1})$. 

    Combining this with our current assumption, we have:
    \begin{align}
        \frac{|M^{(j_1)}|}{n-j_1+1} > \frac{|\mathcal{M}_J|}{|J|}. \label{ineq:aver1}
    \end{align}
    Denote $N_{j_1} = \{j_1, j_1 + 1 , \dots, n \}$ as the suffix of agents from $j_1$ to $n$. Notice that the bundles $\{Y^i_k \mid k \in N_{j_1}\}$ form a partition of the items in $M^{(j_1)}$. The left-hand side of 
    Inequality~\ref{ineq:aver1}
    is the average bundle size across this entire partition, while the right-hand side
    of Inequality~\ref{ineq:aver1} is the average bundle size of the subset of bundles indexed by $J \subseteq N_{j_1}$. 

    Because the average size of the subset $J$ is strictly less than the overall average, the average size of the remaining bundles indexed by $N_{j_1} \setminus J$ must be strictly greater than the overall average. Therefore, there must exist at least one bundle in the complement set, indexed by some $j' \in N_{j_1} \setminus J$, whose size is strictly greater than the overall average:
    \begin{equation}
    \label{case2.2_eq1}
            |Y^i_{j'}| > \frac{|M^{(j_1)}|}{n-j_1+1} \ge |X_{j_1}|.
    \end{equation}

    \begin{figure}[t]
\centering
\begin{tikzpicture}[scale=1, every node/.style={font=\normalsize}]
    \def\w{0.9}
    \def\h{0.7}

    \foreach \i in {1,...,14} {
        \draw (\i*\w,0) rectangle ++(\w,\h);
    }

    \foreach \i in {6,7,9,10,13} {
        \fill[cyan!40, fill opacity=0.5] (\i*\w,0) rectangle ++(\w,\h);
    }

    \node at (1.5*\w,0.35) {$Y^i_1$};
    \node at (2.5*\w,0.35) {$Y^i_2$};
    \node at (3.5*\w,0.35) {$\dots$};
    \node at (6.5*\w,0.35) {$Y^i_{j_1}$};
    \node at (7.5*\w,0.35) {$\dots$};
     \node at (9.5*\w,0.35) {$Y^i_{i}$};
    \node at (11.5*\w,0.35) {$Y^i_{j'}$};
    \node at (12.5*\w,0.35) {$\dots$};
    \node at (14.5*\w,0.35) {$Y^i_n$};

    \draw[<->] (6.5*\w,0.8) .. controls +(0,0.6) and +(0,0.6) .. (11.5*\w,0.8);

    \node at (9*\w,1.6) { 
    $J' = J \setminus \{j_1\} \cup \{j'\}$
    };
\end{tikzpicture}
\caption{Illustration of the swapping argument used in the proof of Theorem~\ref{thm:GMMS_bivalued}. 
The  bundles highlighted in blue correspond to the set $J$. 
Note that it might be that $j' < i$.
} 
\label{fig:swap-argument}
\end{figure}

    Let $J' = (J \setminus \{j_1\}) \cup \{j'\}$; see Figure~\ref{fig:swap-argument} for an illustration.
    Because $j_1$ is the unique minimum element of $J$, and $j' \in N_{j_1} \setminus J$ implies $j' > j_1$, it holds that $\min(J') > j_1$. 
    By the inductive hypothesis, we apply the GMMS condition to $J'$:
  \begin{equation}
    \label{case2.2_eq2}
    v_i(X_i) \ge \mu_i(|J'|, \mathcal{M}_{J'}).
\end{equation}
    
    We now compare the maximin share for agent $i$ of $\mathcal{M}_{J'}$ to that of $\mathcal{M}_J$. Notice that $\mathcal{M}_{J'} = (\mathcal{M}_J \setminus X_{j_1}) \cup Y^i_{j'}$. 
    From the condition of Case 2, every item in $X_{j_1}$ has a value of exactly $a$ according to $v_i$. Meanwhile, every item in $Y^i_{j'}$ has a value of at least $a$. Also, $|Y^i_{j'}| > |X_{j_1}|$.
    
    Consider the MMS partition of $\mathcal{M}_J$ into $|J|$ bundles based on $v_i$. If we replace the items of $X_{j_1}$ in this partition with exactly $|X_{j_1}|$ items from $Y^i_{j'}$, and distribute the remaining items of $Y^i_{j'}$ arbitrarily, the value of every bundle in the partition weakly increases. This forms a valid partition of $\mathcal{M}_{J'}$ into $|J'|$ bundles where the minimum bundle is at least as valuable as the minimum bundle in the optimal partition of $\mathcal{M}_J$. Thus:
   \begin{equation}
    \label{case2.2_eq3}
    \mu_i(|J'|, \mathcal{M}_{J'}) \ge \mu_i(|J|, \mathcal{M}_J).
\end{equation}
    
    Chaining \eqref{case2.2_eq2} and \eqref{case2.2_eq3} yields:
    \[
    v_i(X_i)  \ge \mu_i(|J|, \mathcal{M}_J).
    \]
    This satisfies the GMMS condition for this subcase, completing the inductive step.

Therefore, agent $i$ is GMMS-satisfied with $Y^i$ for all valid subsets $J$. 
Since this holds for any arbitrary agent $i$, the allocation $X$ is EGMMS for the ordered instance. By Proposition~\ref{prop:EPMMS_EGMMS_identical_rankings_WLOG}, mapping this allocation back to the original items yields an EGMMS allocation for the general instance, completing the proof.
\end{proof}

Finally, even though computing leximin allocations is NP-hard for general additive valuations \cite{PlautR17}, Algorithm~\ref{alg:leximin-feige} can be implemented in polynomial time, as a consequence of the following observation.
    
\begin{restatable}{observation}{obsleximinbivaluedpoly}
\label{obs:leximin_bivalued_poly}
There exists a polynomial time algorithm that computes the leximin allocation for any instance with identical bivalued valuations.
\end{restatable}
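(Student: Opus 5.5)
The plan is to exploit the fact that, under an identical bivalued valuation $v$, a bundle's value depends only on how many items of each kind it contains. Let $K$ be the number of items with $v(g)=b$ and $L=m-K$ the number with $v(g)=a$. A bundle containing $h$ high items and $\ell$ low items has value $hb+\ell a$, and items of the same kind are interchangeable. So a partition into $k$ bundles is, up to relabeling items, just a multiset of $k$ pairs $(h_r,\ell_r)$ with $\sum_r h_r = K$ and $\sum_r \ell_r = L$. There are only polynomially many such pairs, namely $O(m^2)$. I would compute a leximin partition by dynamic programming over these counts, and then realize it by handing out concrete items of each kind in any order.

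Define, for $1\le k\le n$, $0\le H\le K$ and $0\le \Lambda\le L$, the quantity $T(k,H,\Lambda)$ as the sorted value vector of a leximin partition of $H$ high and $\Lambda$ low items into $k$ bundles. The base case is $T(1,H,\Lambda) = (Hb+\Lambda a)$. The recurrence is
\[
T(k,H,\Lambda) = \operatorname{lexmax}_{0\le h\le H,\ 0\le \ell\le \Lambda}\ \operatorname{sort}\bigl(\{hb+\ell a\}\cup T(k-1,H-h,\Lambda-\ell)\bigr),
\]
and we record an argmax at each state so that the partition can be reconstructed.

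Correctness rests on two facts.
\begin{enumerate}
\item Fix any leximin partition and any one of its bundles $A_1$, with $h$ high and $\ell$ low items. The remaining bundles form a leximin partition of the remaining items. This is exactly the first observation in the proof of Lemma~\ref{lem:leximin_substitution}. Hence the optimum is attained by some choice of $(h,\ell)$ combined with an optimal vector for the remainder.
\item Inserting a fixed value $x$ into sorted vectors is monotone in the lexicographic order. If $u\ge_{\mathrm{lex}} w$ are sorted vectors of equal length, then $\operatorname{sort}(\{x\}\cup u)\ge_{\mathrm{lex}}\operatorname{sort}(\{x\}\cup w)$.
\end{enumerate}
Together, these show that the recurrence's maximum equals the true leximin vector, and that the reconstructed partition attains it.

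The step needing the most care is fact 2. I would prove it by taking the first index $t$ where $u_t>w_t$ and comparing the two merged sequences coordinatewise. Before position $t$ plus the insertion offset they coincide. At the first differing coordinate, the merged $u$-sequence is at least the merged $w$-sequence, because $x$ is placed in the same relative position as long as $x\le w_t$. If instead $x>w_t$, then $w_t$ appears first in the $w$-merge, while the $u$-merge has $\min(x,u_t)>w_t$ at that slot. Everything else is routine.

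For the running time, there are $O(n m^2)$ states, each with $O(m^2)$ choices. Each comparison of sorted vectors takes $O(n)$ time, so the total is polynomial, roughly $O(n^2 m^4)$. Finally, when $a=0$ the low items have value $0$, but they are still tracked by count, so the argument is unchanged.
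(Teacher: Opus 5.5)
Your proposal is correct and follows the paper's proof essentially verbatim. It uses the same dynamic program over the number of bundles and the numbers of $a$- and $b$-valued items, inserts the value $hb+\ell a$ of one bundle into the optimal sorted vector of the subproblem, takes the lexicographic maximum, and gets the same polynomial running-time bound. Beyond the paper, your correctness argument (optimal substructure from the first step of the proof of Lemma~\ref{lem:leximin_substitution}, plus monotonicity of sorted insertion, which is what actually justifies that step) and your argmax-based reconstruction of an actual partition supply details the paper leaves implicit.
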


\begin{proof}
Let $n$ be the number of agents, $N_a$ the number of items of value $a$, and $N_b$ the number of items of value $b$. Consider the following dynamic programming solution. For all $1 \leq k \leq n$, $0 \leq n_a \leq N_a$, and $0 \leq n_b \leq N_b$, let
\[
\textsc{DP}[k][n_a][n_b]
\]
denote the sorted (non-decreasing) value vector of a leximin allocation of the sub-instance consisting of $k$ agents with identical bivalued valuations, $n_a$ items of value $a$, and $n_b$ items of value $b$.

For the base case $k = 1$, the unique allocation gives the single agent every item, so we set $\textsc{DP}[1][n_a][n_b] = (n_a \cdot a + n_b \cdot b)$.

For $k > 1$, assume inductively that $\textsc{DP}[k'][n_a'][n_b']$ has already been computed for every triple $(k', n_a', n_b')$ with $k' \le k$, $n_a' \le n_a$, $n_b' \le n_b$, and at least one of these inequalities strict. We iterate over all pairs $(i, j)$ with $0 \le i \le n_a$ and $0 \le j \le n_b$, where $(i, j)$ corresponds to assigning a bundle of $i$ items of value $a$ and $j$ items of value $b$ to one agent. For each such pair, we form a candidate vector $c[i][j]$ by inserting the value $i \cdot a + j \cdot b$ into the vector $\textsc{DP}[k-1][n_a - i][n_b - j]$ and re-sorting in non-decreasing order. We then set $\textsc{DP}[k][n_a][n_b]$ to be the lexicographically maximal vector among all such candidates $c[i][j]$.

Since the number of states is $n N_a N_b$ and the number of pairs iterated over in each state is at most $N_a N_b$, the algorithm runs in polynomial time.\end{proof}

\section{EPMMS for Three Agents and for Two Types of Agents}
In this section, we establish the existence of EPMMS allocations in two important restricted settings.
We begin with the three-agent case.

\thmthreeagents*

\begin{proof}
Consider the case of three agents, and suppose that $v_2$ and $v_3$ are PMMS-feasible.
Let $(X_1, X_2, X_3)$ be agent~1's leximin partition.
Assume without loss of generality that $X_1$ is the least preferred bundle according to agent~2, i.e., $v_2(X_2) \ge v_2(X_1)$ and $v_2(X_3) \ge v_2(X_1)$.

Let $(X_2', X_3')$ be agent $2$'s leximin partition of
 $X_2 \cup X_3$ into two bundles. 
We then allocate to agent $3$ her most preferred bundle among $X_1, X_2', X_3'$.

We next describe the allocation of agents $1$ and $2$. If $X_1$ is still available, then agent~1 takes $X_1$, and agent~2 receives the remaining bundle. Otherwise, if $X_1$ is taken by agent $3$, then agent~2 chooses her preferred bundle among $X_2, X_3$ (the bundles from agent~1's partition), and agent~1 receives the remaining bundle. See Figures~\ref{fig:three_agents_1} and~\ref{fig:three_agents_2} for an illustration of the allocation.

\begin{figure}[t]
    \centering
    \begin{subfigure}[t]{0.48\textwidth}
        \centering
        \begin{tikzpicture}[every node/.style={draw,circle,minimum size=0.65cm,inner sep=1pt},
                            box/.style={draw,red,thick,rectangle,rounded corners,inner sep=4pt}]
        \node (x1) at (0,1.1) {$1$};
        \node (x2) at (1.4,1.1) {};
        \node (x3) at (2.8,1.1) {};
        \node[draw=none] at (0,0.35) {$X_1$};
        \node[draw=none] at (1.4,0.35) {$X_2$};
        \node[draw=none] at (2.8,0.35) {$X_3$};
        \node (x2p) at (1.4,-0.55) {$2$};
        \node (x3p) at (2.8,-0.55) {$3$};
        \node[draw=none] at (1.4,-1.35) {$X_2'$};
        \node[draw=none] at (2.8,-1.35) {$X_3'$};
        \node[box,fit=(x1)] {};
        \node[box,fit=(x2p)(x3p)] {};
        \end{tikzpicture}
        \caption{Case 1: $X_1$ remains available. Agent 3 selects from $\{X_1,X_2',X_3'\}$; agent 1 receives $X_1$, and agent 2 receives the remaining bundle among $X_2'$, $X_3'$.}
        \label{fig:three_agents_1}
    \end{subfigure}
    \hfill
    \begin{subfigure}[t]{0.48\textwidth}
        \centering
        \begin{tikzpicture}[
        circ/.style={draw,circle,minimum size=0.65cm,inner sep=1pt},
        box/.style={draw,red,thick,rectangle,rounded corners,inner sep=5pt}
        ]
        \node[circ] (a3) at (0,1.1) {$3$};
        \node[circ] (a1) at (1.4,1.1) {$1$};
        \node[circ] (a2) at (2.8,1.1) {$2$};
        \node at (0,0.3) {$X_1$};
        \node at (1.4,0.3) {$X_2$};
        \node at (2.8,0.3) {$X_3$};
        \node[circ] (x2p) at (1.4,-0.55) {};
        \node[circ] (x3p) at (2.8,-0.55) {};
        \node at (1.4,-1.2) {$X_2'$};
        \node at (2.8,-1.2) {$X_3'$};
        \node[box,fit=(a3)(a1)(a2)] {};
        \end{tikzpicture}
        \caption{Case 2: Agent 3 takes $X_1$. Agent 2 then chooses her preferred bundle among $X_2$ and $X_3$, and agent 1 receives the remaining bundle.}
        \label{fig:three_agents_2}
    \end{subfigure}
    \caption{The two cases in the proof of Theorem~\ref{thm:threeagents}.}
    \label{fig:three_agents}
\end{figure}

We first analyze agent $3$'s allocation. Agent~3 receives the maximal bundle in the partition $(X_1, X_2', X_3')$. Thus, the partition $(X_1, X_2', X_3')$ serves as her EPMMS certificate: by Proposition~\ref{prop:ef_implies_pmms}, since she receives the maximal bundle in this partition, she does not PMMS-envy any other bundle in the partition since $v_3$ is PMMS-feasible. Moreover, if we assume that her valuation is also GMMS-feasible, then she receives at least her MMS value.

Next, we argue that agent~2 does not PMMS-envy any other agent.
    
    If she receives one of the bundles in $(X_2', X_3')$, then, by the choice of $(X_2', X_3')$,
    \[
    \min\{v_2(X_2'), v_2(X_3')\}
    \ge
    \min\{v_2(X_2), v_2(X_3)\}
    \ge v_2(X_1).
    \]
    Thus, she does not PMMS-envy the bundle $X_1$. Moreover,
    \[
    \min\{v_2(X_2'), v_2(X_3')\} \ge \mu_2(2, X_2' \cup X_3'),
    \]
    and therefore she does not PMMS-envy the remaining bundle among $X_2'$ and $X_3'$.
    
    If she receives her most preferred bundle in $(X_2, X_3)$, then she receives the maximal bundle in the allocation; thus, she does not PMMS-envy any other agent by Proposition~\ref{prop:ef_implies_pmms} since $v_2$ is PMMS-feasible. Since the final allocation is a PMMS allocation from the perspective of agent~2, it also serves as her EPMMS certificate.

    Finally, since agent~1 receives one of $X_1, X_2, X_3$, The partition $(X_1, X_2, X_3)$ serves as her EPMMS certificate, since it is agent~1 leximin partition. Moreover, 
    since the leximin partition is also an MMS partition, agent~1 receives at least her MMS share. 
\end{proof}

We next consider the case of two agent types, for which we prove the following theorem.

\thmtwotypes*

\begin{proof}
We consider the case where there are only two types of valuation functions, $v_{\alpha}$ and $v_{\beta}$. That is, for each agent $i$, either $v_i = v_{\alpha}$ or $v_i = v_{\beta}$. Let $N_{\alpha}$ be the set of agents whose valuation functions are $v_{\alpha}$, and let $N_{\beta}$ be the set of agents whose valuation functions are $v_{\beta}$. Assume that $v_\beta$ is an \mmsf valuation. 

Let $(X_1, X_2, \ldots, X_n)$ be a leximin partition of $M$ into $n$ bundles according to $v_{\alpha}$. Assume without loss of generality that $X_1, \ldots, X_{|N_{\alpha}|}$ are the $|N_{\alpha}|$ least-preferred bundles according to $v_{\beta}$. 

Let $(X_{|N_{\alpha}| + 1}', \ldots, X_n')$ be a leximin partition of $\bigcup_{i = |N_{\alpha}| + 1}^{n} X_i$ into $|N_{\beta}|$ bundles according to the valuation $v_{\beta}$.

We now argue that the allocation $(X_1, \ldots, X_{|N_{\alpha}|}, X_{|N_{\alpha}| + 1}', \ldots, X_n')$, in which the agents in $N_{\alpha}$ receive bundles  $X_1, \ldots, X_{|N_{\alpha}|}$ and the agents in $N_{\beta}$ receive bundles $X_{|N_{\alpha}| + 1}', \ldots, X_n'$, is an EPMMS allocation.

First, we show that agents in $N_{\beta}$ satisfy the PMMS condition. By the assumption, 
$X_1,\ldots,X_{|N_{\alpha}|}$ are the $|N_{\alpha}|$ bundles of minimum value according to $v_{\beta}$. 
Hence every bundle in $(X_{|N_{\alpha}|+1},\ldots,X_n)$ has value at least
\[
\max_{j \le |N_{\alpha}|} v_{\beta}(X_j).
\]
Now consider the leximin partition $(X_{|N_{\alpha}|+1}',\ldots,X_n')$ of 
$\bigcup_{i=|N_{\alpha}|+1}^{n} X_i$ with respect to $v_{\beta}$. By the definition of the leximin 
solution, the value of the minimum bundle in this partition is at least the minimum value in the 
original partition $(X_{|N_{\alpha}|+1},\ldots,X_n)$. Therefore, for every 
$i > |N_{\alpha}|$,
\[
v_{\beta}(X_i') \ge \min_{j > |N_{\alpha}|} v_{\beta}(X'_j) \geq \min_{j > |N_{\alpha}|} v_{\beta}(X_j)
\ge \max_{j \le |N_{\alpha}|} v_{\beta}(X_j).
\]
Thus each agent in $N_{\beta}$ prefers their bundle to any bundle allocated to an agent 
in $N_{\alpha}$, and since $v_\beta$ is an \mmsf valuation, they do not PMMS-envy agents in $N_{\alpha}$. 

Among the agents in $N_{\beta}$, the bundles 
$X_{|N_{\alpha}|+1}',\ldots,X_n'$ form a leximin partition under the identical valuation 
function $v_{\beta}$. Since leximin allocations with identical valuations are known to satisfy 
PMMS by Proposition~\ref{prop:leximin_is_gmms}, no pair of agents in $N_{\beta}$ exhibits PMMS envy.

Finally, we show that agents in $N_{\alpha}$ satisfy EPMMS. The partition $(X_1, X_2, \ldots, X_n)$ is an EPMMS certificate for the agents in $N_{\alpha}$, since it is a leximin partition under the identical valuation 
function $v_{\alpha}$, and leximin allocations with identical valuations are known to satisfy 
PMMS by Proposition~\ref{prop:leximin_is_gmms}.
\end{proof}

\section{Conclusions and Open Problems}

In this work, we initiate the study of epistemic PMMS allocations. We establish the existence of EPMMS in several important cases, along with a constant-factor approximation for general additive valuations. Whether exact EPMMS allocations exist for arbitrary additive valuations, even for four agents, 
is one of our major open problems.

{For the special case of bivalued valuations, we obtain the stronger guarantee of epistemic GMMS, which also implies the known existence result for MMS \cite{Feige2022MMSab}. Two important questions remain open for bivalued valuations: (a) whether PMMS allocations always exist (which would also imply the known existence result for EFX \cite{AmanatidisBFHV20,JinT25,ByrkaMP25}), and (b) the harder question of whether GMMS allocations always exist (which would imply both PMMS and EGMMS)}.

Many open problems arise from our separation results for fairness concepts (depicted in Figure~\ref{fig:implications}) and valuation classes (depicted in Figure~\ref{fig:gsandmmsfeasible}).
For example,  Proposition~\ref{prop:mms_and_pmms_does_not_imply_egmms} shows that under additive valuations, PMMS and MMS together do not imply EGMMS. An interesting open question is whether this implication fails also under bivalued valuations.

While we show that IMMP exists for three additive agents, an important direction for future research is determining whether IMMP or IMMX allocations always exist for general additive valuations. Note that IMMP existence would follow from PMMS existence, and IMMX existence would follow from EFX existence. However, even if PMMS (resp. EFX) does not always exist, IMMP (resp. IMMX) might still exist, and even if PMMS (resp. EFX) exists, the existence of IMMP (resp. IMMX) may be easier to prove.

\bibliographystyle{plainnat}
\bibliography{ref}

\appendix

\section{Separations between EPMMS and EEFX}\label{sec:separations}

\subsection{PMMS Is Not Preserved Under Envy-Cycle Elimination}\label{pmmsincycleelimination}

We begin by highlighting a fundamental difference between PMMS and EFX. The central operation of the Envy-Cycle Elimination algorithm (Algorithm~\ref{alg:envy_cycles}) is the envy-cycle elimination step (Line~\ref{line:envy_cycle}): we identify a cycle of agents in which each agent prefers the next agent's bundle, and we rotate the bundles along the cycle. It is well known that this step preserves the EFX guarantee, since the set of bundles is unchanged and each agent can only increase the value of her own bundle~\cite{PlautR17}.

In contrast, PMMS is not preserved under envy-cycle elimination. We construct an instance in which a PMMS allocation, after a single application of the procedure, becomes only $\tfrac{2}{3}$-PMMS.

\begin{proposition}
    For every $0 < \alpha \leq 1$, there exists an instance with additive valuations and an $\alpha$-PMMS allocation such that a single envy-cycle elimination step (i.e., one execution of Line~\ref{line:envy_cycle} in Algorithm~\ref{alg:envy_cycles}) produces an allocation that fails to be $\frac{2\alpha + \epsilon}{2+\alpha}$-PMMS for any $\epsilon > 0$.
\end{proposition}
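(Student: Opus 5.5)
We would prove this with an explicit three-agent instance in which agents $1$ and $2$ form a $2$-cycle and agent~$3$ stays outside it. The key observation is that rotating bundles cannot hurt an agent outside the cycle, since she keeps her bundle and the multiset of bundles is unchanged. So the loss must be suffered by an agent $i$ on the cycle. Agent $i$'s value rises slightly, from $v_i(X_i)$ to $v_i(X_{i+1})$, but her pairwise shares are now computed against the new union $X_{i+1}\cup X_k$. That union can be far more ``divisible'' than $X_i\cup X_k$ was. We will therefore make agent~1's old bundle lumpy relative to $X_3$, and her new bundle split finely enough to balance against $X_3$.

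\textbf{Construction for agent 1.} Fix $\delta>0$ small and set $c=1/\alpha$.
\begin{itemize}
\item $X_1=\{g\}$ with $v_1(g)=1$.
\item $X_2=\{e_1,e_2\}$ with $v_1(e_1)=v_1(e_2)=(1+\delta)/2$.
\item $X_3=\{h_1,h_2\}$ with $v_1(h_1)=v_1(h_2)=c$.
\end{itemize}

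\textbf{Agent 1 is $\alpha$-PMMS before the rotation.}
\begin{itemize}
\item Against $X_2$: in $\{g,e_1,e_2\}$ the best split is $\{g\}$ versus $\{e_1,e_2\}$, so $\mu_1(2,X_1\cup X_2)=1$.
\item Against $X_3$: the candidate splits of $\{g,h_1,h_2\}$ give $\min(1,2c)=1$ or $c$. Hence $\mu_1(2,X_1\cup X_3)=\max(1,c)=1/\alpha$, and $1\ge \alpha\cdot(1/\alpha)$ holds.
\end{itemize}

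\textbf{Agents 2 and 3, and the cycle.}
\begin{itemize}
\item Agent~2 values $g$ at $1$, each $e_\ell$ at a tiny $\eta>0$, and $h_1,h_2$ at $0$. She envies agent~1. She is PMMS-satisfied against $X_1$, because $g$ is a single indivisible good, so $\mu_2(2,\{g,e_1,e_2\})=2\eta=v_2(X_2)$. She is trivially satisfied against $X_3$, since $\mu_2(2,X_2\cup X_3)\le v_2(X_2)$.
\item Agent~3 values only $h_1,h_2$, so she is envy-free and PMMS-satisfied throughout.
\item Agent~1 strictly prefers $X_2$ to $X_1$. Hence $(1,2)$ is an envy cycle, and Line~\ref{line:envy_cycle} may swap $X_1$ and $X_2$.
\end{itemize}
We would also perturb the values slightly if needed, so that no envy-free agent exists and Line~\ref{line:envy_cycle} is the step actually executed.

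\textbf{Agent 1 after the rotation.} She now holds $\{e_1,e_2\}$ of value $1+\delta$. The union with $X_3$ is $\{e_1,e_2,h_1,h_2\}$, and it splits evenly as $\{e_1,h_1\}$ versus $\{e_2,h_2\}$. This gives
\[
\mu_1(2,X_2\cup X_3)=c+\tfrac{1+\delta}{2}.
\]
Her ratio is therefore
\[
\frac{1+\delta}{1/\alpha+(1+\delta)/2}\;\longrightarrow\;\frac{2\alpha}{2+\alpha}\qquad(\delta\to 0).
\]
So for any $\epsilon>0$ we can choose $\delta$ small enough that the allocation fails to be $\frac{2\alpha+\epsilon}{2+\alpha}$-PMMS.

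\textbf{Expected obstacle.} The main difficulty is calibration. The old union $X_1\cup X_3$ must be lumpy, so that the constraint $v_1(X_1)\ge\alpha\mu_1(2,X_1\cup X_3)$ remains slack enough to allow $c=1/\alpha$. At the same time, agent~1 must still be $\alpha$-PMMS against $X_2$, which is why $X_2$ uses two half-goods rather than fine goods. With fine goods we would get $\mu_1(2,X_1\cup X_2)\approx 1+\delta/2$, which violates PMMS at $\alpha=1$. After the rotation the new union must split evenly. The two-halves choice does both jobs at once, and the rest is routine checking of the finitely many partitions.
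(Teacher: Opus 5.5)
Your proposal is correct and is essentially the paper's construction. Rescaling your agent~1 valuation by $2\alpha$ and swapping the roles of agents~2 and~3 gives the paper's instance: old bundle $\{2\alpha\}$, comparison bundle $\{2,2\}$, and a new bundle of two goods worth about $\alpha$ each. The only differences are that you put the tie-breaking perturbation inside the two half-goods instead of adding a good worth $\epsilon/2$, and that you spell out agents~2 and~3, which makes the full pre-rotation allocation $\alpha$-PMMS and makes $(1,2)$ the cycle actually eliminated. The paper leaves both of those points implicit.
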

\begin{proof}
    Consider an instance with $n = 3$ agents and six goods, where agent $1$ has the additive valuation
    \[
        v_1(g_1) = 2\alpha, \quad v_1(g_2) = 2, \quad v_1(g_3) = 2, \quad v_1(g_4) = \alpha, \quad v_1(g_5) = \alpha, \quad v_1(g_6) = \epsilon/2,
    \]
    and $\epsilon > 0$ is arbitrarily small. Consider the initial allocation $X$ given by
    \[
        X_1 = \{g_1\}, \quad X_2 = \{g_2, g_3\}, \quad X_3 = \{g_4, g_5, g_6\}.
    \]
    We first verify that $X$ satisfies the $\alpha$-PMMS condition for agent $1$, whose current value is $v_1(X_1) = 2\alpha$. Comparing her bundle against each of the other agents', we have
    \begin{align*}
        \alpha \cdot \mu_1(2, X_1 \cup X_2) &= \alpha \cdot \mu_1(2, \{g_1, g_2, g_3\}) = \alpha \cdot \min \{ 2,\, 2 + 2\alpha \} = 2\alpha = v_1(X_1), \\
        \mu_1(2, X_1 \cup X_3) &= \mu_1(2, \{g_1, g_4, g_5, g_6\}) = \min \{ 2\alpha,\, 2\alpha + \epsilon/2 \} = 2\alpha = v_1(X_1).
    \end{align*}
    Hence agent $1$'s $\alpha$-PMMS condition is satisfied.
    
    Now suppose an envy cycle is resolved (e.g., $1 \to 3 \to 1$), and as a result agent $1$ receives $X_3$ in place of $X_1$. Her value increases to $v_1(X_3) = 2\alpha + \epsilon/2$, and the maximin share again agent $2$ becomes
    \[
        \mu_1(2, X_2 \cup X_3) = \mu_1(2, \{g_2, g_3, g_4, g_5, g_6\}) = \min \{ 2 + \alpha,\, 2 + \alpha + \epsilon/2 \} = 2 + \alpha.
    \]
    The resulting ratio is
    \[
        \frac{v_1(X_3)}{\mu_1(2, X_2 \cup X_3)} = \frac{2\alpha + \epsilon/2}{2 + \alpha} < \frac{2\alpha + \epsilon}{2 + \alpha},
    \]
    so agent $1$'s bundle violates the $\frac{2\alpha + \epsilon}{2 + \alpha}$-PMMS guarantee after the envy-cycle elimination step.
\end{proof}

We next show that this bound is tight: eliminating an envy cycle in a PMMS allocation always results in an allocation that is $\tfrac{2}{3}$-PMMS.

\begin{proposition}\label{23pmms}
    Under additive valuations and for $0 < \alpha \leq 1$, a single envy-cycle elimination step (i.e., one execution of Line~\ref{line:envy_cycle} in Algorithm~\ref{alg:envy_cycles}) applied to an $\alpha$-PMMS allocation produces a $\frac{2\alpha}{2+\alpha}$-PMMS allocation.
\end{proposition}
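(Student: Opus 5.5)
Fix an agent $i$ and let $X$ be the $\alpha$-PMMS allocation before the step and $X'$ the allocation after it. Agents outside the cycle keep their bundles, and the multiset of bundles is unchanged. If $i$ is on the cycle, she receives a bundle $X'_i = X_k$ with $v_i(X_k) > v_i(X_i)$. Otherwise $X'_i = X_i$. In both cases $v_i(X'_i) \ge v_i(X_i)$.

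Now fix another agent $j$, whose new bundle is $X'_j = X_\ell$ for some $\ell$. If $X_\ell = X_i$, then $j$ holds $i$'s old bundle. The pair $\{X'_i, X'_j\}$ is then $\{X_k, X_i\}$, with $v_i(X_k) \ge v_i(X_i)$. Since $\mu_i(2,S) \le \tfrac12 v_i(S)$, we get $\mu_i(2, X_k \cup X_i) \le \tfrac12(v_i(X_k)+v_i(X_i)) \le v_i(X_k)$, so there is no violation at all.

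The main case is $X_\ell \ne X_i$. We need to bound $\mu_i(2, X'_i \cup X_\ell)$, where $\ell \neq i$ is an old bundle, so the old guarantee $v_i(X_i) \ge \alpha\,\mu_i(2, X_i \cup X_\ell)$ is available. Write $u = v_i(X_i)$, $w = v_i(X'_i) \ge u$ and $y = v_i(X_\ell)$. I will combine two upper bounds.

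The first is the trivial bound $\mu_i(2, X'_i \cup X_\ell) \le \tfrac12(w+y)$.

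For the second, I need to turn the old PMMS condition into an upper bound on $y$. Take an optimal partition of $X_\ell$ into two halves $(C,D)$ under $v_i$, and consider $(X_i \cup C, D)$ as a partition of $X_i \cup X_\ell$. This gives $\mu_i(2, X_i \cup X_\ell) \ge \min\{u + v_i(C), v_i(D)\}$. That inequality alone does not bound $y$, so the cleaner route is through Lemma~\ref{lemmamms}. Applying it with $S = X_i$ gives $\mu_i(2, X_i \cup X_\ell) \ge \mu_i(2, X_\ell)$; it also gives $\mu_i(2, X'_i \cup X_\ell) \le \mu_i(2, X_\ell) + w$, and hence
\[
\mu_i(2, X'_i\cup X_\ell) \le w + u/\alpha.
\]
This is too weak, so I need a better intermediate quantity. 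The key inequality I aim for is
\[
\mu_i(2, X_i \cup X_\ell) \ge \min\{\, y,\ \tfrac12(u+y)\,\} \quad\text{whenever } \ldots
\]
but this fails with lumpy items, so I expect to need a case analysis on whether $X_\ell$ contains a single dominant item.

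The case split is on the largest item $g$ of $X_\ell$ (value $t$ under $v_i$).

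\textbf{Case A: $t \le u$.} Greedy balancing of $X_i \cup X_\ell$ yields a split with smaller side at least $\tfrac12(u+y) - \tfrac12 \max\{u,t\}$. Combined with the $\alpha$-PMMS condition, this gives $y \le (2/\alpha)u + O(u)$. Plugging into $\tfrac12(w+y)$ and using $w \ge u$ should give ratio $w/\mu \ge 2\alpha/(2+\alpha)$.

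\textbf{Case B: $t > u$.} Use Lemma~\ref{lemmamms}, removing $g$ or removing $X'_i$ as appropriate, together with $w \ge u$.

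In each case the extremal configuration should match the tight example above: $X_\ell$ is split into two equal halves of value $1/\alpha$ times $u$ each, and $w \approx u$. This gives $\mu \le \tfrac12(u + 2u/\alpha) = u(2+\alpha)/(2\alpha)$, which is exactly the claimed bound.

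I expect the main obstacle to be a clean inequality of the form $\mu_i(2, X_i\cup X_\ell) \ge \min\{\text{something linear in } u,y\}$ that works for arbitrary item granularity. My first attempt will be to apply Lemma~\ref{lemmamms} with $S=X_i$ to relate $\mu_i(2,X_\ell)$ on both sides. The target is $\mu_i(2, X'_i\cup X_\ell) \le \tfrac12 w + \mu_i(2,X_\ell)$-type bounds, together with $\mu_i(2,X_\ell)\le u/\alpha$.

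Combining the target bound with $\mu_i(2,X_\ell)\le u/\alpha$ and $u \le w$ gives $\mu_i(2, X'_i\cup X_\ell) \le w(\tfrac12 + 1/\alpha) = w\,(2+\alpha)/(2\alpha)$, which is precisely the claimed ratio. Proving the subadditivity-type inequality $\mu_i(2, A\cup B) \le \tfrac12 v_i(A) + \mu_i(2,B)$ is therefore the crux. I would prove it by taking an optimal split of $A\cup B$, restricting it to $B$, and bounding the smaller side.
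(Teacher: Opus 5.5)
\textbf{Gap: the crux inequality is false.} Your plan rests on $\mu_i(2,A\cup B)\le \tfrac12 v_i(A)+\mu_i(2,B)$ for disjoint $A,B$, and this inequality does not hold. Take two singletons $A=\{g\}$, $B=\{h\}$ with $v_i(g)=v_i(h)=1$. The left side is $1$. Since $\mu_i(2,\{h\})=0$, the right side is $\tfrac12$.

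The same failure occurs inside the proposition. Let the old $X_i$, the new $X'_i$ and $X_\ell$ be singletons of values $1$, $1+\epsilon$ and $1$. The old allocation is PMMS for $i$ against both other bundles, yet $\mu_i(2,X'_i\cup X_\ell)=1>\tfrac12(1+\epsilon)$.

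Your intended proof breaks exactly here. The side of an optimal split whose $B$-part is at most $\mu_i(2,B)$ may contain all of $A$, not half of it. So the best constant in $\mu_i(2,A\cup B)\le \mu_i(2,B)+c\,v_i(A)$ is $c=1$. That is just Lemma~\ref{lemmamms}, giving the $w+u/\alpha$ bound you already rejected.

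More fundamentally, the hypotheses your final chain uses, $u\le w$ and $\mu_i(2,X_\ell)\le u/\alpha$, are too weak. Take $\alpha=1$ and $u=w=1$, with $X'_i$ two items of value $\tfrac12$ and $X_\ell$ two items of values $10$ and $1$. Then $\mu_i(2,X_\ell)=1\le u$, but $\mu_i(2,X'_i\cup X_\ell)=2>\tfrac32 w$. This configuration is excluded only because $\mu_i(2,X_i\cup X_\ell)\ge 2>u$, which is information you discard when you replace $X_i\cup X_\ell$ by $X_\ell$.

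Your Case A/Case B sketch does use $X_i\cup X_\ell$. Case A can be made rigorous: balancing with all items worth at most $u$ gives $\mu_i(2,X_i\cup X_\ell)\ge y/2$, so $y\le 2u/\alpha$ exactly. The additive $O(u)$ slack you wrote would ruin the constant. Case B is never actually argued.

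\textbf{Missing idea: transfer an optimal split directly, keeping $X_i$ as filler.} Let $(P,Q)$ split $X'_i\cup X_\ell$ with both sides worth at least $\mu'=\mu_i(2,X'_i\cup X_\ell)$. Label the sides so that $v_i(Q\cap X'_i)\le \tfrac12 w$. Then $\bigl(Q\setminus X'_i,\ X_i\cup(P\setminus X'_i)\bigr)$ splits $X_i\cup X_\ell$ into parts worth at least $\mu'-\tfrac12 w$ and $u+\mu'-w$. Hence
\[
\tfrac{u}{\alpha}\ \ge\ \mu_i(2,X_i\cup X_\ell)\ \ge\ \min\bigl\{\mu'-\tfrac12 w,\ u+\mu'-w\bigr\}.
\]
Using $u\le w$ and $\alpha\le 1$, the two branches give:
\begin{itemize}
\item first branch: $\mu'\le \tfrac12 w+u/\alpha\le \tfrac{2+\alpha}{2\alpha}w$;
\item second branch: $\mu'\le w+(\tfrac{1}{\alpha}-1)u\le w/\alpha$.
\end{itemize}

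In your notation, the correct version of the crux is $\mu_i(2,A\cup B)\le \mu_i(2,C\cup B)+\max\{\tfrac12 v_i(A),\,v_i(A)-v_i(C)\}$ with $C=X_i$. The second term is what your inequality drops. It is harmless only because $X_i$ is present, which yields $u/\alpha+(w-u)\le w/\alpha$. This is the paper's argument, and it needs no case split on item sizes.

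Your separate treatment of $X'_j=X_i$ is correct and is actually required, since the substitution split presupposes $X_i\cap X'_j=\emptyset$.
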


\begin{proof}
Let $X = (X_1, \ldots, X_n)$ be the allocation immediately before Line~\ref{line:envy_cycle} of Algorithm~\ref{alg:envy_cycles} is executed, and let $X' = (X'_1, \ldots, X'_n)$ denote the allocation after its execution.
    Fix an agent $i$. 
    Since $X$ is $\alpha$-PMMS, we have $v_i(X_i) \geq \alpha \cdot \mu_i(2, X_i \cup X_j)$ for all $j$.
    The envy-cycle elimination step only permutes bundles among agents, so the set of bundles is unchanged; hence $v_i(X_i) \geq \alpha \cdot \mu_i(2, X_i \cup X'_j)$ for all $j$ as well.
    Our goal is to show that
    \begin{align}
        v_i(X'_i) \geq \frac{2\alpha}{2+\alpha} \cdot \mu_i(2, X'_i \cup X'_j) \quad \text{for all $j$}. \label{eq:goalcycles}
    \end{align}
    Fix an agent $j$, set $\mu' = \mu_i(2, X'_i \cup X'_j)$, and let $(A, B)$ be a partition of $X'_i \cup X'_j$ 
    such that $v_i(A) \geq \mu'$ and $v_i(B) \geq \mu'$. Since $v_i(X'_i) = v_i(A \cap X'_i) + v_i(B \cap X'_i)$, we may assume without loss of generality that $v_i(B \cap X'_i) \leq \tfrac{1}{2} v_i(X'_i)$. Observe that
    \begin{align*}
        (B \setminus X'_i) \cup X_i \cup (X'_j \setminus B) = X_i \cup X'_j,
    \end{align*}
    so $(B \setminus X'_i, X_i \cup (X'_j \setminus B))$ is a partition of $X_i \cup X'_j$. It follows that
    \begin{align*}
        \mu_i(2, X_i \cup X'_j) &\geq \min \bigl\{v_i(B \setminus X'_i),\; v_i(X_i \cup (X'_j \setminus B)) \bigr\} \\
        &= \min \bigl\{ v_i(B) - v_i(B \cap X'_i),\; v_i(X_i) + v_i(A) - v_i(A \cap X'_i)\bigr\}.
    \end{align*}
    We distinguish two cases according to which term attains this minimum.
    
    \textbf{Case 1}: $ \mu_i(2, X_i \cup X'_j) \geq v_i(B) - v_i(B \cap X'_i)$. We have:
    \begin{align*}
        v_i(X'_i) &\geq v_i(X_i) && (\text{by the design of the algorithm}) \\
        &\geq \alpha \mu_i(2, X_i \cup X_j') && (\text{since $X$ is $\alpha$-PMMS}) \\
        &\geq \alpha\left(  v_i(B) - v_i(B \cap X_i') \right) && (\text{by the assumption of Case 1}) \\
        &\geq \alpha\left( \mu' - \tfrac{1}{2}v_i(X'_i)\right) && (\text{since $v_i(B) \geq \mu'$ and $v_i(B \cap X_i') \leq \tfrac{1}{2}v_i(X'_i)$})
    \end{align*}
    Rearranging gives
    \begin{align*}
        \frac{1}{\alpha}v_i(X'_i) + \frac{1}{2} v_i(X'_i) \geq \mu'
        \qquad \Longrightarrow \qquad v_i(X'_i) \geq \frac{2\alpha}{2+\alpha}\mu'
    \end{align*}
    establishing Equation~\eqref{eq:goalcycles}.

    \textbf{Case 2}: $ \mu_i(2, X_i \cup X'_j) \geq v_i(X_i) + (v_i(A) - v_i(A \cap X'_i))$. We have:
    \begin{align*}
        v_i(X_i) &\geq \alpha \mu_i(2, X_i \cup X'_j) && (\text{since $X$ is $\alpha$-PMMS}) \\
        &\geq \alpha \left( v_i(X_i) + (v_i(A) - v_i(A \cap X'_i)) \right)  && (\text{by the assumption of Case 2})\\
        &\geq \alpha \left( v_i(X_i)  + \mu' - v_i(X'_i) \right)  && (\text{since $v_i(A) \geq \mu'$ and $v_i(A \cap X_i') \leq v_i(X'_i)$})
    \end{align*}
        Since $v_i(X'_i) \geq v_i(X_i)$ by the design of the algorithm, rearranging gives
    \begin{align*}
     (1-\alpha)v_i(X'_i) \geq \alpha(\mu' - v_i(X'_i)) 
        \qquad \Longrightarrow \qquad v_i(X'_i) \geq \alpha \mu'.
    \end{align*}
        Since $\alpha \geq \frac{2\alpha}{2+\alpha}$, this implies Equation~\eqref{eq:goalcycles}, completing the proof.
\end{proof}

\subsection{The Lone Divider Approach Does Not Guarantee EPMMS}

The lone divider approach \cite{steinhaus,kuhn1967games} is a common technique for achieving share-based fairness notions, including approximations of the maximin share \cite{HosseiniSS22,Aigner-HorevS22,Hummel25,HummelH26,AkramiR25a}. The algorithm proceeds in rounds. In each round, a designated ``divider'' agent partitions the remaining items into $n'$ bundles---where $n'$ is the number of remaining agents---such that each bundle meets her own fair share threshold. Next, a bipartite graph is constructed between the remaining agents and these bundles, with an edge connecting an agent to a bundle whenever the bundle meets that agent's fair share threshold. A (partial or perfect) matching is then computed in this graph, chosen so that no unmatched agent has an edge to a matched bundle. Finally, each matched agent is assigned the bundle to which she is paired, and these agents and bundles are removed before proceeding to the next round.

To establish the existence of EEFX allocations for general monotone valuations, \citet*{AkramiN24} elegantly adapt this approach by defining desirability through the EEFX property rather than a numerical share. Specifically, for any partition of the items, an \emph{EEFX-graph} is constructed in which an edge exists between an agent and a bundle if and only if the bundle is EEFX-feasible for that agent; see the following definition.

\begin{definition}[EEFX-Feasibility, \cite{AkramiN24}]
\label{def:n_epistemic_efx}
Let $k$ be a positive integer, $i \in [n]$ an agent, and $S \subseteq M$ a subset of items. A bundle $A \subseteq S$ is \emph{$k$-EEFX-feasible} for agent $i$ with respect to $S$ if there exists a partition of $S \setminus A$ into $k - 1$ bundles $C_1, C_2, \dots, C_{k-1}$ such that
\[
v_i(A) \geq v_i(C_j \setminus \{g\}) \quad \text{for every } j \in [k-1] \text{ and every } g \in C_j.
\]
We then define
\[
\textsc{EEFX}^k_i(S) = \{A \subseteq S \mid A \text{ is $k$-EEFX-feasible for agent $i$ with respect to $S$}\}.
\]
\end{definition}

The lone divider approach succeeds in this setting because of the following structural property: if a bundle is EEFX-feasible for an agent once some items have been removed, then it was already EEFX-feasible before their removal. Consequently, after each round it suffices to verify the EEFX condition on the reduced instance, without revisiting bundles assigned in earlier rounds. The following lemma of \citet*{AkramiN24} formalizes this property.

\begin{lemma}[{\cite[Lemma 3.1]{AkramiN24}}]\label{lem:EEFXreduce}
    Consider any fair division instance, an agent $i \in [n]$, and a bundle $A \subseteq M$ with $A \notin \textsc{EEFX}^n_i(M)$. Then every $B \in \textsc{EEFX}^{n-1}_i(M \setminus A)$ satisfies $B \in \textsc{EEFX}^n_i(M)$.
\end{lemma}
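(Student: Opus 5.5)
We prove the contrapositive form directly: fix $B \in \textsc{EEFX}^{n-1}_i(M \setminus A)$, witnessed by a partition $C_1,\dots,C_{n-2}$ of $(M\setminus A)\setminus B$ with $v_i(B) \ge v_i(C_j \setminus \{g\})$ for all $j$ and $g \in C_j$. The natural candidate witness for $B \in \textsc{EEFX}^n_i(M)$ is the partition $(C_1,\dots,C_{n-2},A)$ of $M \setminus B$ into $n-1$ bundles. The conditions for the bundles $C_j$ are inherited verbatim, so everything hinges on the single remaining condition
\[
v_i(B) \ge v_i(A \setminus \{g\}) \quad \text{for every } g \in A .
\]

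To establish it we use the hypothesis $A \notin \textsc{EEFX}^n_i(M)$. Suppose, for contradiction, that $v_i(A\setminus\{g^\ast\}) > v_i(B)$ for some $g^\ast \in A$. By monotonicity, $v_i(A) \ge v_i(A\setminus\{g^\ast\}) > v_i(B)$. We then check that $A$ would be $n$-EEFX-feasible using the partition $(C_1,\dots,C_{n-2},B)$ of $M \setminus A$. For the bundle $B$ this is immediate, since $v_i(B\setminus\{g\}) \le v_i(B) < v_i(A)$. For each $C_j$ we have $v_i(C_j\setminus\{g\}) \le v_i(B) < v_i(A)$. Hence $A \in \textsc{EEFX}^n_i(M)$, which is a contradiction. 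Therefore $v_i(B) \ge v_i(A\setminus\{g\})$ for all $g\in A$, and $(C_1,\dots,C_{n-2},A)$ certifies $B \in \textsc{EEFX}^n_i(M)$.

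A degenerate case needs care. When $n=2$ there are no $C_j$'s, and the argument reduces to comparing $A$ and $B = M\setminus A$ directly; the same chain of inequalities goes through. The main (and only mildly subtle) step is the contradiction argument. It uses only monotonicity of $v_i$ and the transitivity $v_i(C_j\setminus\{g\}) \le v_i(B) < v_i(A)$, so no additivity is needed, consistent with the lemma holding for monotone valuations.
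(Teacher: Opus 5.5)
Your proof is correct. The paper only cites this lemma from \citet{AkramiN24} without reproducing a proof, and your argument is essentially the standard swap argument: the hypothesis $A \notin \textsc{EEFX}^n_i(M)$ forces $v_i(A) < v_i(B)$, since otherwise $(C_1,\dots,C_{n-2},B)$ would certify $A$, and then $(C_1,\dots,C_{n-2},A)$ certifies $B$. Your contradiction step can be stated this directly, because it only uses $v_i(A) \ge v_i(B)$; monotonicity then gives $v_i(A\setminus\{g\}) \le v_i(A) < v_i(B)$ for all $g \in A$.
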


While this property enables the modified lone divider method for EEFX, we now show that it fails when EEFX is replaced with EPMMS, ruling out a direct adaptation of the same argument. We begin with the analogous notion of EPMMS-feasibility.

\begin{definition}[EPMMS-Feasibility]
\label{def:epmms_feasibility}
Let $k$ be a positive integer, $i \in [n]$ an agent, and $S \subseteq M$ a subset of items. A bundle $A \subseteq S$ is \emph{$k$-EPMMS-feasible} for agent $i$ with respect to $S$ if there exists a partition of $S \setminus A$ into $k - 1$ bundles $C_1, C_2, \dots, C_{k-1}$ such that
\[
v_i(A) \geq \mu_i(2, A \cup C_j) \quad \text{for every } j \in [k-1].
\]
We then define
\[
\textsc{EPMMS}^k_i(S) = \{A \subseteq S \mid A \text{ is $k$-EPMMS-feasible for agent $i$ with respect to $S$}\}.
\]
\end{definition}

The following lemma shows that the analogue of Lemma~\ref{lem:EEFXreduce} fails for EPMMS.

\begin{lemma}
There exists a fair division instance, an agent $i \in [n]$, and a bundle $A \subseteq M$ with $A \notin \textsc{EPMMS}^n_i(M)$ such that some $B \in \textsc{EPMMS}^{n-1}_i(M \setminus A)$ satisfies $B \notin \textsc{EPMMS}^n_i(M)$.
\end{lemma}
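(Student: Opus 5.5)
The plan is to give an explicit counterexample with $n=3$ agents, so that the reduced instance has only two agents. Only agent $i$'s additive valuation matters, since every condition in the statement concerns $i$ alone. The key simplification is that for $k=2$ the feasibility condition collapses. For $S=M\setminus A$, a bundle $B$ lies in $\textsc{EPMMS}^{2}_i(S)$ iff $v_i(B)\ge \mu_i(2,S)$, because the only admissible complement is $C_1=S\setminus B$. So the three requirements become:
\begin{enumerate}[label=(\roman*)]
\item $v_i(B)\ge\mu_i(2,M\setminus A)$;
\item for every partition $(C_1,C_2)$ of $M\setminus B$, some $j$ has $\mu_i(2,B\cup C_j)>v_i(B)$;
\item for every partition $(D_1,D_2)$ of $M\setminus A$, some $j$ has $\mu_i(2,A\cup D_j)>v_i(A)$.
\end{enumerate}

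The intuition for the failure of the analogue of Lemma~\ref{lem:EEFXreduce} is that PMMS depends on the \emph{granularity} of the items, not only on their total value. Reinserting $A$ can therefore force every $2$-split of $M\setminus B$ to create a part $C_j$ that combines with $B$ into a set splittable into two halves, each worth more than $v_i(B)$. For EEFX this cannot happen, since removing one item only weakens the comparison.

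I would therefore choose $B$ as a single large item (or a few large items) worth exactly $\mu_i(2,M\setminus A)$, which gives (i) with equality. I would then choose $A$ to consist of medium items whose values combine flexibly with the remaining items of $M\setminus(A\cup B)$. The aim is that in any split of $M\setminus B$ into two parts, the part receiving more value contains items that, together with $B$, can be rebalanced into two sides each strictly above $v_i(B)$.

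A pitfall appears already in small trials. With $B=\{2\}$ and $M\setminus(A\cup B)=\{1,1\}$, natural choices such as $A=\{3\}$, $A=\{1.5,1.5\}$ or $A=\{1,1,1,1\}$ all fail. In each case there is a "sacrificial" split that isolates the awkward items, so that each $B\cup C_j$ still has maximin share exactly $2$. The construction must prevent such a split. Two ways to do this are making the total value of $M\setminus B$ large relative to $v_i(B)$ while keeping all items small enough to be balanced, or using odd/incommensurable values (e.g.\ $1.5$ and $1$ mixed with a small $\epsilon$ item) so that some part $C_j$ must be flexible.

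Condition (iii) is then checked separately. It should be easy to arrange, since $A$ made of medium items is poor for agent $i$ against any half of $M\setminus A$ that contains $B$.

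I expect the main obstacle to be satisfying (ii) and (iii) simultaneously while keeping (i) tight. Making $M\setminus B$ rich enough to defeat every split tends to raise $\mu_i(2,M\setminus A)$ above $v_i(B)$, breaking (i), and it tends to make $A$ more valuable, threatening (iii). If hand search stalls, I would enumerate small instances (five to seven items with values in $\{\epsilon,1,1.5,2,3\}$) by brute force over all partitions to locate a witness, and then present that instance with a direct case check of (i)--(iii).
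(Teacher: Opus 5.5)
\textbf{Gap: no witness is exhibited.} Your reduction to conditions (i)--(iii) is correct and is exactly what has to be checked. However, the proposal never produces an instance. For a pure existence claim like this, the explicit witness and its verification \emph{are} the proof, and you only describe how you might search for one.

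Your hand exploration is also aimed at a family that provably cannot work. Write $R = M\setminus(A\cup B)$.
\begin{itemize}
\item Apply (ii) to the split $(R,A)$ of $M\setminus B$. By (i), $\mu_i(2,B\cup R)\le v_i(B)$, so you need $\mu_i(2,A\cup B)>v_i(B)$. Since $\mu_i(2,A\cup B)\le \tfrac12(v_i(A)+v_i(B))$, this forces $v_i(A)>v_i(B)$.
\item Apply (iii) to the split $(B,R)$ of $M\setminus A$. Now $\mu_i(2,A\cup B)\le\tfrac12(v_i(A)+v_i(B))<v_i(A)$, so you need $\mu_i(2,A\cup R)>v_i(A)$. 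This forces $v_i(R)>v_i(A)$.
\end{itemize}
Hence $v_i(R)>v_i(A)>v_i(B)$ is necessary. With $B=\{2\}$ and $R=\{1,1\}$ you have $v_i(R)=v_i(B)$, so no choice of $A$ can succeed. The failures you observed are structural, not a matter of choosing $A$ more cleverly.

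The same computation shows your justification for (iii) is backwards. $A$ is never PMMS-envious of a part consisting of $B$ alone. It must instead be defeated by the part containing the valuable remainder $R$.

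\textbf{Missing idea.} Make $R$ the most valuable piece while keeping $\mu_i(2,B\cup R)=v_i(B)$. You get this by letting $R$ consist of copies of $B$'s single item. Then add a small positive item to $A$ so that every $2$-split of $M\setminus B$ has a part with at least three items.

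The paper uses values $1,4,4,6,6,6$ with $A=\{1,4,4\}$, $B=\{6\}$, $R=\{6,6\}$:
\begin{itemize}
\item (i) holds since $\mu_i(2,\{6,6,6\})=6$.
\item (iii) holds since every $2$-split of $\{6,6,6\}$ puts two $6$'s together, and $\mu_i(2,\{1,4,4,6,6\})=10>9$.
\item (ii) holds since every $2$-split of the five items $\{1,4,4,6,6\}$ has a part with at least three items. Moving that part's cheapest item to $B$ leaves at least two items from $\{4,4,6,6\}$, so both sides are worth more than $6$.
\end{itemize}
The item of value $1$ is essential: without it, the split $(\{4,4\},\{6,6\})$ certifies $B$.

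Scaled by $1/4$, with the smallest value replaced by $\epsilon$, this instance lies in the brute-force range you proposed, so your search would likely have found it. But until an instance is exhibited and checked, the argument is incomplete.
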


\begin{proof}
Consider an instance with $n = 3$ agents and six items, and let $i$ be an agent with an additive valuation assigning values $1, 4, 4, 6, 6, 6$ to the six items. Throughout the proof, we represent bundles by the multisets of their item values, and we drop the subscript $i$ since only one agent is involved.

Consider the partition
\[
X = (X_1, X_2, X_3) = (\{6\}, \{4,4,1\}, \{6,6\}).
\]
We first show that $X_2 \notin \textsc{EPMMS}^3(M)$. Suppose for contradiction that some partition $(C_1, C_2)$ of $M \setminus X_2 = \{6,6,6\}$ into two bundles satisfies $v(X_2) \geq \mu(2, X_2 \cup C_j)$ for both $j \in \{1, 2\}$. Without loss of generality, $C_1$ contains at least two items of value $6$, so
\[
\mu(2, X_2 \cup C_1) \geq \mu(2, \{1,4,4,6,6\}) = \min\{6+4+1,\; 6+4\} = 10 > 9 = v(X_2),
\]
a contradiction. Hence $X_2 \notin \textsc{EPMMS}^3(M)$.

Now consider $B = \{6\}$ in the reduced instance $M \setminus X_2 = \{6,6,6\}$. The partition $(\{6\}, \{6,6\})$ certifies that $B \in \textsc{EPMMS}^2(M \setminus X_2)$, since
\[
v(B) = 6 \geq \mu(2, \{6,6,6\}) = \min\{6+6,\; 6\} = 6.
\]

We now argue that $B \notin \textsc{EPMMS}^3(M)$. Suppose for contradiction that some partition $(D_1, D_2)$ of $M \setminus B = \{1,4,4,6,6\}$ into two bundles satisfies $v(B) \geq \mu(2, B \cup D_j)$ for both $j \in \{1, 2\}$. Since $|M \setminus B| = 5$, at least one of $D_1, D_2$ contains three or more items; without loss of generality, assume $|D_1| \geq 3$. Then moving the lowest-valued item of $D_1$ into $B$ yields two bundles each of value strictly greater than $v(B) = 6$, so $\mu(2, B \cup D_1) > v(B)$, a contradiction. Hence $B \notin \textsc{EPMMS}^3(M)$, completing the proof.
\end{proof}

\subsection{EPMMS Is Incomparable to Both MMS and EFX}\label{sec:mms_and_pmms}

In this section, we show that EPMMS neither implies nor is implied by MMS or EFX. This stands in contrast to EEFX, which is implied by both MMS and EFX~\cite{CaragiannisGRSV25}.
We first show that EPMMS is incomparable to MMS.

\begin{observation}
\label{MMS_and_EPMMS_incomparable}
Neither EPMMS nor MMS implies the other, even when agents have identical bivalued valuations (a special case of additive valuations).
\end{observation}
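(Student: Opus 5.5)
The plan is to prove both non-implications with two explicit small instances. Each instance has identical bivalued valuations $v$ with values in $\{a,b\}$, $0 \le a < b$. Identical valuations simplify both checks. MMS reduces to comparing each $v(X_i)$ with the single number $\mu(n,M)$. EPMMS for agent $i$ reduces to the following statement: there is a partition $(C_j)_{j \neq i}$ of $M \setminus X_i$ with $v(X_i) \ge \mu(2, X_i \cup C_j)$ for every $j$. This is exactly the $n$-EPMMS-feasibility of $X_i$ from Definition~\ref{def:epmms_feasibility}. Both instances should have $n=3$ and at most about eight items. Then every certificate can be enumerated by hand: one splits $M \setminus X_i$ into two bundles, and because of identical bivalued values only the counts of $a$- and $b$-items in each part matter.

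\emph{MMS does not imply EPMMS.} Here I want an allocation in which some agent receives exactly $\mu(n,M)$, but by a bundle made of many low items, while the rest of $M$ consists of few high items. Then whatever way the high items are split into $n-1$ certificate bundles, one of them combined with $X_i$ admits a balanced $2$-partition worth more than $v(X_i)$. The mechanism is the one used in the EPMMS-feasibility lemma above (values $1,4,4,6,6,6$). If some certificate bundle $C_j$ has more items than needed, then moving its smallest item into $X_i$ yields two bundles each strictly more valuable than $X_i$, so $\mu(2, X_i \cup C_j) > v(X_i)$. I would start from a total that divides evenly into $n$ MMS bundles and tune the counts of $a$- and $b$-items. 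The goal is that every split of $M \setminus X_i$ into two parts leaves one part that, together with $X_i$, can be rebalanced above $v(X_i)$. The remaining agents should get bundles of value at least $\mu(n,M)$, so that the allocation is MMS.

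\emph{EPMMS does not imply MMS.} Here I want an allocation in which every agent has a certificate, yet some agent receives strictly less than $\mu(n,M)$. The natural route is $a = 0$, which is allowed by the definition of bivalued valuations. Then a low-value bundle can still be PMMS-satisfied against a certificate bundle $C_j$ whenever $X_i \cup C_j$ contains few $b$-items: for instance, if $X_i$ holds one $b$-item and $C_j$ holds one $b$-item plus zeros, then $\mu(2, X_i \cup C_j) = b$. The $b$-items can then be spread one per certificate bundle while $\mu(n,M)$ is larger. I would also check that the other agents, each holding more, have certificates; with identical valuations the actual allocation should serve as their certificate.

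The main obstacle is the first direction. EPMMS quantifies over all repartitions of $M \setminus X_i$, so I must rule out every certificate, and naive choices fail: in several attempts (e.g.\ values $3,3,3,1,1,1,1,1,1$) an even split of the high items already certifies $X_i$. I would therefore first search small parameter ranges systematically, varying the number of $b$-items and $a$-items and the ratio $b/a$. Once a candidate is found, I would verify it by a short case analysis on the number of $b$-items in the larger certificate bundle. For that verification I would use pigeonhole to force one certificate bundle to contain at least half of the items of $M \setminus X_i$.
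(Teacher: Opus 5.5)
\textbf{Gap 1: the $a=0$ route for EPMMS $\not\Rightarrow$ MMS cannot work.} Your plan for this direction fails. With identical values in $\{0,b\}$, suppose agent $i$ holds $x$ of the $B$ items of value $b$, and let $c_j$ be the number of $b$-items in a certificate bundle $C_j$. Then $\mu(2, X_i \cup C_j) = b\lfloor (x+c_j)/2\rfloor$, so the certificate condition for bundle $C_j$ is exactly $c_j \le x+1$. Nonnegative integers $c_j$ with this bound and summing to $B-x$ exist iff $B - x \le (n-1)(x+1)$. This is equivalent to $x \ge \lceil (B-n+1)/n\rceil = \lfloor B/n \rfloor$, i.e.\ to $v(X_i) \ge \mu(n,M)$. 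So for $a=0$, EPMMS and MMS coincide agent by agent, and neither separation can live there. Your specific configuration shows this directly: one $b$-item in $X_i$ and one in each of the $n-1$ certificate bundles means $B=n$, hence $\mu(n,M)=b=v(X_i)$, not larger. You need $a>0$, so that a lone high item is protected pairwise by indivisibility while the $n$-way share can pair high with low items. The paper uses values $2,2,2,1,1,1$ and the allocation $(\{2\},\{2,2\},\{1,1,1\})$. Agent $1$'s pairwise shares are $\mu(2,\{2,2,2\})=\mu(2,\{2,1,1,1\})=2$. Agent $3$'s share against agent $2$ is $\mu(2,\{2,2,1,1,1\})=3$. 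All other pairs are envy-free. So the allocation is PMMS, hence EPMMS, yet $\mu(3,M)=3>2$.

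\textbf{Gap 2: no instance for MMS $\not\Rightarrow$ EPMMS.} For this direction you give only a search plan. The statement is purely existential, so the instance \emph{is} the proof, and your proposal contains none. Your intended structure is sound, though, and already works with five items. Take values $2,2,2,1,1$ and the allocation $(\{1,1\},\{2,2\},\{2\})$. Then $\mu(3,M)=2$, since the total is $8$, and every agent gets at least $2$. However, any split of $\{2,2,2\}$ into two certificate bundles puts two $2$'s together. Agent $1$'s union with that bundle is then a superset of $\{1,1,2,2\}$, which splits as $\{2,1\},\{2,1\}$ with value $3>2$. The paper reverses the roles. Its failing agent holds the two high items $\{5,5\}$, worth $10>\mu(3,M)=9$, against seven $3$'s. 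Pigeonhole forces four $3$'s into one certificate bundle, and $\mu(2,\{5,5,3,3,3,3\})=11>10$. Either instance settles this direction, but you must actually exhibit and verify one.
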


\begin{proof}
$(\text{EPMMS} \nRightarrow \text{MMS})$
Consider an instance with $6$ items and agents having identical bivalued valuations that assign value $1$ to three items and value $2$ to the remaining three. Consider the allocation
\[
X = (X_1, X_2, X_3) = (\{2\}, \{2,2\}, \{1,1,1\}),
\] 
where each bundle is represented by the values of the items it contains. Note that the $3$-maximin share is
\begin{align*}
    \mu(3,M) = 3 = \min\{ 2 + 1,\ 2 + 1,\ 2 + 1 \}.
\end{align*}
Hence $X$ is not an MMS allocation, as agent $1$ receives value $2 < 3$. 

We claim, however, that $X$ is PMMS, and therefore EPMMS. Since agent $2$ does not envy agents $1$ and $3$, and agent $3$ does not envy agent $1$, the PMMS condition is automatically satisfied for these pairs (Proposition~\ref{prop:ef_implies_pmms}). For the remaining pairs, we verify directly:
\begin{align*}
    v(X_1) &= 2 = \mu(2, X_1 \cup X_2) = \mu(2, \{2,2,2\}), \\
    v(X_1) &= 2 = \mu(2, X_1 \cup X_3) = \mu(2, \{2,1,1,1\}), \\
    v(X_3) &= 3 = \mu(2, X_2 \cup X_3) = \mu(2, \{2,2,1,1,1\}).
\end{align*}
This completes the proof of the first part of the observation.

$(\text{MMS} \nRightarrow \text{EPMMS})$. 
Consider an instance with $9$ items and agents having identical bivalued valuations that assign value $5$ to two items and value $3$ to the remaining seven. Consider the allocation
\[
X = (X_1, X_2, X_3) = (\{3,3,3\}, \{5,5\}, \{3,3,3,3\}),
\] 
where bundles are represented by the values of the items they contain. Observe that the $3$-maximin share equals
\begin{align*}
    \mu(3,M) = 9 &= \min\{ 3 + 3 + 3,\ 5 + 5,\ 3 + 3 + 3 + 3 \} \\
    &= \min\{ 3 + 3 + 3,\ 5 + 3 + 3,\ 5 + 3 + 3 \}.
\end{align*}
Therefore, $X$ is an MMS allocation, since each agent receives value at least $9$.

However, $X$ is not an EPMMS allocation. Suppose, for the sake of contradiction, that there exists an allocation $X' = (X'_1, X'_2, X'_3)$ such that $X'_2 = X_2$ and agent $2$ does not PMMS-envy any other agent under $X'$. Notice that either $X'_1$ or $X'_3$ must contain at least four items of value $3$. Without loss of generality, suppose this holds for $X'_1$. Then
\[
    \mu(2, X'_1 \cup X'_2) \geq \mu(2, \{3,3,3,3\} \cup \{5,5\}) = 11 = \min\{5 + 3 + 3,\ 5 + 3 + 3\}.
\]
Hence agent $2$ PMMS-envies agent $1$, a contradiction. We conclude that MMS does not imply EPMMS.
\end{proof}

We now show that EPMMS is incomparable to EFX.

\begin{observation}
Neither EPMMS nor EFX implies the other, even when agents have identical bivalued valuations (a special case of additive valuations).
\end{observation}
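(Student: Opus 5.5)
We need two explicit counterexamples, each with identical bivalued valuations. We follow the pattern of Observation~\ref{MMS_and_EPMMS_incomparable}, using three agents and writing bundles as multisets of item values.

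\textbf{(EPMMS $\nRightarrow$ EFX).} The idea is to exhibit an allocation that is PMMS, hence EPMMS, but not EFX. Since PMMS implies EFX only for non-degenerate valuations, and bivalued valuations are highly degenerate, ties must be what breaks the implication. The natural mechanism is zero-valued items: take $a=0$, $b=1$, with $X_1=\{1\}$ and $X_2=\{1,0\}$. Agent~1 does not PMMS-envy agent~2, since $\mu(2,\{1,1,0\})=1=v(X_1)$. But she does EFX-envy agent~2, because removing the $0$ item leaves value $1$ and does not strictly remove envy... and here the outcome hinges on the convention. With the paper's non-strict definition $v_i(X_i)\ge v_i(X_j\setminus\{g\})$, removing the $0$ gives $1\ge 1$, which holds.

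So a better choice is $X_2=\{1,1\}$ with removal of a $1$. That gives $1\ge 1$ and is fine too. Instead I will try $X_1=\{1\}$ and $X_2=\{1,1,0\}$. Here removing the zero-valued item leaves $2>1$, so EFX fails, while $\mu(2,\{1,1,1,0\})=1$, so PMMS holds. I will take $X_3$ to be a large bundle such as $\{1,1,0\}$ as well, or something the other agents cannot envy in the PMMS sense, and then check all pairs. Agents~2 and~3 envy no one, so Proposition~\ref{prop:ef_implies_pmms} covers them; agent~1's two pairs are checked directly. If zero values are disallowed, I would instead look for an example with $a>0$ where PMMS envy vanishes through ties while EFX removal of the cheapest item still leaves envy. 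Something like $X_1=\{2,1\}$ against $X_2=\{2,2\}$ fails EFX but has $\mu(2,\{2,2,2,1\})=3$, so PMMS holds there too.

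\textbf{(EFX $\nRightarrow$ EPMMS).} Here we need an EFX allocation in which some agent $i$ has no certificate. Every certificate must keep $X_i$ fixed, so the goal is a bundle that is EFX-acceptable but so small that any repartition of the remaining items forces a PMMS violation. This mirrors the MMS example in Observation~\ref{MMS_and_EPMMS_incomparable}. I will try values $\{5,5,3,\ldots\}$ with $X_2=\{5,5\}$ (value $10$) and the $3$'s split so that EFX holds ($10\ge 9$ after removing one $3$ from a bundle of four $3$'s). Then, as in that proof, in every certificate some other bundle has at least four $3$'s, giving $\mu(2,\{5,5,3,3,3,3\})=11>10$. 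Concretely, I will take seven $3$'s split as $X_1=\{3,3,3\}$ and $X_3=\{3,3,3,3\}$, which is exactly the earlier allocation. I will verify it is EFX: agents~1 and~3 compare against $\{5\}$ after a removal, and agent~1 compares $9$ against $9$. The EPMMS failure argument is then identical to the one already given.

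The main obstacle is the first direction: finding ties that defeat EFX while preserving PMMS under the paper's weak-inequality EFX definition. I expect to rely on a zero-valued item (allowed since $a\ge 0$) and to check every pair carefully.
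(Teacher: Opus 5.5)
Both of your final counterexamples are valid, but in each direction you take a different route from the paper.

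For EPMMS $\nRightarrow$ EFX, the paper uses three agents and two items worth $1$ each, with $X=(\{1,1\},\emptyset,\emptyset)$. That allocation is not even PMMS. It is EPMMS only because agents $2$ and $3$ may imagine agent $1$'s items split between the other two agents, so each compares $\emptyset$ against a singleton, where $\mu(2,\{1\})=0$.

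Your allocation $X_1=\{1\}$, $X_2=X_3=\{1,1,0\}$ (with $a=0$, $b=1$) is instead genuinely PMMS. Indeed $\mu(2,\{1,1,1,0\})=1=v(X_1)$, and agents $2,3$ are envy-free. It fails EFX once the zero item is deleted. So you prove the stronger fact that PMMS does not imply EFX here.

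The price is a zero-valued good, and that is unavoidable on your route. If $v(g)>0$ and $v(X_j\setminus\{g\})>v(X_i)$, the split $(X_i\cup\{g\},X_j\setminus\{g\})$ gives $\mu(2,X_i\cup X_j)>v(X_i)$. So PMMS implies EFX whenever all goods have positive value. Your fallback for $a>0$ therefore cannot succeed. The instance you name there is EFX anyway, since $v(\{2,1\})=3\ge v(\{2\})$.

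The paper's example exploits the epistemic relaxation itself and needs no zero item. It still works after perturbing the values to $1$ and $1+\epsilon$, i.e., on a non-degenerate bivalued instance, matching the setting of Figure~\ref{fig:implications}. Your approach cannot reach that setting, because non-degeneracy forces every good to have positive value.

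For EFX $\nRightarrow$ EPMMS, the paper takes two agents and $(\{3,2,2\},\{3,2\})$. With two agents EPMMS coincides with PMMS, so $\mu=6>5$ finishes without enumerating certificates. You instead reuse the three-agent allocation $(\{3,3,3\},\{5,5\},\{3,3,3,3\})$ with the no-certificate argument for agent $2$. Your EFX check is correct; the tight comparisons are $9\ge 9$ and $10\ge 9$ against $X_3$ minus a $3$. Your example even shows that EFX and MMS together do not imply EPMMS, at the cost of a somewhat longer argument.
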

\begin{proof}
    $(\text{EPMMS} \nRightarrow \text{EFX})$
   Consider an instance with $2$ items and $3$ agents, where every agent values both items at $1$. Consider the allocation
    \[
        X = (X_1, X_2, X_3) = (\{1,1\}, \emptyset, \emptyset),
    \]
    where bundles are represented by the values of the items they contain. This allocation is not EFX, since agents $2$ and $3$ envy agent $1$ even after removing any single good. However, $X$ is EPMMS: both agents $2$ and $3$ could hypothetically repartition the two items assigned to agent $1$, splitting them equally between the remaining agents to obtain a PMMS allocation.

    $(\text{EFX} \nRightarrow \text{EPMMS})$
    Consider an instance with $5$ items and two agents having identical additive valuations that assign value $3$ to two items and value $2$ to the remaining three. Consider the allocation
    \[
        X = (X_1, X_2) = (\{3,2,2\}, \{3,2\}),
    \]
    where bundles are represented by the values of the items they contain. This allocation is EFX, since agent $2$ does not envy agent $1$ after the removal of any single item. However, $X$ is not EPMMS: in the two-agent case, EPMMS coincides with PMMS, and 
    \[
        \mu(2, \{3,2,2\} \cup \{3,2\}) = \min\{ 3 + 3,\ 2 + 2 + 2\} = 6,
    \]
    which is strictly larger than $3 + 2 = 5$.
\end{proof}

\section{Separation between EGMMS and EPMMS}\label{sec:sep_egmms_epmms}

In this section, we show that although EGMMS directly implies both EPMMS and MMS, the converse does not hold. In fact, we prove a stronger separation: even simultaneously satisfying MMS and PMMS does not imply EGMMS.

\begin{proposition}\label{prop:mms_and_pmms_does_not_imply_egmms}
There exists an instance with identical additive valuations admitting an allocation that is both MMS and PMMS, yet \emph{not} EGMMS.
\end{proposition}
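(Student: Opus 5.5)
The plan is to exhibit an explicit small instance and verify it by hand. The guiding observation is that with identical valuations and at most three agents, an allocation that is both PMMS and MMS is automatically EGMMS. Indeed, the allocation $X$ itself serves as every agent's certificate: the groups $J$ of size two are covered by PMMS, and the grand coalition is covered by MMS. So the instance must have $n \ge 4$ agents, and the failure must come from a group $J$ of size three (or, more generally, of intermediate size).

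I will therefore take $n=4$ agents with a common additive valuation $v$ and look for an allocation $X$ with the following properties. First, $v(X_1) \ge \mu(4,M)$ and every bundle is at least $\mu(4,M)$, so $X$ is MMS. Second, for every pair $i,j$ we have $v(X_i)\ge \mu(2,X_i\cup X_j)$, so $X$ is PMMS. Third, agent~1 has a \emph{small, lumpy} bundle, say a single item of value $c$. A singleton bundle makes PMMS easy to check against lumpy partners, for instance partners holding one large item, or two items that cannot be split evenly together with $c$. Finally, the remaining items $M\setminus X_1$ should be such that every partition $(C_2,C_3,C_4)$ of them has two parts $C_a,C_b$ with $\mu(3, X_1\cup C_a\cup C_b) > c$.

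To find such an instance, I will start from a multiset of items of a few distinct values, for example a handful of items of values in $\{1,2,3,\ldots\}$ plus one item of value $c$. I will tune the values so that $\mu(4,M)\le c$, which keeps agent~1 MMS-satisfied. I will also arrange that the union of any two of the other bundles together with $c$ contains enough medium-sized items to form three bundles each worth more than $c$. The actual allocation $X$ is then one specific partition of $M\setminus X_1$ that keeps PMMS and MMS intact. It should do this by concentrating value in a way that no pair $X_1\cup X_j$ can be split evenly above $c$, while some triple $X_1\cup X_j\cup X_k$ can be split into three parts each exceeding $c$.

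The verification then has three parts. The first two are routine finite computations: checking MMS by computing $\mu(4,M)$ and all bundle values, and checking PMMS for all $\binom{4}{2}$ pairs, where the checks for agents that do not envy are immediate via Proposition~\ref{prop:ef_implies_pmms}. The third part is the main obstacle: showing that \emph{no} certificate $Y^1$ with $Y^1_1=X_1$ works. I would argue by pigeonhole on the structure of any partition $(C_2,C_3,C_4)$ of $M\setminus X_1$. First I would show that some pair $C_a,C_b$ must together contain a prescribed sub-multiset of items, for example enough medium items. Then I would exhibit an explicit three-way split of $X_1\cup C_a\cup C_b$ with all parts exceeding $c$, giving $\mu(3,\cdot)>v(X_1)$. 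If a clean pigeonhole argument proves elusive, I would keep the instance small, around eight items, so that the case analysis over partitions, organized by how the large items are distributed, remains short and fully checkable.
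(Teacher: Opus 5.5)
\textbf{There is a genuine gap.} As written, this is a search strategy rather than a proof. The proposition is witnessed by one explicit instance together with a verification that some agent has no certificate, and you supply neither.

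\textbf{The concrete shape you propose cannot succeed.} You suggest that agent~$1$ hold a single item. For four agents with identical additive $v$, if $X$ is PMMS and MMS and $X_1=\{g\}$ with $c=v(g)$, then agent~$1$ \emph{always} has an EGMMS certificate. To see this, note that if $X$ itself is not a certificate, some triple fails. So there is a partition $(\{g\}\cup S,\,B,\,D)$ of $\{g\}\cup X_2\cup X_3$ with all three parts worth more than $c$, hence $v(S)>0$. There are two cases.

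If $v(S)+v(X_4)\le c$:
the split $(B,\,D\cup S)$ gives $\mu(2,X_2\cup X_3)>c$, so PMMS for agents $2$ and $3$ gives $v(X_2),v(X_3)>c$. Moving any positive-valued $x\in X_j$ into $X_4$, PMMS for agent~$4$ gives $v(X_j\setminus\{x\})\le v(X_4)$. Hence $v(x)> c-v(X_4)\ge v(S)$ for every positive-valued $x\in X_2\cup X_3$. This is impossible for a positive-valued $x\in S$.

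If $v(S)+v(X_4)>c$:
shrink $B$ and $D$ to inclusion-minimal subsets $B',D'$ of value $>c$, and let $T$ be the rest of $M\setminus X_1$, so $v(T)>c$. Then $(X_1,B',D',T)$ is a certificate.
\begin{itemize}
\item $B'$ and $D'$ pass the pairwise test by minimality.
\item The grand coalition is covered by MMS.
\item A failing triple, plus the remaining part, would give four parts of $M$ worth more than $c$.
\item A failing pair for $T$, i.e.\ some $x\in T$ with $v(x)>0$ and $v(T\setminus\{x\})>c$, would give the four parts $B',D',T\setminus\{x\},\{g,x\}$, each worth more than $c$.
\end{itemize}
The last two both contradict $\mu(4,M)\le c$. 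So the very feature that makes PMMS ``easy to check'' for a singleton also makes every certificate test too weak to fail.

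\textbf{What is missing.} The failing agent must hold at least two items, and the witnessing triple must separate them, so that the pairwise tests can recombine her own items. The paper uses item values $5,4,4,3,2,2,2,1$ with $X_1=\{1,4\}$, $X_2=\{5\}$, $X_3=\{2,2,2\}$, $X_4=\{3,4\}$.
\begin{itemize}
\item MMS holds: the total is $23$, so $\mu(4,M)\le 5$, and every bundle is worth at least $5$.
\item PMMS is checked pair by pair.
\item Pairwise tests such as the split $(\{4,a\},\{1,5\})$ force the certificate $(\{5\},\{3,4\},\{2,2,2\})$. Then $\{1,2,2,2,3,4,4\}$ splits as $\{2,4\},\{2,4\},\{1,2,3\}$, all of value $6>5$.
\end{itemize}

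\textbf{Your triple-only criterion works on this instance and shortens the last step.} The set $M\setminus X_1=\{5,4,3,2,2,2\}$ has total $18$ but admits no split into three parts of value $6$, since the $5$ would need a $1$. Hence every certificate has a part $C_d$ of value at most $5$, i.e.\ one of $\emptyset,\{5\},\{4\},\{3\},\{2\},\{3,2\},\{2,2\}$. In each case $M\setminus C_d$ contains three disjoint parts of value at least $6$: either $\{5,1\},\{4,2\}$ and the leftover, or $\{4,2\},\{4,2\},\{1,2,3\}$ when $C_d=\{5\}$.
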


\begin{proof}
  $(\text{PMMS} \land \text{MMS} \nRightarrow \text{EGMMS})$
Consider four agents with identical additive valuations over eight items of values
\[
5,\;4,\;4,\;3,\;2,\;2,\;2,\;1,
\]
and the allocation
\[
X_1=\{1,4\},\qquad X_2=\{5\},\qquad X_3=\{2,2,2\},\qquad X_4=\{3,4\},
\]
where each bundle is written as the multiset of item values.

We first argue that the allocation is MMS. The total value is $23$, so the MMS value is at most $\lfloor 23/4\rfloor = 5$. Every agent receives value at least $5$, so the allocation is MMS.

Next, we argue that the allocation is PMMS. Since the valuations are additive, for any pair $i,j$ with $v(X_i)\ge v(X_j)$ the PMMS condition for agent~$i$ holds (Proposition~\ref{prop:ef_implies_pmms}). It therefore suffices, for each pair, to verify the condition for the agent with the smaller bundle:
\begin{align*}
\mu(2,\, X_1\cup X_3) &= \mu(2,\{1,2,2,2,4\}) \le \lfloor 11/2\rfloor = 5 = v(X_1),\\
\mu(2,\, X_1\cup X_4) &= \mu(2,\{1,3,4,4\}) = \min\{4+1,\;4+3\} = 5 = v(X_1),\\
\mu(2,\, X_2\cup X_3) &= \mu(2,\{2,2,2,5\}) \le \lfloor 11/2\rfloor = 5 = v(X_2),\\
\mu(2,\, X_2\cup X_4) &= \mu(2,\{3,4,5\}) = \min\{3+4,\;5\} = 5 = v(X_2),\\
\mu(2,\, X_3\cup X_4) &= \mu(2,\{2,2,2,3,4\}) \le \lfloor 13/2\rfloor = 6 = v(X_3).
\end{align*}
Hence the allocation is PMMS.

We now argue that the allocation is not EGMMS. We show that agent~$1$ admits no EGMMS-certificate. Suppose, for contradiction, that
\[
Y = (Y_1, Y_2, Y_3, Y_4),\qquad Y_1 = X_1 = \{1,4\},
\]
is an EGMMS-certificate for agent~$1$; that is, $Y$ is a partition of all items with $Y_1 = X_1$, and for every subgroup $S$ with $1 \in S$, it holds that
\[
v(X_1)\;\ge\;\mu\bigl(|S|,\; \textstyle\bigcup_{j\in S} Y_j\bigr).
\]

We claim that the pairwise (i.e., $|S|=2$) conditions force the partition of the remaining items
$\{2,2,2,3,4,5\}$ uniquely (up to relabeling of $Y_2,Y_3,Y_4$). 

Assume without loss of generality that $5\in Y_2$. We claim
$Y_2=\{5\}$. If instead some item of value $a\ne 5$ also belongs to
$Y_2$, then $a\ge 2$, and the partition $(\{4,a\}, \{1,5\})$ of
$Y_1\cup Y_2$ shows
\[
\mu(2,\, Y_1\cup Y_2) \;\ge\; \min\{4+a,\,5+1\} \;\ge\; 6 \;>\; 5 \;=\; v(X_1),
\]
a contradiction. Hence $Y_2=\{5\}$.

Next, assume without loss of generality that $3\in Y_3$. We claim that
$Y_3 = \{3,4\}$.

Suppose first that $|Y_3|\ge 3$. Then $Y_3$ contains $3$ and two further
items $a,b$, each of value at least $2$. The partition
$(\{4,a\},\{1,3,b\})$ of $Y_1\cup Y_3$ gives
\[
\mu(2,\, Y_1\cup Y_3) \;\ge\; \min\{4+a,\,1+3+b\} \;\ge\; 6 \;>\; v(X_1),
\]
a contradiction. Suppose instead that $|Y_3|=1$. Then $Y_3=\{3\}$ and
$Y_4=\{2,2,2,4\}$, so the partition $(\{4,2,1\},\{2,2,4\})$ of
$Y_1\cup Y_4$ gives
\[
\mu(2,\, Y_1\cup Y_4) \;\ge\; \min\{4+2+1,\,2+2+4\} \;=\; 7 \;>\; v(X_1),
\]
a contradiction. Finally, suppose $Y_3=\{3,2\}$. Then $Y_4=\{2,2,4\}$,
and the partition $(\{4,2,1\},\{2,4\})$ of $Y_1\cup Y_4$ gives
\[
\mu(2,\, Y_1\cup Y_4) \;\ge\; \min\{4+2+1,\,2+4\} \;=\; 6 \;>\; v(X_1),
\]
a contradiction.
 
This confirms that the unique certificate (up to permutation) is
\[
Y_2=\{5\},\qquad Y_3=\{3,4\},\qquad Y_4=\{2,2,2\}.
\]

Now consider the subgroup $S=\{1,3,4\}$. The corresponding union is
\[
Y_1\cup Y_3\cup Y_4 \;=\; \{1,2,2,2,3,4,4\},
\]
which can be partitioned into three bundles of value $6$ each:
\[
\{2,4\},\qquad \{2,4\},\qquad \{1,2,3\}.
\]
Therefore $\mu(3,\, Y_1\cup Y_3\cup Y_4)\ge 6 > 5 = v(X_1)$, violating the groupwise requirement for $S$.

Since the unique pairwise-feasible certificate fails the groupwise condition, no EGMMS-certificate for agent~$1$ exists, and the allocation is not EGMMS.
\end{proof}

\section{Relations Between the Valuation Classes}\label{sec:valuation_classes}

\subsection{GMMS-Feasible Valuations Are A Strict Subclass of PMMS-feasible Valuations}

We prove that PMMS-feasible valuations form a strict subclass of GMMS-feasible valuations. That every GMMS-feasible valuation is also PMMS-feasible follows directly from the definitions. The converse, however, does not hold, as we demonstrate below.

\begin{proposition}\label{thm:mmsfandgmmsf}
    There exists a monotone PMMS-feasible valuation function that is not GMMS-feasible.
\end{proposition}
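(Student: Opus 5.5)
The approach is to construct an explicit small counterexample. One inclusion is free: PMMS-feasibility is the case $k=2$ of Definition~\ref{def:gmmsfeasible}, so every GMMS-feasible valuation is PMMS-feasible. The work is to find a monotone $v$ on a small ground set $M$ and a set $S\subseteq M$ with two $3$-partitions $A=(A_1,A_2,A_3)$ and $B=(B_1,B_2,B_3)$ such that
\[
\max_j v(B_j) < \min_j v(A_j),
\]
while the inequality of Definition~\ref{def:mmsfeasible} holds for every $S'\subseteq M$ and every pair of $2$-partitions of $S'$.

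Two structural facts guide the choice. First, for additive or symmetric valuations (those with $v(T)=f(|T|)$) GMMS-feasibility always holds, since some $B_j$ has at least average size or value. So $v$ must be genuinely asymmetric between the bundles of $A$ and those of $B$. Second, monotonicity forces every $A_j$ to meet at least two of the $B$-bundles: if $A_j\subseteq B_{j'}$, then $v(B_{j'})\ge v(A_j)$. So the $A$- and $B$-bundles must be "transversal" to each other. The natural candidate is therefore a Latin-square arrangement: $9$ items in a $3\times 3$ grid, the $A$-bundles the rows, and the $B$-bundles three pairwise disjoint transversal triples meeting every row once. We set $v(T)=0$ for $|T|\le 2$ and give large sets ($|T|\ge 4$) high value, which guarantees monotonicity. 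The rows get a high value and the $B$-triples value $0$; the values of the remaining triples are the free parameters.

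The PMMS-feasibility check is a case analysis on $|S'|$. If both parts of a $2$-partition $(B'_1,B'_2)$ are "low", each has size at most $3$, so $|S'|\le 6$. If both parts of $(A'_1,A'_2)$ are "high", each has size at least $3$. So the only dangerous case is $|S'|=6$ with $A'$ and $B'$ both splitting $S'$ into triples.

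This dangerous case is the main obstacle. A naive $0/1$ rule for triples fails: a $6$-set that is the union of two low triples can often be re-split into two triples that are not low, by exchanging elements between the two low triples. I would handle this with graded values on triples instead of a threshold. Concretely, I would assign each triple a value according to how many rows it meets and whether it is a row. Rows get value $3$, triples meeting exactly two rows get value $2$, and the designated $B$-triples get value $1$. Among the other transversals, whichever value keeps the inequality true gets assigned; sets of size at least $4$ get value $4$. The point is that whenever both parts of one $2$-split of a $6$-set are $B$-triples, the re-split triples have value at most that of the better $B$-triple.

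If this hand assignment still leaves a violating $6$-set, the fallback is an exhaustive search over $0$--$3$ valued assignments on triples of the $9$-item grid, which is finite. Since only $6$-subsets and their splits into triples need to be checked, the search stays small. Any assignment it finds can then be written down explicitly and verified case by case.
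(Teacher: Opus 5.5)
Your framework matches the paper's: nine items in a $3\times 3$ grid, rows as the $A$-bundles, three disjoint transversals as the $B$-bundles, $v=0$ on sets of size at most $2$, the top value on sets of size at least $4$, and the reduction showing that a violation of Definition~\ref{def:mmsfeasible} can only occur on a $6$-set split into triples in two ways. But the proposition rests entirely on the values assigned to triples, and your assignment is not PMMS-feasible. Take rows $\{1,2,3\},\{4,5,6\},\{7,8,9\}$ and designated triples $\{1,4,7\},\{2,5,8\},\{3,6,9\}$, and let $S'=\{1,4,7\}\cup\{2,5,8\}$. The split $(\{1,4,7\},\{2,5,8\})$ has maximum value $1$. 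The split $(\{1,2,4\},\{5,7,8\})$ consists of two triples that each meet exactly two rows, so its minimum value is $2>1$. The same failure occurs for any three disjoint transversals, because the union of two of them always contains exactly two items of each row. So your key claim, that re-split triples are worth at most the better $B$-triple, is false for your own rule. It fails no matter how you fill in the remaining transversals.

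Grading cannot fix this. Suppose $v$ is monotone and PMMS-feasible, and let $w(T)=1$ if $v(T)\ge\theta$ and $w(T)=0$ otherwise. Then $w$ is monotone and PMMS-feasible: a high--high split against a low--low split for $w$ would be a violating pair for $v$. Choosing $\theta=\min_j v(A_j)$ keeps $\min_j w(A_j)=1>0=\max_j w(B_j)$, so $0/1$ values on triples are without loss of generality, which is what the paper uses.

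What is actually missing is a concrete family of low triples with three properties: it contains the three transversals, it contains no row, and no $6$-set admits both a low--low and a high--high split into triples. The obstruction above already forces a low triple in each of the six splits of $\{1,4,7\}\cup\{2,5,8\}$ into two-row triples, so many two-row triples must be low. The paper's explicit list of $41$ low triples includes, for example, $\{1,2,5\}$ and $\{5,7,8\}$, and the paper certifies this list by an exhaustive check of all $6$-subsets.

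Your fallback does not replace this. Checking a single candidate is cheap, but the search ranges over $2^{84}$ (or $4^{84}$) assignments. More importantly, announcing a search does not show that a valid assignment exists. The statement is existential, so the proof needs an explicit valuation or an existence argument, and the proposal supplies neither.
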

\begin{proof}
$(\text{PMMS-feasibility} \nRightarrow \text{GMMS-feasibility})$
Consider a set of $9$ items. Define a valuation function $v$ such that $v(S) = 0$ for all subsets $S$ with $|S| \le 2$ and $v(S) = 1$ for all subsets with $|S| \ge 4$.
For subsets of size $3$, the value is defined as follows:
\begin{align*}
   v(S) = 0 \text{ for } S \in \{
&\{1,2,5\}, \{1,2,9\}, \{1,3,4\}, \{1,3,5\}, \{1,3,6\}, \{1,3,7\}, \{1,3,8\},  \\
&\{1,3,9\}, \{1,4,5\}, \{1,4,7\}, \{1,4,8\}, \{1,4,9\}, \{1,5,6\}, \{1,5,7\},  \\
&\{1,5,8\}, \{1,5,9\}, \{1,6,9\}, \{1,7,8\}, \{1,7,9\}, \{1,8,9\}, \{2,3,5\},  \\
&\{2,5,8\}, \{2,5,9\}, \{3,4,5\}, \{3,4,9\}, \{3,5,6\}, \{3,5,7\}, \{3,5,8\},  \\
&\{3,5,9\}, \{3,6,9\}, \{3,7,9\}, \{3,8,9\}, \{4,5,7\}, \{4,5,8\}, \{4,5,9\}, \\
& \{4,8,9\},  \{5,6,8\}, \{5,6,9\}, \{5,7,8\}, \{5,7,9\}, \{5,8,9\} \}
\end{align*}
and $v(S) = 1$ for all other subsets $S$ of size $3$.
This valuation is not GMMS-feasible, since:
\begin{align*}
    \min \{ v(\{1,2,3\}), v(\{4,5,6\}), v(\{7,8,9\}) \} &= 1, \\ \max \{ v(\{1,4,7\}), v(\{2,5,8\}), v(\{3,6,9\}) \} &= 0
\end{align*}
It can, however, be verified that $v$ is PMMS-feasible. To check this, we verify that there is no pair of partitions $(X,Y)$ and $(A,B)$ satisfying $X \cap Y = \emptyset$, $A \cap B = \emptyset$, $X \cup Y = A \cup B$, and $\min\{v(X), v(Y)\} > \max\{v(A), v(B)\}$. The critical case to inspect is when all sets have size $3$, i.e. $|X| = |Y| = |A| = |B| = 3$, because for $\min\{v(X), v(Y)\} > \max\{v(A), v(B)\}$ to hold, we must have $v(X) = v(Y) = 1$ (implying $|X|, |Y| \ge 3$) and $v(A) = v(B) = 0$ (implying $|A|, |B| \le 3$). The only possible configuration satisfying these size constraints is when all sets have size $3$, and these cases can be verified computationally. There are $\binom{9}{6} = 84$ subsets of size $6$ in a $9$-element set. Each such subset contains $\binom{6}{3} = 20$ subsets of size $3$, which form $\binom{20}{2} = 190$ distinct pairs to verify. Hence, there are in total $84 \times 190 = 15{,}960$ cases to check, which we have verified computationally. 
\end{proof}

\subsection{Nice Cancelable Valuations Are A Strict Subclass of GMMS-Feasible Valuations}

We prove that GMMS-feasible valuations form a strict superset of nice cancelable valuations, a class known to include additive, budget-additive, unit-demand, and multiplicative valuations \cite{BergerCFF2021}. We begin by defining the class of nice cancelable valuations.

\begin{definition}[Nice Cancelable Valuations, \cite{BergerCFF2021}]
A valuation $v_i : 2^M \to \mathbb{R}_{\geq 0}$ is \emph{cancelable} if $v_i(S \cup \{g\}) > v_i(T \cup \{g\})$ implies $v_i(S) > v_i(T)$ for all $S, T \subseteq M$ and every $g \in M \setminus (S \cup T)$. A valuation $v_i$ is \emph{nice cancelable} if it is cancelable and there exists a cancelable valuation $v'_i : 2^M \to \mathbb{R}_{\geq 0}$ that is \emph{non-degenerate}, meaning $v'_i(S) \neq v'_i(T)$ for all distinct $S, T \subseteq M$, and \emph{respects} $v_i$, meaning $v_i(S) > v_i(T)$ implies $v'_i(S) > v'_i(T)$ for all $S, T \subseteq M$.
\end{definition}

Next, we show that nice-cancelability implies GMMS-feasibility.

\begin{proposition}
Every nice cancelable valuation is GMMS-feasible.
\end{proposition}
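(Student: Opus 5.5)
Let $v$ be nice cancelable and let $v'$ be a non-degenerate cancelable valuation that respects $v$. The plan is to reduce to the case of $v'$ and then run a contradiction argument by exchanging items. First I would show that it suffices to prove GMMS-feasibility for $v'$. Suppose $\max_\ell v(B_\ell) < \min_\ell v(A_\ell)$ for two $k$-partitions $A, B$ of some $S$. Then $v(A_\ell) > v(B_{\ell'})$ for all $\ell, \ell'$. Because $v'$ respects $v$, this gives $v'(A_\ell) > v'(B_{\ell'})$ for all $\ell, \ell'$, so $\max_\ell v'(B_\ell) < \min_\ell v'(A_\ell)$. Hence a violation for $v$ produces a violation for $v'$. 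From here on I work with a cancelable, non-degenerate valuation $w$ and assume for contradiction that $w(A_\ell) > w(B_{\ell'})$ for every pair $\ell, \ell'$.

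The key tool is an extension of cancelability to sets. If $w(S \cup R) > w(T \cup R)$ and $R$ is disjoint from $S \cup T$, then $w(S) > w(T)$. This follows by removing the items of $R$ one at a time. The contrapositive, together with non-degeneracy, says that if $w(S) < w(T)$ then $w(S \cup R) < w(T \cup R)$, as long as $S \ne T$, which holds whenever $S \cup R \neq T \cup R$.

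With this, I would compare $A$ and $B$ through their common refinement $C_{\ell\ell'} = A_\ell \cap B_{\ell'}$. The argument is a minimal-counterexample induction on $|S|$ (or on $k$). If some $A_\ell$ equals some $B_{\ell'}$, we get $w(A_\ell) > w(B_{\ell'}) = w(A_\ell)$, an immediate contradiction. If some $C_{\ell\ell'}$ is nonempty, I would try to cancel a shared piece. For example, when $k=2$ we have $A_1 = C_{11} \cup C_{12}$ and $B_1 = C_{11} \cup C_{21}$. Cancelling $C_{11}$ from $w(A_1) > w(B_1)$ gives $w(C_{12}) > w(C_{21})$. Cancelling $C_{22}$ from $w(A_2) > w(B_2)$ gives $w(C_{21}) > w(C_{12})$. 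These contradict each other. This settles PMMS-feasibility and is the template for the general case.

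For general $k$ I would build a directed graph on the indices and look for a cycle of strict inequalities that cancels down to $x > x$. The main obstacle is that in the general case pieces are not shared as cleanly, since each $A_\ell$ meets many $B_{\ell'}$. I would first try to find a permutation $\sigma$ such that every $C_{\ell\sigma(\ell)}$ is nonempty; this is Hall's condition on the bipartite intersection graph, which should follow from a counting argument. I would then cancel along such matchings and combine the $2k$ strict inequalities, using the set-cancellation form repeatedly, into a cyclic contradiction.

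If that proves too delicate, my fallback is to pass to the lexicographically ordered "ranking" view. I would order all bundles by $w$ and show that the $k$ bundles of $A$ cannot all lie strictly above all $k$ bundles of $B$. The idea is that cancelability makes $w$ behave like a linear order compatible with disjoint union, so that if $w(A_\ell) > w(B_{\ell'})$ for every pair, then $w(\bigcup_\ell A_\ell) > w(\bigcup_{\ell'} B_{\ell'})$. Both unions equal $S$, so this would be a contradiction. To prove that union inequality, I would combine pairs by induction on $k$, using the extended cancelability to get $w(X \cup Y) > w(X' \cup Y')$ from $w(X) > w(X')$ and $w(Y) > w(Y')$ with $X, Y$ disjoint and $X', Y'$ disjoint. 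This step is the crux, because the sets $X, Y$ and $X', Y'$ may overlap across the two sides, and I expect that overlap to need care.
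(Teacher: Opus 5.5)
\textbf{There is a genuine gap: the case of general $k$ is not proved.} Your reduction from $v$ to $v'$ is correct, and so is your $k=2$ argument.

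\emph{The primary plan does not work as stated.} Hall's condition for the intersection graph is not a counting fact. For example, $A=(\{a\},\{b\},\{c,d\})$ and $B=(\{a,b\},\{c\},\{d\})$ violate it for $L=\{1,2\}$. Any proof of Hall's condition would therefore have to use the valuation. Even given a matching $\sigma$, cancelling the shared cell from $w(A_\ell)>w(B_{\sigma(\ell)})$ only compares $A_\ell\setminus B_{\sigma(\ell)}$ with $B_{\sigma(\ell)}\setminus A_\ell$. For $k\ge 3$ these are unions of several cells, so they do not close into a cycle the way $C_{12},C_{21}$ do when $k=2$.

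\emph{The fallback is the right idea but stops at its crux.} It is essentially the paper's route: the paper proves $v'(A_1\cup\dots\cup A_\ell)\ge v'(B_1\cup\dots\cup B_\ell)$ by induction on $\ell$. But you explicitly leave open the overlap problem, and resolving it is the whole content of the proof.

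\textbf{The missing idea is a three-piece decomposition.} Suppose $w(X)>w(X')$ and $w(Y)>w(Y')$, with $X\cap Y=\emptyset=X'\cap Y'$.
\begin{enumerate}
\item Pad the first inequality with $Y\setminus X'$, which is disjoint from both $X$ and $X'$.
\item Pad the second inequality with $X'\setminus Y$, which is disjoint from both $Y$ and $Y'$.
\item Since $X'\cup(Y\setminus X')=X'\cup Y=Y\cup(X'\setminus Y)$, the middle terms coincide, and chaining gives
\[
w\bigl(X\cup(Y\setminus X')\bigr)>w(X'\cup Y)>w\bigl(Y'\cup(X'\setminus Y)\bigr).
\]
\item Add $Y\cap X'$ to both sides. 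It lies in $Y$ and in $X'$, so it is disjoint from both outer sets. This yields $w(X\cup Y)>w(X'\cup Y')$.
\end{enumerate}
Each padding step preserves strictness by your own tool (cancelability plus non-degeneracy), because padding distinct sets by a common disjoint set keeps them distinct. Now apply this with $X=A_1\cup\dots\cup A_\ell$, $X'=B_1\cup\dots\cup B_\ell$, $Y=A_{\ell+1}$, $Y'=B_{\ell+1}$. Only the matched inequalities $w(A_\ell)>w(B_\ell)$ are needed, and at $\ell=k$ you get $w(S)>w(S)$, a contradiction.

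This is exactly the decomposition the paper uses in its inductive step. The paper carries only weak inequalities up to $\ell=k-1$. It then passes to the complements $S\setminus A_k$ and $S\setminus B_k$ and uses non-degeneracy once to flip to $v'(B_k)>v'(A_k)$. Keeping strictness throughout, as above, lets you skip that final complement step.
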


\begin{proof}
$(\text{Nice-cancelability} \Rightarrow \text{GMMS-feasibility})$
    Let $v$ be a nice cancelable valuation, and let $v'$ be a non-degenerate cancelable valuation that respects $v$. Fix any $S \subseteq M$, any $k \geq 2$, and any two partitions $A = (A_1, \ldots, A_k)$ and $B = (B_1, \ldots, B_k)$ of $S$. Suppose, for contradiction, that
    \begin{align*}
        \min\{v(A_1), \ldots, v(A_k)\} > \max\{v(B_1), \ldots, v(B_k)\}. 
    \end{align*}
    Since $v'$ respects $v$, this implies
    \begin{align}
        \min\{v'(A_1), \ldots, v'(A_k)\} > \max\{v'(B_1), \ldots, v'(B_k)\}. \label{eq:assumption}
    \end{align}

    We show by induction on $\ell$ that $v'(A_1 \cup \cdots \cup A_\ell) \geq v'(B_1 \cup \cdots \cup B_\ell)$ for all $1 \leq \ell \leq k - 1$. The base case $\ell = 1$ is immediate from Equation~\eqref{eq:assumption}. Assume the claim holds for some $1 \leq \ell < k - 1$. By Equation~\eqref{eq:assumption}, $v'(A_{\ell+1}) > v'(B_{\ell+1})$. Let $X = A_{\ell+1} \setminus (B_1 \cup \dots \cup B_{\ell})$ and $Y = (B_1 \cup \dots \cup B_{\ell}) \setminus A_{\ell+1}$.
    Applying the contrapositive of cancelability of $v'$ twice yields
    \begin{align*}
        v'(A_1 \cup \cdots \cup A_\ell) &\geq v'(B_1 \cup \cdots \cup B_\ell) \implies v'(A_1 \cup \cdots \cup A_{\ell} \cup X) \geq v'(B_1 \cup \cdots \cup B_\ell \cup X), \\
        v'(A_{\ell+1}) &\geq v'(B_{\ell+1}) \implies v'(Y  \cup A_{\ell+1}) \geq v'(Y  \cup B_{\ell+1}).
    \end{align*}
    Chaining these inequalities gives $v'(A_1 \cup \cdots \cup A_{\ell} \cup X) \geq v'(Y \cup B_{\ell+1})$. Letting $Z = A_{\ell+1} \cap (B_1 \cup \dots \cup B_{\ell})$, and applying the contrapositive of cancelability of $v'$ once more yields
    \begin{align*}
        v'(A_1 \cup \dots \cup A_{\ell+1}) = v'(A_1 \cup \cdots \cup A_{\ell} \cup X \cup Z) \geq v'(Y \cup B_{\ell+1} \cup Z) = v'(B_1 \cup \dots \cup B_{\ell+1})
    \end{align*}
    completing the induction.

    Taking $\ell = k - 1$, we obtain $v'(A_1 \cup \cdots \cup A_{k-1}) \geq v'(B_1 \cup \cdots \cup B_{k-1})$. Since $A$ and $B$ are partitions of the same set $S$, we have $A_1 \cup \cdots \cup A_{k-1} = S \setminus A_k$ and $B_1 \cup \cdots \cup B_{k-1} = S \setminus B_k$. Because $A_k \neq B_k$ (since, by Equation~\eqref{eq:assumption}, $v'(A_k) > v'(B_k)$), these two sets are distinct, and non-degeneracy of $v'$ implies strict inequality
    \begin{align}
        v'(S \setminus A_k) > v'(S \setminus B_k). \label{eq:toapply}
    \end{align}
    Writing $S \setminus A_k = (B_k \setminus A_k) \cup (S \setminus (A_k \cup B_k))$ and $S \setminus B_k = (A_k \setminus B_k) \cup (S \setminus (A_k \cup B_k))$, and applying cancelability of $v'$ to Equation~\eqref{eq:toapply} to remove the common elements of $S \setminus (A_k \cup B_k)$ one at a time, we obtain
    \[
        v'(B_k \setminus A_k) > v'(A_k \setminus B_k).
    \]
    Applying cancelability once more to add the common elements of $A_k \cap B_k$ one at a time gives
    \[
        v'(B_k) = v'((B_k \setminus A_k) \cup (A_k \cap B_k)) \geq v'((A_k \setminus B_k) \cup (A_k \cap B_k)) = v'(A_k).
    \]
    As before, $A_k \neq B_k$, so non-degeneracy of $v'$ implies strict inequality $v'(B_k) > v'(A_k)$. Since $v'$ respects $v$, this gives $v(B_k) \geq v(A_k)$, contradicting Equation~\eqref{eq:assumption}.
\end{proof}

Finally, we show that GMMS-feasibility does not necessarily imply nice-cancelability.

\begin{proposition}
    Not every GMMS-feasible valuation is nice cancelable.
\end{proposition}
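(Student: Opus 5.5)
The plan is to exhibit an explicit counterexample: a valuation that depends only on the cardinality of the bundle. Such a valuation is GMMS-feasible for a simple averaging reason, but it can easily fail to be cancelable, and a valuation that is not cancelable is in particular not nice cancelable. Concretely, I would take any ground set $M$ with $m \ge 2$ items and define
\[
v(S) = \begin{cases} 1 & \text{if } |S| \ge 2,\\ 0 & \text{if } |S| \le 1.\end{cases}
\]
This $v$ is clearly monotone and non-negative.

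\textbf{Step 1: $v$ is not cancelable.} Pick two distinct items $a,b$, and set $S=\{a\}$, $T=\emptyset$ and $g=b \notin S\cup T$. Then $v(S\cup\{g\}) = v(\{a,b\}) = 1 > 0 = v(\{b\}) = v(T\cup\{g\})$. However, $v(S)=0=v(T)$, so the implication required by cancelability fails. Hence $v$ is not cancelable, and a fortiori not nice cancelable, since nice cancelability requires cancelability of $v$ itself.

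\textbf{Step 2: $v$ is GMMS-feasible.} More generally, I would show that every valuation of the form $v(S)=f(|S|)$ with $f$ non-decreasing is GMMS-feasible. Fix $S \subseteq M$, an integer $k$, and partitions $A=(A_1,\ldots,A_k)$ and $B=(B_1,\ldots,B_k)$ of $S$. By averaging, some $A_j$ has $|A_j| \le |S|/k$ and some $B_\ell$ has $|B_\ell| \ge |S|/k$, so $|A_j| \le |B_\ell|$. Monotonicity of $f$ then gives
\[
\min_t v(A_t) \le f(|A_j|) \le f(|B_\ell|) \le \max_t v(B_t),
\]
which is exactly the condition in Definition~\ref{def:gmmsfeasible}.

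There is no real obstacle here. The only point needing care is checking that the cancelability definition quantifies over all $S,T$ and all $g \notin S\cup T$, so that the single witness $(S,T,g)$ in Step 1 suffices. Combining the two steps proves the proposition.
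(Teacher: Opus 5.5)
Your proof is correct, but it takes a different route from the paper. The paper does not construct a new example. It reuses the three-item monotone valuation of \citet[Example 1]{AkramiACGMM2022}, which prior work already showed is PMMS-feasible but not nice cancelable, and checks only the additional groupwise condition: $k=2$ follows from PMMS-feasibility, $k=1$ is trivial, and $k=3$ is a short case analysis (either some part is empty, or all parts are singletons and monotonicity finishes it). You instead give a self-contained witness, the threshold valuation with $v(S)=1$ iff $|S|\ge 2$. It fails cancelability already via $(S,T,g)=(\{a\},\emptyset,b)$, so niceness never enters. You then prove GMMS-feasibility for the whole family $v(S)=f(|S|)$ with $f$ non-decreasing, using one averaging argument that covers every $S$ and every $k$ at once. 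Your route is more elementary and more general: it works for any $m\ge 2$, needs no external example or its PMMS-feasibility proof, and needs no case split on $k$. The paper's route instead shows that the known example separating PMMS-feasibility from nice cancelability already separates the smaller GMMS-feasible class as well.
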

\begin{proof}
$(\text{GMMS-feasibility} \nRightarrow \text{nice-cancelability})$
    \citet[Example 1]{AkramiACGMM2022} provide an example of a set of items $M = \{g_1, g_2, g_3\}$ and a monotone PMMS-feasible valuation function $v: 2^M \to \mathbb{R}_{\geq 0}$ that is not nice cancelable.
    We claim that the constructed valuation also satisfies GMMS-feasibility.
    We note that the condition of Definition~\ref{def:gmmsfeasible} holds for any partition with $k=2$ directly from PMMS-feasibility, and the case $k=1$ is trivial. Consider the case $k=3$, and let $A=(A_1, A_2, A_3)$ and $B = (B_1, B_2, B_3)$ be any two partitions of a subset $S \subseteq M$.
    Either (1) one of the subsets $B_1, B_2, B_3$ is empty, or (2) each of $B_1, B_2, B_3$ is a singleton. In case (1), we have $\min\{v(B_1), v(B_2), v(B_3)\} = 0$, so the condition holds trivially. In case (2), we have $\min\{v(B_1), v(B_2), v(B_3)\} = \min\{v(g_1), v(g_2), v(g_3)\}$, and the singleton of minimal value must belong to one of the sets $A_1, A_2, A_3$, which by monotonicity has value at least that of the singleton. Hence the condition holds.
\end{proof}

\subsection{When Does Envy-Freeness Imply GMMS or PMMS?}\label{sec:ef_implies_pmms}

We first show that PMMS-feasibility is both sufficient and necessary for envy-freeness to imply PMMS, so it precisely characterizes the valuations under which this implication holds.

\begin{proposition}\label{prop:ef_implies_pmms}
When $v_i$ is an PMMS-feasible valuation, any allocation that is envy-free for agent $i$ is also PMMS for agent $i$. Conversely, when $v_i$ is not PMMS-feasible, there exists a (possibly partial) allocation that is envy-free for agent $i$ but not PMMS for agent $i$.
\end{proposition}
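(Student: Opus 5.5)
The proof has two directions, and both come almost directly from the definition of PMMS-feasibility (Definition~\ref{def:mmsfeasible}). For sufficiency, fix an allocation $X$ that is envy-free for agent $i$, and any other agent $j$. Set $S = X_i \cup X_j$. Envy-freeness gives $v_i(X_i) \ge v_i(X_j)$, so $v_i(X_i) = \max\{v_i(X_i), v_i(X_j)\}$. Now take any partition $(A_1, A_2)$ of $S$. Apply PMMS-feasibility with $B = (X_i, X_j)$, which is itself a partition of $S$. This yields
\[
v_i(X_i) = \max\{v_i(X_i), v_i(X_j)\} \ge \min\{v_i(A_1), v_i(A_2)\}.
\]
Maximizing over $(A_1, A_2)$ gives $v_i(X_i) \ge \mu_i(2, X_i \cup X_j)$. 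Since $j$ was arbitrary, agent $i$ is PMMS-satisfied.

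For necessity, suppose $v_i$ is not PMMS-feasible. Then there exist $S \subseteq M$ and two partitions $A = (A_1, A_2)$ and $B = (B_1, B_2)$ of $S$ with
\[
\max\{v_i(B_1), v_i(B_2)\} < \min\{v_i(A_1), v_i(A_2)\}.
\]
Relabel so that $v_i(B_1) \ge v_i(B_2)$. Consider the partial allocation that gives $B_1$ to agent $i$, gives $B_2$ to some other agent $j$, and leaves all remaining items unallocated (or gives them as empty bundles to the others, which requires $n \ge 2$). This allocation is envy-free for $i$: she does not envy $j$ because $v_i(B_1) \ge v_i(B_2)$, and she does not envy the empty bundles by monotonicity. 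It is not PMMS for $i$, because
\[
\mu_i(2, B_1 \cup B_2) = \mu_i(2, S) \ge \min\{v_i(A_1), v_i(A_2)\} > v_i(B_1).
\]

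No step is genuinely hard. The only subtlety is that the converse direction needs the allocation to be partial: the witnessing set $S$ may be a proper subset of $M$, which is exactly why the statement says ``possibly partial''. If one insisted on a full allocation, the leftover items would have to be placed somewhere without creating envy, and that is not always possible. Allowing unallocated items avoids the issue.
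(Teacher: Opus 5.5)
Your proof is correct and follows the paper's argument essentially step for step. For sufficiency, you apply PMMS-feasibility to $(X_i,X_j)$ against an arbitrary partition of $X_i\cup X_j$ and then maximize, where the paper fixes an optimal partition directly. For necessity, you take the violating pair of partitions and give the more valuable bundle of the ``max'' partition to agent $i$ and the other to agent $j$ as a partial allocation, exactly as the paper does.
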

\begin{proof}
$(\text{EF} \Rightarrow \text{PMMS under PMMS-feasibility})$ Let $X = (X_1, \ldots, X_n)$ be an allocation that is envy-free for agent $i$; that is, $v_i(X_i) \geq v_i(X_j)$ for every other agent $j$. Fix an agent $j \neq i$, and let $(X'_i, X'_j)$ be a partition of $X_i \cup X_j$ that attains the pairwise maximin share, so that
\[
    \mu_i(2, X_i \cup X_j) = \min\{v_i(X'_i),\, v_i(X'_j)\}.
\]
Then
\begin{align*}
    v_i(X_i) &= \max\{v_i(X_i),\, v_i(X_j)\} && \text{(by envy-freeness)} \\
    &\geq \min\{v_i(X'_i),\, v_i(X'_j)\} && \text{(by PMMS-feasibility)} \\
    &= \mu_i(2, X_i \cup X_j) && \text{(by the choice of $(X'_i, X'_j)$)}.
\end{align*}
Since $j$ was arbitrary, $X$ is PMMS for agent $i$.

$(\text{EF} \nRightarrow \text{PMMS without PMMS-feasibility})$  Assume $v_i$ is not PMMS-feasible. Then there exist two disjoint bundles $X_i, X_j \subseteq M$ and a partition $(X'_i, X'_j)$ of $X_i \cup X_j$ such that
\[
    \max\{v_i(X_i), v_i(X_j)\} < \min\{v_i(X'_i), v_i(X'_j)\}.
\]
Without loss of generality, assume $v_i(X_i) \geq v_i(X_j)$, and consider the (possibly partial) two-agent allocation $X = (X_i, X_j)$ among agents $i$ and $j$. This allocation is envy-free for agent $i$, since $v_i(X_i)  \geq  v_i(X_j)$. It is not, however, PMMS for agent $i$: 
\[
    v_i(X_i) = \max\{v_i(X_i), v_i(X_j)\} < \min\{v_i(X'_i), v_i(X'_j)\} \leq \mu_i(2, X_i \cup X_j),
\]
where the final inequality holds because $(X'_i, X'_j)$ is a particular partition of $X_i \cup X_j$, while $\mu_i(2, X_i \cup X_j)$ is the maximum of $\min\{v_i(A), v_i(B)\}$ over all such partitions. 
\end{proof}

Similarly, we show that GMMS-feasibility is both sufficient and necessary for envy-freeness to imply GMMS, so it precisely characterizes the valuations under which this implication holds.

\begin{proposition}
When $v_i$ is a GMMS-feasible valuation, any allocation that is envy-free for agent $i$ is also GMMS for agent $i$. Conversely, when $v_i$ is not GMMS-feasible, there exists a (possibly partial) allocation that is envy-free for agent $i$ but not GMMS for agent $i$.
    \end{proposition}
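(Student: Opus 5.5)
The proof mirrors Proposition~\ref{prop:ef_implies_pmms}, with $2$-partitions replaced by $k$-partitions.

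\emph{Sufficiency.} Suppose $v_i$ is GMMS-feasible and $X$ is envy-free for agent $i$, so $v_i(X_i) \ge v_i(X_j)$ for every $j$. Fix a subgroup $J \ni i$ with $|J| = k$, and let $S = \bigcup_{j \in J} X_j$. Choose a partition $(A_1, \ldots, A_k)$ of $S$ attaining $\mu_i(k, S)$, so that $\min_\ell v_i(A_\ell) = \mu_i(k, S)$. The bundles $(X_j)_{j \in J}$ form another partition of $S$ into $k$ parts. Envy-freeness and GMMS-feasibility (Definition~\ref{def:gmmsfeasible}) then give
\[
v_i(X_i) = \max_{j \in J} v_i(X_j) \ge \min_\ell v_i(A_\ell) = \mu_i(k, S).
\]
Since $J$ was arbitrary, agent $i$ is GMMS-satisfied. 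There is one technicality: the definition restricts to $k \le m$. If $k > m$, some bundle of every $k$-partition of $S$ is empty, so $\mu_i(k,S) = v_i(\emptyset)$, and monotonicity makes the inequality trivial.

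\emph{Necessity.} Suppose $v_i$ is not GMMS-feasible. Then there exist $S \subseteq M$, an integer $k$, and partitions $A = (A_1,\ldots,A_k)$ and $B = (B_1,\ldots,B_k)$ of $S$ with
\[
\max_\ell v_i(B_\ell) < \min_\ell v_i(A_\ell).
\]
Relabel the parts of $B$ so that $B_1$ has maximum value under $v_i$. Build a (possibly partial) allocation among $k$ agents, one of them agent $i$, by giving agent $i$ the bundle $B_1$ and giving $B_2, \ldots, B_k$ to the other $k-1$ agents. This needs $n \ge k$. If $n < k$, adjust the construction, for instance by merging parts, or else treat the statement as concerning an instance with $k$ agents, exactly as the converse of Proposition~\ref{prop:ef_implies_pmms} does for $k = 2$.

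Agent $i$ is envy-free, because $v_i(B_1) \ge v_i(B_\ell)$ for all $\ell$. Now take $J$ to be the whole set of $k$ agents. Then
\[
\mu_i(k, S) \ge \min_\ell v_i(A_\ell) > v_i(B_1),
\]
so GMMS fails for agent $i$.

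The only mildly delicate point is the bookkeeping around the number of agents versus $k$, together with the $k \le m$ restriction in the definition. Both are handled as above, and everything else is a direct generalization of the pairwise argument.
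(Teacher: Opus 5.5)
Your proof is correct and matches the paper's argument essentially line for line: sufficiency applies GMMS-feasibility to the envy-free bundles $(X_j)_{j \in J}$ against a maximin partition, and necessity gives agent $i$ the highest-valued part of the dominated partition. Your $k>m$ remark is a harmless refinement the paper skips. For $n<k$, drop the ``merging parts'' idea, since merging can push a part above $\min_\ell v_i(A_\ell)$, and for $n=2$ PMMS-feasibility alone already makes EF imply GMMS; instead take an instance with $k$ agents, which is what the paper implicitly does.
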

\begin{proof}
$(\text{EF} \Rightarrow \text{GMMS under GMMS-feasibility})$ Let $X = (X_1, \ldots, X_n)$ be an allocation that is envy-free for agent $i$; that is, $v_i(X_i) \geq v_i(X_j)$ for every other agent $j$. Fix a subset of agent $i \in J \subseteq N$, and let $(X'_j)_{j \in J}$ be a partition of $\bigcup_{j \in J} X_j$ that attains the groupwise maximin share, so that
\[
    \mu_i(|J|, \bigcup_{j \in J} X_j) = \min_{j \in J} v_i(X'_j).
\]
Then
\begin{align*}
    v_i(X_i) &= \max_{j \in J} v_i(X_j) && \text{(by envy-freeness)} \\
    &\geq \min_{j \in J} v_i(X'_j) && \text{(by GMMS-feasibility)} \\
    &= \mu_i(|J|, \bigcup_{j \in J} X_j) && \text{(by the choice of $(X'_j)_{j \in J}$)}.
\end{align*}
Since $J$ was arbitrary, $X$ is GMMS for agent $i$.

$(\text{EF} \nRightarrow \text{GMMS without GMMS-feasibility})$  Assume $v_i$ is not GMMS-feasible. Then there exist a set of indices $J$ with $i \in J$, disjoint bundles $(X_j)_{j \in J}$, and a partition $(X'_j)_{j \in J}$ of $\bigcup_{j \in J} X_j$ such that
\[
    \max_{j \in J} v_i(X_j) < \min_{j \in J} v_i(X'_j).
\]
Without loss of generality, assume $v_i(X_i) \geq v_i(X_j)$ for all $j \in J$, and consider the (possibly partial) allocation $X = (X_j)_{j \in J}$ among agents indexed by $J$, including agent $i$. This allocation is envy-free for agent $i$, since $v_i(X_i) \geq v_i(X_j)$ for all $j \in J$ by assumption. It is not, however, GMMS for $i$:
\[
    v_i(X_i) = \max_{j \in J} v_i(X_j) < \min_{j \in J} v_i(X'_j) \leq \mu_i(|J|,\, \bigcup_{j \in J} X_j),
\]
where the final inequality holds because $(X'_j)_{j \in J}$ is a particular partition of $\bigcup_{j \in J} X_j$, while $\mu_i(|J|, \bigcup_{j \in J} X_j)$ is the maximum of $\min_{j \in J} v_i(X''_j)$ over all such partitions $X''$. 
\end{proof}

\subsection{GMMS-Feasible Valuations Are Incomparable with Gross Substitutes Valuations}

We show that (G)PMMS-feasible valuations are incomparable with the class of gross substitutes valuations \cite{Leme17}, an important and well-studied member of the hierarchy of complement-free valuations \cite{LehmannLN06}. We begin by defining the class of OXS valuations \cite{LehmannLN06,BenabbouCEZ19,BenabbouCIZ20}, which forms a special case of gross substitutes valuations \cite{Leme17}.

\begin{definition}[OXS Valuations, \cite{LehmannLN06}]
A valuation function $v: 2^M \to \mathbb{R}_{\geq 0}$ over a set of items $M$ is an {OXS valuation} if there exists a finite set $X$ and a weight function $w: M \times X \to \mathbb{R}_{\geq 0}$ such that for every subset $S \subseteq M$:
\begin{align*}
    v(S) = \max_{\pi \in \mathcal{M}(S, X)} \sum_{(j, k) \in \pi} w({j,k})
\end{align*}
where $\mathcal{M}(S, X)$ is the set of all matchings between the items in $S$ and the elements of $X$.
\end{definition}

We first show that not every gross substitutes valuation is PMMS-feasible.

\begin{proposition}\label{prop:gsmmsfeasilbe}
There exists an OXS valuation function, and consequently a gross substitutes valuation function, that is not PMMS-feasible.
\end{proposition}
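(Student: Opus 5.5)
The plan is to build an explicit OXS valuation that violates Definition~\ref{def:mmsfeasible}. Concretely, I want a set $S$ and two partitions $A=(A_1,A_2)$ and $B=(B_1,B_2)$ of $S$ with $\min\{v(A_1),v(A_2)\} > \max\{v(B_1),v(B_2)\}$. Since OXS valuations are a special case of gross substitutes \cite{Leme17}, the second half of the statement will then follow immediately.

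OXS valuations with two slots give exactly the complementarity-free yet ``type-matching'' behaviour I need. A bundle containing one item of each type can fill both slots, while a bundle with two items of the same type fills only one. So I take four items $M=\{a,b,c,d\}$ and a slot set $X=\{x,y\}$. The weights are
\[
w(a,x)=w(c,x)=1, \qquad w(b,y)=w(d,y)=1,
\]
with all other weights equal to $0$. Then $v(S)$ is the maximum weight of a matching of $S$ into $\{x,y\}$. Equivalently, $v(S)=\mathbf{1}[S\cap\{a,c\}\neq\emptyset]+\mathbf{1}[S\cap\{b,d\}\neq\emptyset]$.

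Next I would choose the partitions of $S=M$. Take $A=(\{a,b\},\{c,d\})$: each part contains one $x$-item and one $y$-item, so $v(A_1)=v(A_2)=2$. Take $B=(\{a,c\},\{b,d\})$: each part uses only one slot, so $v(B_1)=v(B_2)=1$. Hence $\max\{v(B_1),v(B_2)\}=1<2=\min\{v(A_1),v(A_2)\}$, and PMMS-feasibility fails. The valuation is clearly monotone and nonnegative.

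There is no real obstacle here. The only care needed is to compute the four matching values correctly, and to cite the inclusion OXS $\subseteq$ GS from the literature rather than reprove it.
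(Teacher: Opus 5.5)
Your construction is correct and matches the paper's proof up to relabeling: the paper also uses four items and two unit-weight slots (with $a,b$ on slot~$1$ and $c,d$ on slot~$2$), and it exhibits the mixed partition with values $(2,2)$ against the same-slot partition with values $(1,1)$. Like you, it gets the gross-substitutes part by citing OXS $\subseteq$ GS from the literature.
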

\begin{proof}
$(\text{OXS} \nRightarrow \text{PMMS-feasibility})$
Let $M = \{a,b,c,d\}$ denote the set of items, and let $X = \{1,2\}$. Define the weight function $w : M \times X \to \mathbb{R}_{\geq 0}$ by setting $w(a,1) = w(b,1) = w(c,2) = w(d,2) = 1$ and $w(i,j) = 0$ for all remaining pairs; see Figure~\ref{fig:matching} for the corresponding bipartite graph.
\begin{figure}[t]
\centering
\begin{tikzpicture}[
vertex/.style={circle,draw,minimum size=5mm,inner sep=1pt},
every edge node/.style={font=\small,fill=none,draw=none}
]
\node[vertex] (a) at (0,1.5) {$a$};
\node[vertex] (b) at (0,0.5) {$b$};
\node[vertex] (c) at (0,-0.5) {$c$};
\node[vertex] (d) at (0,-1.5) {$d$};
\node[vertex] (one) at (2,0.75) {$1$};
\node[vertex] (two) at (2,-0.75) {$2$};
\draw (a) -- node[midway,above] {1} (one);
\draw (b) -- node[midway,below] {1} (one);
\draw (c) -- node[midway,above] {1} (two);
\draw (d) -- node[midway,below] {1} (two);
\end{tikzpicture}
\caption{The weight function $w$ used in the proof of Proposition~\ref{prop:gsmmsfeasilbe}. Edges represent pairs with weight $1$; all omitted pairs have weight $0$.}\label{fig:matching}
\end{figure}

Let $v$ be the OXS valuation function induced by $w$. This valuation is not PMMS-feasible, since
\begin{align*}
\min \{ v(\{a,c\}), v(\{b,d\}) \} = 2 > 1 = \max \{ v(\{a,b\}), v(\{c,d\}) \}.
\end{align*}
This completes the proof.
\end{proof}

Finally, the analysis of \citet[Section 2]{Leme17} shows that not every GMMS-feasible valuation is a gross substitutes valuation.
\begin{proposition}[\cite{Leme17}]\label{thm:mmsfeasilbegs}
    There exists a budget-additive valuation function, and consequently a GMMS-feasible valuation, that is not a gross substitutes valuation.
\end{proposition}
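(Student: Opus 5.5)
The plan is to obtain the statement by combining two facts, one already in the paper and one explicit counterexample. First, GMMS-feasibility of every budget-additive valuation follows from what we already have. Budget-additive valuations are nice cancelable~\cite{BergerCFF2021}, and we proved above that every nice cancelable valuation is GMMS-feasible. So any budget-additive valuation $v(S)=\min\{B,\sum_{g\in S}w_g\}$ is GMMS-feasible, and it remains only to exhibit one that is not gross substitutes.

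For the second step I will use the characterization of gross substitutes valuations as $M^\natural$-concave functions (Fujishige--Yang / Murota, as presented in \cite{Leme17}). This characterization is the exchange property: for all $S,T\subseteq M$ and every $x\in S\setminus T$, there exists $y\in (T\setminus S)\cup\{\emptyset\}$ such that
\[
v(S)+v(T)\le v(S-x+y)+v(T-y+x).
\]
Consider three items $a,b,c$ with weights $w_a=2$, $w_b=w_c=1$ and budget $B=2$, so $v(S)=\min\{2,\sum_{g\in S}w_g\}$. Take $S=\{a\}$, $T=\{b,c\}$ and $x=a$, so that $v(S)+v(T)=2+2=4$. The three possible choices of $y$ give:
\begin{itemize}
\item $y=\emptyset$: the right-hand side is $v(\emptyset)+v(\{a,b,c\})=0+2=2$;
\item $y=b$: the right-hand side is $v(\{b\})+v(\{a,c\})=1+2=3$;
\item $y=c$: by symmetry, the right-hand side is also $3$.
\end{itemize}
All three fall short of $4$, so the exchange property fails and $v$ is not gross substitutes.

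If one prefers to avoid citing the $M^\natural$ characterization, the same instance can be checked directly against the demand-based definition. The aim is a price vector at which the demanded bundle is $\{b,c\}$ (or a tie including it), such that raising $p_b$ makes the agent drop $c$ as well and switch to $\{a\}$. This violates the requirement that items whose price did not change remain demanded. A natural choice is to set $p_a$ just below $2-p_b-p_c$ plus a small slack and tune the values accordingly.

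The main obstacle is only the bookkeeping of this direct demand check. If I take the exchange-property route, the argument is fully explicit, and I expect that to be the cleanest way to present it.
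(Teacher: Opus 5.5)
Your proposal is correct and matches the paper's route: GMMS-feasibility comes from the chain budget-additive $\Rightarrow$ nice cancelable \cite{BergerCFF2021} $\Rightarrow$ GMMS-feasible, and the non-gross-substitutes part rests on a budget-additive counterexample, which the paper only cites from \cite{Leme17}. You make that counterexample explicit, and the check holds: with $w=(2,1,1)$ and $B=2$, the exchange at $S=\{a\}$, $T=\{b,c\}$, $x=a$ gives at most $3<4$, so $v$ is not $M^\natural$-concave and hence not gross substitutes. The demand-based paragraph is not needed, although prices $p=(0.2,0.1,0.1)$, with $p_b$ then raised to $0.5$, would also exhibit the violation.
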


\section{Tightness of Our Analysis of the Envy Cycles Procedure}\label{sec:tightness_envy_cycles}

In this section, we show that our analysis of the Envy Cycle Elimination algorithm (Proposition~\ref{prop:4_5_PMMS_identical_rankings_1}) is tight: the algorithm guarantees no better than $4/5$-PMMS.

\begin{proposition}
    There exists an instance where the output of Algorithm~\ref{alg:envy_cycles} does not satisfy $\alpha$-PMMS for any $\alpha > \tfrac{4}{5}$.
\end{proposition}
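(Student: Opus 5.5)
The plan is to give an explicit small instance, run Algorithm~\ref{alg:envy_cycles} on it step by step, and exhibit a pair $i,j$ with $v_i(X_i) = \tfrac45\mu_i(2, X_i\cup X_j)$, up to an additive $\epsilon$ coming from a non-degeneracy perturbation. The instance should be reverse-engineered from the only case of the proof of Proposition~\ref{prop:4_5_PMMS_identical_rankings_1} where the $5/4$ bound is not slack. That is Case~2 with $v_i(g)\le \tfrac12 v_i(X_i)$, where every inequality in
\[
\mu_i(2,X_i\cup X_j)\le \tfrac12\bigl(v_i(X_i)+v_i(X_j\setminus\{g\})+v_i(g)\bigr)\le \tfrac54 v_i(X_i)
\]
must be tight. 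So I need three things simultaneously:
\begin{enumerate}
\item $v_i(g)\approx \tfrac12 v_i(X_i)$;
\item $v_i(X_j\setminus\{g\})\approx v_i(X_i)$;
\item $X_i\cup X_j$ splits into two halves of equal value for agent $i$.
\end{enumerate}
Normalising $v_i(X_i)=4$, the target is $v_i(g)=2$, $v_i(X_j\setminus\{g\})=4$, and a perfect split of total value $10$ into $5+5$. The ratio is then $4/5$.

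The main obstacle is making this profile reachable by the algorithm. At the moment $g$ is added to the bundle $X_j\setminus\{g\}$, agent $i$ must already value her own current bundle at least $4$. Since items arrive in decreasing order, $X_i$ must therefore be essentially complete before the last item of $X_j$ is assigned. A naive two-agent identical-valuation attempt, with $X_i=\{3,1\}$ and $X_j=\{2,2,2\}$ (whose union splits as $\{3,2\}$ versus $\{2,2,1\}$), fails precisely because agent $i$ holds only $\{3\}$ when $X_j=\{2,2\}$ is being extended.

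To get around this I would use two levers:
\begin{enumerate}
\item A third agent, or agents $j$ with a different (but identically ranked) valuation, so that envy-cycle rotations hand agent $i$ a bundle that was built by someone else. This can let $i$ own a value-$4$ bundle, for example two items of value $2$, early on.
\item A tie-breaking perturbation of order $\epsilon$ that enforces non-degeneracy and steers which agent is unenvied at each step.
\end{enumerate}
Concretely, I would first try items with $i$-values $(2,2,2,2,1,1)$. The target is $X_i$ of value $4$ and $X_j=\{2,1,1\}$ or $\{2,2,g\}$, and I would then adjust the values until $X_i\cup X_j$ admits an exact even split; the constraint $2v_i(X_j\setminus\{g\})\le\dots$ dictates which adjustments are allowed. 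If hand search stalls, I would enumerate instances with $3$ agents and at most $7$ items to find one.

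Once a candidate is found, the verification is routine. I would trace every iteration, checking at each step which agents are unenvied and which cycles are eliminated, with the $\epsilon$-perturbation fixing every comparison. For the final pair I would compute $v_i(X_i)$ and exhibit a partition of $X_i\cup X_j$ with both sides of value about $\tfrac54 v_i(X_i)$. Letting $\epsilon\to 0$ then shows that no $\alpha>4/5$ is guaranteed.
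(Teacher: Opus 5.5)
\textbf{The gap: no instance is given.} Your derivation of the tight profile is correct: $v_i(g)\approx\tfrac12 v_i(X_i)$, $v_i(X_j\setminus\{g\})\approx v_i(X_i)$, and an even split of $X_i\cup X_j$. The two levers you name are also exactly the ones the paper uses. But the proposal stops before the only thing the proposition requires, which is an explicit instance together with a traced execution. The statement is purely existential. Necessary conditions plus a plan to ``adjust the values'' or to ``enumerate instances'' prove nothing until a candidate is written down and checked. As it stands there is no instance, no run of Algorithm~\ref{alg:envy_cycles}, and no computed $\mu_i(2,X_i\cup X_j)$.

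The paper supplies this with three agents and nine items:
\begin{itemize}
\item Agent~3 holds $\{g_3,g_4\}$, worth $\approx 4$ to her, at the moment agent~2's bundle $\{g_2,g_5\}$ (worth $\approx 4$ to agent~3) receives $g_6$ (worth $\approx 2$).
\item Meanwhile agent~1, who values $g_1$ at $3$ and $g_7,g_8$ at $\approx 0$, builds $\{g_1,g_7,g_8\}$. Agent~3 values this bundle at just above $4$.
\item The resulting $1\leftrightarrow 3$ cycle is eliminated.
\item At the end $v_3(X_3)\approx 4$, while $\{g_1,g_6,g_8\}$ and $\{g_2,g_5,g_7,g_9\}$ split $X_2\cup X_3$ into two halves each worth $\approx 5$.
\end{itemize}

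\textbf{Points to fix when you build the instance.} First, your claim that $X_i$ ``must be essentially complete'' when $g$ is assigned is wrong. What must already be $\approx v_i(X_i)$ is agent $i$'s \emph{value}, not her final bundle. In fact the final bundle $X_i$ must still receive non-negligible items after $g$. Otherwise every item of $X_i\cup X_j$ precedes $g$, so each is worth $\approx 2$ or $\approx 4$ to $i$, and no split into two halves of value $\approx 5$ exists. This is exactly why $X_i$ has to be built by another agent and handed to $i$ by a rotation.

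Second, cycle elimination (Line~\ref{line:envy_cycle}) runs only while $U\neq\emptyset$. So at least one further negligible item must remain after that bundle is completed; otherwise the loop terminates before the rotation happens. This is the role of $g_9$ in the paper.

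Third, $\epsilon$-perturbations alone do not make every step forced. In the paper's trace several agents are sometimes unenvied at once, for instance when $g_6$ and $g_9$ are assigned. The example exhibits one admissible execution, so your verification should either say so explicitly or force uniqueness at each step.
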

\begin{proof}
Assume there are three agents and nine items. Let $\epsilon > 0$ be arbitrarily small, and define $\epsilon_k = \epsilon \cdot \frac{1}{2^{8-k}}$. The value functions are as follows:
\begin{center}
\begin{tabular}{ c c c c c c c c c c}
  & $g_1$ & $g_2$ & $g_3$ & $g_4$ & $g_5$ & $g_6$ & $g_7$ & $g_8$ & $g_9$\\ 
 \text{Agent 1} & $3$ & $2 + \epsilon_8$ & $2 + \epsilon_7$ & $2 + \epsilon_6$ & $\epsilon_5$ & $\epsilon_4$ & $\epsilon_3$ & $\epsilon_2$ & $\epsilon_0$\\  
 \text{Agent 2} & $2 + \epsilon_8$ & $2 + \epsilon_7$ & $2 + \epsilon_6$ & $2 + \epsilon_5$ & $2 + \epsilon_4$ & $2 + \epsilon_3$ & $1 + \epsilon_2$ & $1$ & $\epsilon_0$\\   
 \text{Agent 3} & $2 + \epsilon_8$ & $2 + \epsilon_6 +  \epsilon_1$ & $2 + \epsilon_6$ & $2 + \epsilon_5$ & $2 + \epsilon_4$ & $2 + \epsilon_3$ & $1 + \epsilon_2$ & $1$ & $\epsilon_0$ \\  
\end{tabular}
\end{center}
The perturbations $\epsilon_k$ are introduced to ensure that the instance is non-degenerate (as defined above).

We now simulate the algorithm on this instance. We also include the envy graph (whose nodes are the agents and which contains an edge $i \to j$ whenever $v_i(X_j) > v_i(X_i)$) at the end of each step.

After the first three iterations of the loop, the allocation will be as follows: 
\[
\begin{aligned}
X_1 &= \{g_1\}, & v_1(X_1) &= 3, \\
X_2 &= \{g_2\}, & v_2(X_2) &= 2 + \epsilon_7, \\
X_3 &= \{g_3\}, & v_3(X_3) &= 2 + \epsilon_6 \ \ \ \ \ \ \ \ \ \ \ \ \ \ \ \ \ \ \ \ \ \ \ \ 
\end{aligned}
\quad \quad
\begin{tikzpicture}[
    baseline={(current bounding box.center)},
    scale=1.2,
    main/.style={draw,circle,inner sep=2pt}
]
\node[main] (1) {$1$};
\node[main] (2) [below right=0.75cm of 1] {$2$};
\node[main] (3) [below left=0.75cm of 1] {$3$};
\draw[->] (2)--(1);
\draw[->] (3)--(1);
\draw[->] (3)--(2);
\end{tikzpicture}
\]

Next, agent 3 receives an item:
\[
\begin{aligned}
X_1 &= \{g_1\}, & v_1(X_1) &= 3, \\
X_2 &= \{g_2\}, & v_2(X_2) &= 2 + \epsilon_7, \\
X_3 &= \{g_3, g_4\}, & v_3(X_3) &= 4 + \epsilon_6  + \epsilon_5 \ \ \ \ \ \ \ \ \ \ \ \   
\end{aligned}
\quad \quad
\begin{tikzpicture}[
    baseline={(current bounding box.center)},
    scale=1.2,
    main/.style={draw,circle,inner sep=2pt}
]
\node[main] (1) {$1$};
\node[main] (2) [below right=0.75cm of 1] {$2$};
\node[main] (3) [below left=0.75cm of 1] {$3$};
\draw[->] (1)--(3);
\draw[->] (2)--(3);
\draw[->] (2)--(1);
\end{tikzpicture}
\]

Next, agent 2 receives an item:

\[
\begin{aligned}
X_1 &= \{g_1\}, & v_1(X_1) &= 3, \\
X_2 &= \{g_2, g_5\}, & v_2(X_2) &= 4 + \epsilon_7 + \epsilon_4, \\
X_3 &= \{g_3, g_4\}, & v_3(X_3) &= 4 + \epsilon_6  + \epsilon_5 \ \ \ \ \ \ \ \ \ \ \ \ 
\end{aligned}
\quad \quad
\begin{tikzpicture}[
    baseline={(current bounding box.center)},
    scale=1.2,
    main/.style={draw,circle,inner sep=2pt}
]
\node[main] (1) {$1$};
\node[main] (2) [below right=0.75cm of 1] {$2$};
\node[main] (3) [below left=0.75cm of 1] {$3$};
\draw[->] (1)--(3);
\end{tikzpicture}
\]

Next, agent 2 receives another item:

\[
\begin{aligned}
X_1 &= \{g_1\}, & v_1(X_1) &= 3, \\
X_2 &= \{g_2, g_5, g_6\}, & v_2(X_2) &= 6 + \epsilon_7 + \epsilon_4 + \epsilon_3, \\
X_3 &= \{g_3, g_4\}, & v_3(X_3) &= 4 + \epsilon_6  + \epsilon_5 
\end{aligned}
\quad \quad
\begin{tikzpicture}[
    baseline={(current bounding box.center)},
    scale=1.2,
    main/.style={draw,circle,inner sep=2pt}
]
\node[main] (1) {$1$};
\node[main] (2) [below right=0.75cm of 1] {$2$};
\node[main] (3) [below left=0.75cm of 1] {$3$};
\draw[->] (1)--(3);
\draw[->] (3)--(2);
\end{tikzpicture}
\]

Next, agent 1 receives an item:

\[
\begin{aligned}
X_1 &= \{g_1, g_7\}, & v_1(X_1) &= 3 + \epsilon_3, \\
X_2 &= \{g_2, g_5, g_6\}, & v_2(X_2) &= 6 + \epsilon_7 + \epsilon_4 + \epsilon_3, \\
X_3 &= \{g_3, g_4\}, & v_3(X_3) &= 4 + \epsilon_6  + \epsilon_5 
\end{aligned}
\quad \quad
\begin{tikzpicture}[
    baseline={(current bounding box.center)},
    scale=1.2,
    main/.style={draw,circle,inner sep=2pt}
]
\node[main] (1) {$1$};
\node[main] (2) [below right=0.75cm of 1] {$2$};
\node[main] (3) [below left=0.75cm of 1] {$3$};
\draw[->] (1)--(3);
\draw[->] (3)--(2);
\end{tikzpicture}
\]

Next, agent 1 receives another item:

\[
\begin{aligned}
X_1 &= \{g_1, g_7, g_8\}, & v_1(X_1) &= 3 + \epsilon_3 + \epsilon_2, \\
X_2 &= \{g_2, g_5, g_6\}, & v_2(X_2) &= 6 + \epsilon_7 + \epsilon_4 + \epsilon_3, \\
X_3 &= \{g_3, g_4\}, & v_3(X_3) &= 4 + \epsilon_6  + \epsilon_5 
\end{aligned}
\quad \quad
\begin{tikzpicture}[
    baseline={(current bounding box.center)},
    scale=1.2,
    main/.style={draw,circle,inner sep=2pt}
]
\node[main] (1) {$1$};
\node[main] (2) [below right=0.75cm of 1] {$2$};
\node[main] (3) [below left=0.75cm of 1] {$3$};
\draw[<->] (1)--(3);
\draw[->] (3)--(2);
\end{tikzpicture}
\]

At this point, agent 3 begins to envy agent 1, since
\[
v_3(X_1) = 4 + \epsilon_8 + \epsilon_2 \ge 4 + \epsilon_6 + \epsilon_5 = v_3(X_3).
\]
We therefore resolve the cycle between agents 1 and 3:

\[
\begin{aligned}
X_1 &= \{g_3, g_4\}, & v_1(X_1) &= 4 + \epsilon_7 + \epsilon_6, \\
X_2 &= \{g_2, g_5, g_6\}, & v_2(X_2) &= 6 + \epsilon_7 + \epsilon_4 + \epsilon_3, \\
X_3 &= \{g_1, g_7, g_8\}, & v_3(X_3) &= 4 + \epsilon_8 + \epsilon_2
\end{aligned}
\quad \quad
\begin{tikzpicture}[
    baseline={(current bounding box.center)},
    scale=1.2,
    main/.style={draw,circle,inner sep=2pt}
]
\node[main] (1) {$1$};
\node[main] (2) [below right=0.75cm of 1] {$2$};
\node[main] (3) [below left=0.75cm of 1] {$3$};
\draw[->] (3)--(2);
\end{tikzpicture}
\]

Finally, agent 3 receives the last item:
\[
\begin{aligned}
X_1 &= \{g_3, g_4\}, & v_1(X_1) &= 4 + \epsilon_7 + \epsilon_6, \\
X_2 &= \{g_2, g_5, g_6\}, & v_2(X_2) &= 6 + \epsilon_7 + \epsilon_4 + \epsilon_3, \\
X_3 &= \{g_1, g_7, g_8, g_9\}, & v_3(X_3) &= 4 + \epsilon_8 + \epsilon_2 + \epsilon_0
\end{aligned}
\quad \quad
\begin{tikzpicture}[
    baseline={(current bounding box.center)},
    scale=1.2,
    main/.style={draw,circle,inner sep=2pt}
]
\node[main] (1) {$1$};
\node[main] (2) [below right=0.75cm of 1] {$2$};
\node[main] (3) [below left=0.75cm of 1] {$3$};
\draw[->] (3)--(2);
\end{tikzpicture}
\]

The algorithm now terminates. Observe that
\[
v_3(X_3) = 4 + \epsilon_8 + \epsilon_2 + \epsilon_0,
\]
while
\[
\mu_3(2, X_2 \cup X_3) = \min\big(v_3(\{g_1,g_6,g_8\}),\, v_3(\{g_2,g_5,g_7,g_9\})\big)
= 5 + \epsilon_6 + \epsilon_4 + \epsilon_2 + \epsilon_1 + \epsilon_0.
\]
Thus, we cannot guarantee better than a $\tfrac{4}{5}$-PMMS allocation using Algorithm~\ref{alg:envy_cycles}.
\end{proof}










\section{EPMMS and EGMMS Reductions to Identical Rankings}
\label{sec:EPMMS_EGMMS_Reductions_framework}
\subsection{The Ordering Framework and Picking Sequence}
\label{sec:ordering_framework}

The ordering framework is a powerful tool in fair division, typically used to design approximation algorithms for Maximin Share (MMS) allocations. Recently, \citet*{CaragiannisGRSV25} utilized this framework to prove the existence of Epistemic EFX (EEFX) allocations for additive valuations. In this paper, we adapt their approach to establish guarantees for EPMMS approximations and EGMMS allocations under bivalued valuations. The core idea is to transform an instance with arbitrary valuations into a simplified instance where all agents share an identical ranking over the items, solve the allocation problem there, and then map the solution back to the original instance.

Let $\mathcal{I}$ be a fair division instance. For each agent $i \in [n]$, let $(\ell_1, \ell_2, \ldots, \ell_m)$ be an ordering of the items in $M$ sorted by $i$'s valuation in non-increasing order, such that:
\[
v_i(\ell_1) \ge v_i(\ell_2) \ge \cdots \ge v_i(\ell_m).
\]
Intuitively, $\ell_t$ represents agent $i$’s $t$-th favorite item. 

\begin{definition}[Item Rank]
For an agent $i \in [n]$ and an item $j \in M$, the \emph{rank} of $j$, denoted by $r_i(j)$, is the index $t$ such that $j = \ell_t$. For a set of items $S \subseteq M$, we define the set of their ranks as:
\[
r_i(S) := \{\, r_i(j) \mid j \in S \,\}.
\]
\end{definition}

\begin{definition}[Ordered Valuation and Instance]
The \emph{ordered valuation function} of agent $i$, denoted by $\hat{v}_i$, is defined for any subset of ranks $T \subseteq [m]$ as:
\[
\hat{v}_i(T) := v_i(\{\, \ell_t \mid t \in T \,\}).
\]
We denote by $\hat{\mathcal{I}} = \text{Order}(\mathcal{I})$ the ordered fair division instance, which contains the same agents and the same number of items as $\mathcal{I}$, but where each agent $i$ has the valuation function $\hat{v}_i$ over the items $[m]$.
\end{definition}

\begin{claim}
\label{claim:valuation_rank_equivalence}
For every subset of items $S \subseteq M$ and every agent $i \in [n]$, it holds that:
\[
v_i(S) = \hat{v}_i\bigl(r_i(S)\bigr).
\]
\end{claim}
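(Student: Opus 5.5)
The plan is to unfold the two definitions and use the fact that ranking is a bijection. For a fixed agent $i$, the ordering $(\ell_1, \ldots, \ell_m)$ lists every item of $M$ exactly once. Hence the map $r_i : M \to [m]$, $j \mapsto r_i(j)$, is a bijection, and its inverse is $t \mapsto \ell_t$.

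First, fix $S \subseteq M$. By the definition of ranks, $r_i(S) = \{\, t \in [m] \mid \ell_t \in S \,\}$. Second, apply the definition of the ordered valuation:
\[
\hat{v}_i\bigl(r_i(S)\bigr) = v_i\bigl(\{\, \ell_t \mid t \in r_i(S) \,\}\bigr).
\]
Third, show that the set inside equals $S$. If $t \in r_i(S)$, then $t = r_i(j)$ for some $j \in S$, so $\ell_t = j \in S$. Conversely, every $j \in S$ equals $\ell_{r_i(j)}$ with $r_i(j) \in r_i(S)$. Therefore $\{\, \ell_t \mid t \in r_i(S) \,\} = S$, and the claim follows by substitution.

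No additivity is needed, so the identity holds for arbitrary monotone $v_i$. The only point that needs care is well-definedness of $r_i$ when several items have equal value. Ties make the ordering non-unique, so I would fix one ordering per agent once and for all and define both $r_i$ and $\hat{v}_i$ from that same ordering. With this convention $r_i$ is a genuine bijection and the argument above goes through unchanged.
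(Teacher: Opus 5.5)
Your proof is correct and takes essentially the same approach as the paper. The paper states the claim without a separate proof, treating it as immediate from two facts: $r_i$ inverts $t \mapsto \ell_t$, and $\hat{v}_i(T) = v_i(\{\ell_t \mid t \in T\})$, so $\{\ell_t \mid t \in r_i(S)\} = S$. Your remarks that additivity is not needed and that one tie-broken ordering must be fixed per agent are accurate and consistent with the paper's setup.
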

\begin{example}
Suppose there are four items, $M = \{1, 2, 3, 4\}$, and agent $i$'s valuation is:
\[
v_i(1) = 6, \quad v_i(2) = 11, \quad v_i(3) = 8, \quad v_i(4) = 9.
\]
Their ranks are:
\[
r_i(1) = 4, \quad r_i(2) = 1, \quad r_i(3) = 3, \quad r_i(4) = 2.
\]
The ordered valuation function $\hat{v}_i$ evaluates the $t$-th best items directly:
\[
\hat{v}_i(1) = 11, \quad \hat{v}_i(2) = 9, \quad \hat{v}_i(3) = 8, \quad \hat{v}_i(4) = 6
\]
\end{example}

\subsubsection{The Reduction Algorithm}
To map allocations from the ordered instance $\hat{\mathcal{I}}$ back to the original instance $\mathcal{I}$, we employ a picking sequence algorithm. 

\begin{enumerate}
    \item \textbf{Ordering:} Construct the ordered instance $\hat{\mathcal{I}} = \text{Order}(\mathcal{I})$.
    \item \textbf{Base Allocation:} Compute the target allocation $\hat{X} = (\hat{X}_1, \ldots, \hat{X}_n)$ for the identical rankings instance $\hat{\mathcal{I}}$.
    \item \textbf{Picking Sequence:} Define the sequence $L = (a_1, a_2, \ldots, a_m)$ such that $a_k = i$ if the $k$-th item in the identical ranking is assigned to agent $i$ in $\hat{X}$.
    \item \textbf{Actual Allocation:} Execute the \textsc{Pick}$(\mathcal{I}, L)$ routine: in each step $k$, agent $a_k$ picks their favorite remaining item according to their original valuation $v_i$. Let $X$ be the resulting allocation.
\end{enumerate}

To analyze how values shift when translating $\hat{X}$ to $X$, we rely on a structural lemma introduced by Caragiannis et al \cite{CaragiannisGRSV25}.

\begin{lemma}[Lemma 2 in \cite{CaragiannisGRSV25}]
\label{lem:properties-of-bijection}
For every agent $i \in [n]$, there exists a bijection $\pi_i : [m] \to M$ such that:
\begin{itemize}
    \item For every $j \in \hat{X}_i$, we have $\pi_i(j) \in X_i$ and $r_i(\pi_i(j)) \le j$.
    \item For every $j \in [m] \setminus \hat{X}_i$, we have $\pi_i(j) \in M \setminus X_i$ and $r_i(\pi_i(j)) \ge j$.
\end{itemize}
\end{lemma}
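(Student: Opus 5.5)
We construct $\pi_i$ by tracking the picking sequence \textsc{Pick}$(\mathcal{I}, L)$ from agent $i$'s point of view. The sequence has $m$ steps. At step $k$ the picker is $a_k$, and $k \in \hat{X}_i$ exactly when $a_k = i$. We define $\pi_i(k)$ to be the item picked at step $k$. This is a bijection $[m] \to M$, since every item is picked exactly once. It also maps $\hat{X}_i$ onto $X_i$ and $[m]\setminus\hat{X}_i$ onto $M\setminus X_i$, by the definition of $L$ and of $X$. What remains is the pair of rank inequalities, and the choice of $\pi_i$ may need to be adjusted to get them.

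For $j \in \hat{X}_i$: at step $j$ exactly $j-1$ items have been removed. Among agent $i$'s top $j$ items $\ell_1,\dots,\ell_j$, at least one is therefore still available. Agent $i$ picks her favourite remaining item, so it has rank at most that of an available item among the top $j$. Hence $r_i(\pi_i(j)) \le j$, with ties broken consistently with the ordering $\ell$.

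The other direction, $r_i(\pi_i(j)) \ge j$ for $j \notin \hat{X}_i$, does not hold pointwise for this naive map. Another agent may grab agent $i$'s top item at an early step. I would repair this with a Hall-type or rearrangement argument on the items taken by other agents. Let $O = M \setminus X_i$, and sort its ranks $r_i(O)$ increasingly as $s_1 < s_2 < \cdots$. Sort the indices of $[m]\setminus\hat{X}_i$ increasingly as $t_1 < t_2 < \cdots$. Then set $\pi_i(t_q) := \ell_{s_q}$, and keep the naive map on $\hat{X}_i$, or sort it similarly.

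It then suffices to prove $s_q \ge t_q$ for every $q$. Equivalently, for each threshold $t$, the number of items of $O$ with rank $< t$ is at most the number of non-$i$ indices below $t$. Suppose agent $i$'s top $t-1$ items contain $c$ items of $O$. Then $t-1-c$ of them lie in $X_i$. The key fact is that agent $i$ picks greedily: each item of $X_i$ with rank below $t$ can be charged to one of her own picks, and those picks must occur early enough. I would show that $|\{k < t : a_k = i\}| \ge$ the number of her top-$(t-1)$ items she obtains, up to the appropriate off-by-one. This counting is the main obstacle, since the inequality must be arranged in the right direction.

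The cleaner route is to apply Hall's theorem in the bipartite graph linking index $j \notin \hat{X}_i$ to items of $O$ of rank $\ge j$. Symmetrically, we link $j \in \hat{X}_i$ to items of $X_i$ of rank $\le j$. Perfect matchings on both sides, glued together, give $\pi_i$. For the $X_i$ side, the greedy argument above gives Hall's condition. For the $O$ side, Hall's condition follows from the $X_i$ side by complementary counting: the $q$-th best item of $X_i$ has rank $\le$ the $q$-th index of $\hat{X}_i$. Hence the items of $O$ are pushed to ranks at least as large as the complementary indices.
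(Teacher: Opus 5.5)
The paper gives no proof of this lemma; it imports it from \citet{CaragiannisGRSV25}, so there is nothing in the paper to compare against. Your final argument (the sorted matching on $[m]\setminus\hat{X}_i$, justified by complementary counting) is correct and is the standard one. However, your third paragraph states the key counting inequality in the wrong direction. The claim $|\{k<t : a_k = i\}| \ge |\{x \in X_i : r_i(x) < t\}|$ is false in general. For instance, if other agents pick at steps $1,\dots,t-1$ and never take $\ell_1$, agent $i$ obtains $\ell_1$ at step $t$; then the right side is at least $1$ while the left side is $0$.

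What you need is the reverse, $|\{x \in X_i : r_i(x) < t\}| \ge |\{k < t : a_k = i\}|$. It follows in one line from your second paragraph: each pick of agent $i$ at a step $k<t$ is a distinct item of $X_i$ with rank at most $k<t$. No separate greedy charging argument is needed. Subtracting both sides from $t-1$ gives $|\{o \in O : r_i(o) < t\}| \le |\{k<t : a_k \neq i\}|$ for every $t$. Since $|X_i| = |\hat{X}_i|$, the two sorted lists have equal length, and this inequality for all $t$ is equivalent to $s_q \ge t_q$ for all $q$. Your last paragraph's formulation, that the $q$-th best item of $X_i$ has rank at most the $q$-th index of $\hat{X}_i$, is the correct direction and is equivalent to this.

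Once you have this, Hall's theorem is unnecessary. Take the naive map on $\hat{X}_i$ and the sorted assignment $\pi_i(t_q) = \ell_{s_q}$ on $[m]\setminus\hat{X}_i$; this already gives a bijection satisfying both bullets. The Hall condition you would check on the $O$ side reduces to the same counting inequality anyway. So drop the third paragraph's version and keep the final one.
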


Intuitively, Lemma~\ref{lem:properties-of-bijection} guarantees that when an agent transitions from their assigned bundle $\hat{X}_i$ to their actual bundle $X_i$, every item they receive \emph{weakly improves} in rank, while every item they do not receive \emph{weakly degrades} in rank.

Having established our ordering framework and the reduction algorithm, we are now ready to prove Proposition~\ref{prop:EPMMS_EGMMS_identical_rankings_WLOG}:
\begin{proof}[Proof for Proposition~\ref{prop:EPMMS_EGMMS_identical_rankings_WLOG}]
Let $\mathcal{I}$ be an arbitrary fair division instance, and let $\hat{\mathcal{I}}$ be the corresponding identical rankings instance obtained via the ordering framework. Let $\hat{X}$ be the $\alpha$-EPMMS (resp., $\alpha$-EGMMS) allocation in $\hat{\mathcal{I}}$, and let $X$ be the resulting allocation generated by the picking sequence algorithm.

Fix an arbitrary agent $i \in [n]$. Let $\pi_i : [m] \to [m]$ be the bijection guaranteed by Lemma~\ref{lem:properties-of-bijection}, mapping items from the ordered instance to the original instance.

Let $\hat{Y}^i = (\hat{Y}^i_1, \ldots, \hat{Y}^i_n)$ be the $\alpha$-EPMMS (resp., $\alpha$-EGMMS) certificate for agent $i$ with respect to the bundle $\hat{X}_i$ in $\hat{\mathcal{I}}$. By definition, $\hat{Y}^i_i = \hat{X}_i$. 
We construct the corresponding certificate $Y^i = (Y^i_1, \ldots, Y^i_n)$ in the original instance $\mathcal{I}$ by mapping the items through $\pi_i$:
\[ 
Y^i_k = \{\, \pi_i(g) \mid g \in \hat{Y}^i_k \,\} \quad \text{for all } k \in [n]. 
\]
Because $\pi_i$ is a bijection, $Y^i$ is a well-defined allocation of all items, and by the properties of $\pi_i$, $Y^i_i = X_i$. We must now show that $Y^i$ is a valid $\alpha$-EPMMS (resp., $\alpha$-EGMMS) certificate for agent $i$.

Let $J \subseteq [n]$ be a subset of agents such that $i \in J$. 
\begin{itemize}
    \item For $\alpha$-EPMMS, we restrict $J$ to subsets of size 2 (i.e., $J = \{i, j\}$ for some $j \neq i$). 
    \item For $\alpha$-EGMMS, $J$ can be any subset containing $i$. 
\end{itemize}

Let $k = |J|$ and define the combined bundle $S = \bigcup_{j \in J} Y^i_j$. Correspondingly, let $\hat{S} = \bigcup_{j \in J} \hat{Y}^i_j$ be the combined bundle in the ordered instance. 

Take an arbitrary $k$-partition $\mathcal{P} = (P_1, P_2, \ldots, P_k)$ of $S$. This partition naturally defines a corresponding $k$-partition $\hat{\mathcal{P}} = (\hat{P}_1, \hat{P}_2, \ldots, \hat{P}_k)$ of $\hat{S}$ where $\hat{P}_m = \{\, \pi_i^{-1}(g) \mid g \in P_m \,\}$. 

Because $\hat{Y}^i$ is an $\alpha$-EPMMS (resp., $\alpha$-EGMMS) certificate in the ordered instance, agent $i$'s valuation of $\hat{X}_i$ must be at least $\alpha$ times the value of the minimum piece in the partition $\hat{\mathcal{P}}$. Without loss of generality, assume $\hat{P}_1$ is this minimal piece. Therefore, we have:
\begin{align}
    \hat{v}_i(\hat{X}_i) \ge \alpha \hat{v}_i(\hat{P}_1). 
\end{align}

Let $R = X_i \cap P_1$, which implies $\hat{R} = \hat{X}_i \cap \hat{P}_1$. 
By the properties of the bijection $\pi_i$, items mapped into $X_i$ weakly improve in rank, while items mapped outside $X_i$ weakly degrade. This yields the following inequalities:
\begin{itemize}
    \item Inside $X_i$: $v_i(R) \ge \hat{v}_i(\hat{R})$ and $v_i(X_i \setminus R) \ge \hat{v}_i(\hat{X}_i \setminus \hat{R})$.
    \item Outside $X_i$: Because $P_1 \setminus R \ \cap \ X_i = \emptyset$, it follows that $v_i(P_1 \setminus R) \le \hat{v}_i(\hat{P}_1 \setminus \hat{R})$.
\end{itemize}

We now apply these bounds to the fairness condition. By additivity of valuations:
\begin{align*}
&\hat{v}_i(\hat{X}_i \setminus \hat{R}) + \hat{v}_i(\hat{R}) \ge \alpha \hat{v}_i(\hat{P}_1 \setminus \hat{R}) + \alpha \hat{v}_i(\hat{R}) \\
\implies & \hat{v}_i(\hat{X}_i \setminus \hat{R}) + (1 - \alpha)\hat{v}_i(\hat{R}) \ge \alpha \hat{v}_i(\hat{P}_1 \setminus \hat{R}).
\end{align*}
Since $0 < \alpha \le 1$, the coefficient $(1-\alpha)$ is non-negative. We can therefore substitute $\hat{v}_i(\hat{R})$ with the larger term $v_i(R)$ while preserving the inequality:
\begin{align*}
    \hat{v}_i(\hat{X}_i \setminus \hat{R}) + (1 - \alpha)v_i(R) \ge \alpha \hat{v}_i(\hat{P}_1 \setminus \hat{R}).
\end{align*}

Adding $\alpha v_i(R)$ to both sides, we obtain:
\begin{align*}
\hat{v}_i(\hat{X}_i \setminus \hat{R}) + v_i(R) \ge \alpha \hat{v}_i(\hat{P}_1 \setminus \hat{R}) + \alpha v_i(R). 
\end{align*}
Finally, we substitute the remaining ordered valuations with their original instance bounds:
\begin{align*}
v_i(X_i \setminus R) + v_i(R) & \ge \alpha \hat{v}_i(\hat{P}_1 \setminus \hat{R}) + \alpha v_i(R) && (\text{since } v_i(X_i \setminus R) \ge \hat{v}_i(\hat{X}_i \setminus \hat{R})) \\
&\ge \alpha v_i(P_1 \setminus R) + \alpha v_i(R) && (\text{since } v_i(P_1 \setminus R) \le \hat{v}_i(\hat{P}_1 \setminus \hat{R})) \\
 \implies  v_i(X_i) \ge \alpha v_i(P_1).  & && (\text{by additivity})
\end{align*}

Because $P_1$ was chosen such that $\hat{v}_i(\hat{X}_i) \ge \alpha \hat{v}_i(\hat{P}_1)$, and our derivation proves $v_i(X_i) \ge \alpha v_i(P_1)$, it trivially holds that:
\[ 
v_i(X_i) \ge \alpha \min_{P \in \mathcal{P}} v_i(P). 
\]
Since the $k$-partition $\mathcal{P}$ of the combined bundle $S$ was chosen arbitrarily, this lower bound holds for all possible $k$-partitions. Therefore, agent $i$ receives at least their $\alpha$-Maximin Share of the bundle $S$. Because this holds for an arbitrary valid subset $J \subseteq [n]$, the allocation $Y^i$ satisfies the requirements for an $\alpha$-EPMMS (if $|J| = 2$) and an $\alpha$-EGMMS (if $|J|$ is arbitrary) certificate.
\end{proof}

\section{Non-Degenerate Instances}

\subsection{The Envy Cycle Elimination Does Not Work in Degenerate Instances}\label{sec:envy_cycles_degenerate}

We show that the non-degeneracy assumption on the input valuations to Algorithm~\ref{alg:envy_cycles} is necessary; without it, the $4/5$-PMMS guarantee fails. We first show that without additional assumptions on valuations (beyond identical rankings), Envy Cycle Elimination can guarantee at most $2/3$-PMMS; this example includes zero-valued items. We then show that ruling out zero-valued items is still insufficient to recover the $4/5$-PMMS guarantee: we provide an example without zero-valued items in which Envy Cycle Elimination can guarantee at most $3/4$-PMMS. 

We begin with the first example, which permits zero-valued items.

\begin{observation}
Without the assumption that the valuations are non-degenerate, the Envy Cycle Elimination algorithm (Algorithm~\ref{alg:envy_cycles}) does not guarantee better than $2/3$-PMMS on additive valuations with identical rankings.
\end{observation}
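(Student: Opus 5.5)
The plan is to build an explicit small instance with identical rankings and degenerate values that realizes the extremal configuration behind Proposition~\ref{23pmms} with $\alpha=1$, where the bound is $2/3$. Zero-valued items and ties will let Algorithm~\ref{alg:envy_cycles}, under an adversarial but legal tie-breaking, reach that configuration. The target end state is two agents $i$ and $j$ with $v_i(X_i)=2$ and $X_j$ consisting of two items worth $2$ each to agent $i$. Agent $i$'s bundle should be $\{1,1,0\}$, or a similar bundle of low items together with a zero item. Then $\mu_i(2,X_i\cup X_j)=\mu_i(2,\{2,2,1,1,0\})=3$, giving a ratio of $2/3$. Scaling the small items (e.g.\ $2,2,2,1,1,0$ in the spirit of the first example of Appendix~\ref{pmmsincycleelimination}, with $\alpha=1$) lets the ratio approach, and here equal, $2/3$.

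\textbf{Step 1: fix the valuations.} I will use three agents and a common order $g_1,\dots,g_6$. Agent $1$ has values $2,2,2,1,1,0$. Agents $2$ and $3$ share the same ranking but may have different magnitudes. Their values are chosen so that strict envy, which is what creates a cycle, appears exactly when needed, and is absent when the algorithm should be free to give an item to a particular agent. The degeneracy plays two roles. First, the test ``$v_j(X_j)\ge v_j(X_i)$ for all $j$'' in the algorithm is weak, so an agent with a tied bundle still counts as unenvied and can receive the next item. Second, the zero item $g_6$ can be appended without changing anyone's values, so that a bundle reaches a shape like $\{1,1,0\}$.

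\textbf{Step 2: trace a run.} I will write out an execution step by step, recording the envy graph as in Section~\ref{sec:tightness_envy_cycles}, and check at every step that the chosen recipient is unenvied (allowing ties) and that each rotated cycle consists of strict envy edges. The intended shape of the run is as follows. Agent $1$ first holds a single top item worth $2$. Some other agent accumulates a bundle of low items worth $2$ to agent $1$ together with $g_6$. That agent is then made to strictly envy agent $1$; for agent $1$, the relation is only a tie or a weak preference, which is supplied by the zero item and by agent $1$ having a non-unique maximum. At that point a cycle through agent $1$ hands agent $1$ the low bundle, while a two-top-item bundle sits with a third agent. Finally, I will compute $\mu_1(2,X_1\cup X_j)=3$ against $v_1(X_1)=2$.

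\textbf{Main obstacle.} The hard part is Step 2. Because the cycle step uses strict envy, agent $1$ must strictly prefer the low bundle at rotation time, yet its value must still equal $2$ at the end. My first attempt is to rotate when agent $1$'s own bundle is worth strictly less than $2$, for instance when she holds only $g_6$ or another low item. That requires agent $1$ to receive low items early, which the descending allocation order seems to forbid, so the differing magnitudes of agents $2$ and $3$ must be tuned to route the top items away from agent $1$ through earlier cycles. If this routing fails, I will fall back to four agents, or to agent $1$ holding $\{g_1\}$ of value $2$ with agent $1$'s values perturbed by ties in a direction different from the other agents. I will then search small instances, possibly by computer, for a run that meets all the envy constraints.
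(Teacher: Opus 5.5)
Your proposal does not yet contain a proof. No instance or run is actually exhibited, and the run you sketch cannot occur as specified.

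\textbf{The rotation you need is impossible.} With agent~1's values $2,2,2,1,1,0$, agent~1 holding $\{g_1\}$ has value $2$. The low bundle $\{g_4,g_5,g_6\}$ is also worth $2$ to her, so she has no envy edge to its holder. The cycle-elimination step gives each agent on the cycle only a bundle she \emph{strictly} prefers, so no rotation can hand agent~1 that bundle. Your own description of agent~1's relation to it as ``only a tie or a weak preference'' is exactly what rules the rotation out.

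\textbf{The loop may end first.} Suppose items are allocated in the common order, as you assume. Then $g_6$ is the last item, so completing the low bundle empties $U$. The while loop then terminates before any cycle is eliminated, since rotations occur only while $U \neq \emptyset$.

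The fallbacks you list (routing top items away from agent~1, four agents, computer search) are not carried out, so the decisive step is missing.

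\textbf{How the paper closes the gap.} You do not need to hit $2/3$ exactly. The statement only says that no factor above $2/3$ is guaranteed, so a family of runs with ratio $(2+\epsilon)/3$ suffices. The paper takes seven items. Agents~1 and~2 both have values $2,2,2,1,1,\epsilon,0$, and agent~3 values only $g_1$ (at $1$).
\begin{itemize}
\item The run $g_1 \to 1$, then $g_2,g_3 \to 2$, then $g_4,g_5,g_6 \to 3$ is legal because of the ties $v_1(\{g_2\}) = v_1(\{g_1\})$ and $v_1(\{g_4,g_5\}) = v_1(\{g_1\})$.
\item The $\epsilon$ on $g_6$ makes agent~1's envy of agent~3 strict. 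Agent~3 strictly envies agent~1, and agent~1 envies agent~2.
\item Hence no agent is unenvied while the zero-valued $g_7$ is still unallocated, and the swap along $1 \to 3 \to 1$ is forced.
\end{itemize}
Agent~1 then holds value $2+\epsilon$ while $\mu_1(2, \{g_2,\dots,g_6\}) \ge 3$. Letting $\epsilon \to 0$ gives the claim.
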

\begin{proof}
Fix $\epsilon > 0$, and consider an instance with $3$ agents and $7$ items, with the following additive valuations:
\begin{center}
\begin{tabular}{c c c c c c c c}
  & $g_1$ & $g_2$ & $g_3$ & $g_4$ & $g_5$ & $g_6$ & $g_7$ \\
 \text{Agent 1} & 2 & 2 & 2 & 1 & 1 & $\epsilon$ & 0 \\
 \text{Agent 2} & 2 & 2 & 2 & 1 & 1 & $\epsilon$ & 0 \\
 \text{Agent 3} & 1 & 0 & 0 & 0 & 0 & 0 & 0
\end{tabular}
\end{center}
Note that all three agents share the same ranking $g_1, g_2,  \ldots,  g_7$, so this is an identical-rankings instance. Consider the following execution of Algorithm~\ref{alg:envy_cycles}:
\begin{enumerate}
    \item Agent $1$ receives item $g_1$. Agent $1$ is unenvied before this assignment since her bundle is empty.
    \item Agent $2$ receives items $g_2$ and $g_3$. Agent $2$ is unenvied before each of these assignments: initially her bundle is empty, and after receiving $g_2$, agent $1$ values $\{g_2\}$ at $2$ (equal to her own bundle) and agent $3$ values $\{g_2\}$ at $0$, so neither envies agent $2$.
    \item Agent $3$ receives items $g_4$, $g_5$, and $g_6$. Before each assignment, agent $3$ is unenvied since agents $1$ and $2$ value any subset of $\{g_4, g_5, g_6\}$ at no more than their own bundles ($2$ and $4$, respectively).
\end{enumerate}
At this point, the envy graph, whose nodes are the agents and which contains an edge $i \to j$ whenever $v_i(X_j) > v_i(X_i)$, is as follows:
\begin{center}
\begin{tikzpicture}[main/.style = {draw, circle}]
\node[main] (1) {$1$};
\node[main] (2) [below right of=1] {$2$};
\node[main] (3) [below left of=1] {$3$};
\draw[->] (1) -- (2);
\draw[->] (1) -- (3);
\draw[->] (3) -- (1);
\end{tikzpicture}
\end{center}
No agent has in-degree $0$, so the algorithm cannot proceed without resolving the cycle $1 \to 3 \to 1$, which it does by swapping the bundles of agents $1$ and $3$. After the swap, agent $1$ holds $\{g_4, g_5, g_6\}$, valued at $2 + \epsilon$, while agent $2$ holds $\{g_2, g_3\}$. We calculate:
\begin{align*}
    \mu_1(2, \{g_2, g_3, g_4, g_5, g_6\}) \geq \min\{ 2 + 1 + \epsilon , 2 + 1 \} = 3.
\end{align*}
As $\epsilon \to 0$, the ratio $\frac{2 + \epsilon}{3} \to \frac{2}{3}$, completing the proof.
\end{proof}

We now turn to the second example, which shows that excluding zero-valued items still does not suffice to recover the $4/5$-PMMS guarantee.

\begin{observation}
    Even when all item values are strictly positive, the Envy Cycle Elimination algorithm (Algorithm~\ref{alg:envy_cycles}) does not guarantee better than $3/4$-PMMS on additive valuations with identical rankings.
\end{observation}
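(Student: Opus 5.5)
The plan is to exhibit an explicit identical-rankings instance with strictly positive values and ties, together with an admissible execution of Algorithm~\ref{alg:envy_cycles}. The ties let the algorithm choose among several unenvied agents, and we pick the choices. The template is the previous observation. Agents $1$ and $2$ share a valuation in which three ``large'' items are followed by several ``small'' ones. Agent $3$ values only $g_1$ substantially, but instead of zeros she gives every other item the same tiny value $\delta>0$, so that the ranking $g_1,\dots,g_m$ is still common to all agents. The goal is to end with agent~$1$ holding three small items, of total value $3$ to her, while some other agent holds a bundle $B$ with $\mu_1(2,X_1\cup B)=4$ (or more). As $\delta\to 0$ the ratio is then at most $3/4$.

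Concretely, I would try agents $1,2$ with values $(2,2,2,1,1,1,1)$ on $g_1,\dots,g_7$, and agent $3$ with values $(1,\delta,\dots,\delta)$. The execution I aim for is the following.
\begin{enumerate}
\item Agent $1$ takes $g_1$; agent $2$ takes $g_2,g_3$, which is legal since agent $1$ values $\{g_2\}$ at $2$, a tie.
\item Agent $3$ takes $g_4,g_5,g_6$; before $g_6$, agent $1$ values agent $3$'s bundle at $2$, again a tie.
\item Now agent $1$ envies both $2$ and $3$, and agent $3$ envies agent $1$, since $1>3\delta$. Hence no agent is unenvied, and the algorithm must rotate the cycle $1\to3\to1$.
\item The last item $g_7$ is handed to a suitable unenvied agent.
\end{enumerate}
The target configuration is $X_1=\{g_4,g_5,g_6\}$ with $v_1(X_1)=3$, together with a bundle whose union with $X_1$ splits into two halves of value $4$ each, e.g.\ $\{2,2\}$ and $\{1,1,1,1\}$. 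Then $\mu_1(2,\cdot)=4$, and Lemma~\ref{lemmamms} is not needed: exhibiting the partition suffices.

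The step I expect to be the main obstacle is the scheduling in step~4. After the swap, agents $1$ and $3$ are unenvied while agent $2$ is envied by agent $1$, so $g_7$ cannot go to agent $2$ as written. Giving $g_7$ to agent $1$ raises her value to $4$ and destroys the bound. I would first try to fix this by adjusting values or adding items so that the envy edge into agent $2$ disappears by the time the last small item is allocated. One option is to make agent $1$ value $g_2,g_3$ slightly less, using ties rather than strict inequalities. Another is to let agent $3$ absorb $g_7$ and instead make agent $2$'s bundle $\{2,2\}$ with $X_1$ containing four items of value $1$ to agent $1$ but only three of which count as her ``own'' at the end. Failing that, I would redistribute so that the envy cycle is formed later, after all small items except agent $1$'s are placed.

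In every candidate I would verify three things line by line, as in the tightness example of Section~\ref{sec:tightness_envy_cycles}:
\begin{itemize}
\item each item is given to an agent who is unenvied at that moment;
\item each rotation is forced because no agent is unenvied;
\item all values are positive and the common ranking is respected.
\end{itemize}
Finally, I would compute the ratio and let $\delta\to0$.
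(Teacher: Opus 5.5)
\textbf{Genuine gap.} This observation consists entirely of an explicit instance together with a legal execution, and your proposal does not supply one.

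\emph{Your instance fails.} The execution is forced up to the rotation. After it, agent~1 holds $\{g_4,g_5,g_6\}$ (value $3$) and agent~2 holds $\{g_2,g_3\}$ (value $4$ to agent~1), so $g_7$ must go to agent~1 or agent~3. Either way the final allocation is PMMS for agent~1:
\begin{itemize}
\item If $g_7$ goes to agent~3, then $\mu_1(2,X_1\cup X_2)=\mu_1(2,\{2,2,1,1,1\})=3=v_1(X_1)$, and $\mu_1(2,X_1\cup X_3)=3$ as well.
\item If $g_7$ goes to agent~1, her value becomes $4$.
\end{itemize}

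\emph{The obstruction is structural, not a scheduling detail.} Because $v_1(g_1)=2$, agent~1 envies agent~2 as soon as agent~2 holds two items of value $2$. That bundle can receive nothing more unless agent~1's own value first reaches $4$. So it is frozen at value $4$, giving $v_1(X_1\cup X_2)=7$ and $\mu_1\le 3.5$. To get $\mu_1=4$ against a bundle worth $3$, you need some bundle worth at least $5$ to agent~1. That bundle must be completed \emph{before} the rotation, while its holder is still unenvied. None of your repairs achieves this. Lowering $v_1(g_2),v_1(g_3)$ only lowers $\mu_1$. Giving agent~1 four items of value $1$ makes $v_1(X_1)=4$. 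The remaining suggestions are not carried out.

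\emph{The missing idea, as in the paper.} Raise agent~1's first item to value $3$, so that agent~2 can legally take a second large item through a tie, and interleave agent~3's items. Agents~1 and~2 have values $(3,3,2,2,1,\epsilon,\epsilon)$ and agent~3 has $(10,1,1,1,1,1,\epsilon)$. The execution is:
\begin{enumerate}
\item Items $g_1,g_2,g_3$ go to agents $1,2,3$.
\item $g_4$ goes to agent~2, who is unenvied since $v_1(\{g_2\})=3=v_1(\{g_1\})$; now $v_1(X_2)=5$.
\item $g_5,g_6$ go to agent~3.
\item Now $v_1(X_3)=3+\epsilon>3$ (the $\epsilon$ makes this envy strict), and agent~3 envies $\{g_1\}$. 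No agent is unenvied, so the cycle $1\leftrightarrow 3$ must be rotated.
\item Afterwards agent~3 is unenvied and takes $g_7$.
\end{enumerate}
The final bundle $X_1=\{g_3,g_5,g_6\}$ has value $3+\epsilon$. The split $\{g_2,g_5\}$ versus $\{g_3,g_4,g_6\}$ shows $\mu_1(2,X_1\cup X_2)\ge 4$, so the ratio is at most $(3+\epsilon)/4\to 3/4$. Your agent-3 template $(1,\delta,\dots,\delta)$ would also work with these values for agents~1 and~2.
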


\begin{proof}
Assume there are three agents and seven items. Let $\epsilon > 0$ be a sufficiently small constant. The valuation functions are given by:

\begin{center}
\begin{tabular}{cccccccc}
 & $g_1$ & $g_2$ & $g_3$ & $g_4$ & $g_5$ & $g_6$ & $g_7$ \\
Agent $1$ & $3$ & $3$ & $2$ & $2$ & $1$ & $\epsilon$ & $\epsilon$ \\
Agent $2$ & $3$ & $3$ & $2$ & $2$ & $1$ & $\epsilon$ & $\epsilon$ \\
Agent $3$ & $10$ & $1$ & $1$ & $1$ & $1$ & $1$ & $\epsilon$ \\
\end{tabular}
\end{center}

Note that all three agents share the same ranking $g_1, g_2,  \ldots,  g_7$, so this is an identical-rankings instance. Consider the following execution of Algorithm~\ref{alg:envy_cycles}:

After the first three iterations, every agent holds an item:
\[
\begin{aligned}
X_1 &= \{g_1\}, & v_1(X_1) &= 3, \ \ \ \ \ \ \ \ \ \ \ \ \ \ \\
X_2 &= \{g_2\}, & v_2(X_2) &= 3, \\
X_3 &= \{g_3\}, & v_3(X_3) &= 1
\end{aligned}
\quad \quad
\begin{tikzpicture}[
    baseline={(current bounding box.center)},
    scale=1.2,
    main/.style={draw,circle,inner sep=2pt}
]
\node[main] (1) {$1$};
\node[main] (2) [below right=0.75cm of 1] {$2$};
\node[main] (3) [below left=0.75cm of 1] {$3$};
\draw[->] (3)--(1);
\end{tikzpicture}
\]

Next, agent 2 receives an item:
\[
\begin{aligned}
X_1 &= \{g_1\}, & v_1(X_1) &= 3, \\
X_2 &= \{g_2,g_4\}, & v_2(X_2) &= 5, \ \ \ \ \ \ \ \ \ \ \  \\
X_3 &= \{g_3\}, & v_3(X_3) &= 1
\end{aligned}
\quad \quad
\begin{tikzpicture}[
    baseline={(current bounding box.center)},
    scale=1.2,
    main/.style={draw,circle,inner sep=2pt}
]
\node[main] (1) {$1$};
\node[main] (2) [below right=0.75cm of 1] {$2$};
\node[main] (3) [below left=0.75cm of 1] {$3$};
\draw[->] (3)--(1);
\draw[->] (3)--(2);
\draw[->] (1)--(2);
\end{tikzpicture}
\]

Next, agent 3 receives an item:
\[
\begin{aligned}
X_1 &= \{g_1\}, & v_1(X_1) &= 3, \\
X_2 &= \{g_2,g_4\}, & v_2(X_2) &= 5, \\
X_3 &= \{g_3,g_5\}, & v_3(X_3) &= 2 \ \ \ \ \ \ \ \ \ \ \ \
\end{aligned}
\quad \quad
\begin{tikzpicture}[
    baseline={(current bounding box.center)},
    scale=1.2,
    main/.style={draw,circle,inner sep=2pt}
]
\node[main] (1) {$1$};
\node[main] (2) [below right=0.75cm of 1] {$2$};
\node[main] (3) [below left=0.75cm of 1] {$3$};
\draw[->] (3)--(1);
\draw[->] (1)--(2);
\end{tikzpicture}
\]

Next, agent 3 receives another item:
\[
\begin{aligned}
X_1 &= \{g_1\}, & v_1(X_1) &= 3, \ \ \ \ \ \ \ \\
X_2 &= \{g_2,g_4\}, & v_2(X_2) &= 5, \\
X_3 &= \{g_3,g_5,g_6\}, & v_3(X_3) &= 3
\end{aligned}
\quad \quad
\begin{tikzpicture}[
    baseline={(current bounding box.center)},
    scale=1.2,
    main/.style={draw,circle,inner sep=2pt}
]
\node[main] (1) {$1$};
\node[main] (2) [below right=0.75cm of 1] {$2$};
\node[main] (3) [below left=0.75cm of 1] {$3$};
\draw[<->] (3)--(1);
\draw[->] (1)--(2);
\end{tikzpicture}
\]

At this point, a cycle is created and we resolve it:
\[
\begin{aligned}
X_1 &= \{g_3,g_5,g_6\}, & v_1(X_1) &= 3 + \epsilon, \\
X_2 &= \{g_2,g_4\}, & v_2(X_2) &= 5, \\
X_3 &= \{g_1\}, & v_3(X_3) &= 10
\end{aligned}
\quad \quad
\begin{tikzpicture}[
    baseline={(current bounding box.center)},
    scale=1.2,
    main/.style={draw,circle,inner sep=2pt}
]
\node[main] (1) {$1$};
\node[main] (2) [below right=0.75cm of 1] {$2$};
\node[main] (3) [below left=0.75cm of 1] {$3$};
\draw[->] (1)--(2);
\end{tikzpicture}
\]

Finally, suppose agent 3 receives the last item. The algorithm terminates with this allocation.

Observe that:
\[
v_1(X_1) = 3 + \epsilon,
\]
while
\[
\mu_1(2, X_1 \cup X_2)
= \min \big( v_1(\{g_2,g_5\}),\; v_1(\{g_3,g_4,g_6\}) \big)
= \min\{4,4+\epsilon\} = 4.
\]

Thus, agent 1 cannot obtain more than $\tfrac{3+\epsilon}{4} \to \tfrac{3}{4}$ of her pairwise maximin share as $\epsilon \to 0$, completing the proof.
\end{proof}

\subsection{Reduction to Non-Degenerate Instances}

We adopt the following definition of non-degenerate instances from \citet*{ChaudhuryGM24}.
 
\begin{definition}[Non-Degenerate Instances, \cite{ChaudhuryGM24}]
An instance is called \emph{non-degenerate} if no agent assigns equal value to two distinct sets; that is, for every $i \in N$, $v_i(S) \neq v_i(T)$ whenever $S \neq T$.
\end{definition}
 
We now show that it suffices to restrict attention to non-degenerate instances.
 
In the following proposition, we show that to establish Proposition~\ref{prop:4_5_PMMS_identical_rankings_1}, it is enough to consider non-degenerate instances. \citet*{ChaudhuryGM24} proved an analogous reduction for EFX, and we adapt their argument to our setting.
\begin{proposition}[Reduction to Non-Degenerate Instances]\label{prop:nd_wlog}
If a polynomial-time algorithm exists for computing a 4/5-PMMS and EFX allocation on every instance with \emph{non-degenerate} additive valuations and identical rankings, then it can be extended to handle every instance with (possibly degenerate) additive valuations and identical rankings.
\end{proposition}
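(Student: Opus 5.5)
The plan is to perturb the given (possibly degenerate) instance into a non-degenerate one with the same common ranking. We then run the assumed algorithm on the perturbed instance and output the same allocation for the original instance, arguing that $4/5$-PMMS and EFX transfer back. Following \citet*{ChaudhuryGM24}, order the items $g_1,\dots,g_m$ by the common ranking (ties broken by index). For a small $\varepsilon>0$ define
\[
v_i'(g_t) = v_i(g_t) + \varepsilon \cdot 2^{m-t}.
\]
Then every bundle $S$ has $v_i'(S) = v_i(S) + \varepsilon\sum_{g_t\in S}2^{m-t}$. The perturbation term is distinct for distinct bundles (binary encoding), but distinct bundles with equal $v_i$ are not yet separated if the $v_i$-differences are commensurable with it. To fix this, choose $\varepsilon$ smaller than $\delta/2^{m+1}$, where $\delta$ is the minimum positive gap $|v_i(S)-v_i(T)|$ over all agents and bundles. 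Then $v_i(S)>v_i(T)$ implies $v_i'(S)>v_i'(T)$, and ties are broken strictly by the perturbation term, so $v'$ is non-degenerate. The perturbation is decreasing in $t$, so the ranking is preserved: $v_i'(g_1)\ge\dots\ge v_i'(g_m)$.

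The one computational worry is that $\delta$ is defined over exponentially many bundles. I will avoid computing it by working symbolically: treat $\varepsilon$ as an infinitesimal and compare values lexicographically, first by $v_i$ and then by the binary perturbation vector. Each comparison the algorithm makes is then a polynomial-time operation, so the running time stays polynomial. Alternatively, with rational inputs of bit-length $L$, an explicit $\delta \ge 2^{-\mathrm{poly}(L)}$ bound works as well.

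Next I transfer the guarantees from $v'$ back to $v$. Let $X$ be $4/5$-PMMS and EFX under $v'$.

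\emph{EFX.} For each agent $i$, each $j$ and each $g\in X_j$ we have $v_i'(X_i)\ge v_i'(X_j\setminus\{g\})$. Suppose $v_i(X_i)<v_i(X_j\setminus\{g\})$. Then the gap is at least $\delta$, which exceeds the total perturbation, so $v_i'(X_i)<v_i'(X_j\setminus\{g\})$, a contradiction.

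\emph{PMMS.} We have $v_i'(X_i)\ge \tfrac45\mu_i'(2,X_i\cup X_j)\ge \tfrac45\mu_i(2,X_i\cup X_j)$, where the last inequality holds because $v'\ge v$ pointwise. Also $v_i(X_i)\ge v_i'(X_i)-\varepsilon 2^{m}$, so $v_i(X_i)\ge \tfrac45\mu_i(2,X_i\cup X_j)-\varepsilon 2^m$.

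The main obstacle is removing this additive slack. Both sides are values from a finite set. If $v_i(X_i)<\tfrac45\mu_i$, then $5v_i(X_i)-4v_i(P)<0$ for the relevant bundle $P$. This quantity is a fixed linear combination of bundle values, so its magnitude is at least some $\delta'>0$ depending only on the instance: take the minimum nonzero value of $|5v_i(S)-4v_i(T)|$ over all bundles. Shrinking $\varepsilon$ below $\delta'/(5\cdot 2^{m})$ then yields a contradiction. In the symbolic-infinitesimal implementation this is automatic, since a strict real-part inequality cannot be reversed by an infinitesimal term. Hence $X$ is $4/5$-PMMS and EFX under $v$, and the output is computed in polynomial time.
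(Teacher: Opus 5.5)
Your proposal is correct and follows essentially the same route as the paper. Both use a powers-of-two perturbation that preserves the common ranking, and both choose $\varepsilon$ below the smallest nonzero gaps $|v_i(S)-v_i(T)|$ and $|\tfrac45 v_i(S)-v_i(T)|$ (your $\delta'$ is the paper's $\delta_1$ up to a factor of $5$), so that strict inequalities transfer between the two instances. You also go slightly beyond the written proof: you transfer EFX back to the original instance explicitly and explain how to realize $\varepsilon$ in polynomial time, both of which the paper leaves implicit.
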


\begin{proof}
Fix $\alpha \le 1$, and let $M = \{g_1, g_2, \dots, g_m\}$ be ordered from least to most valuable. Given any instance $\mathcal{I}$ with valuation profile $(v_1, \ldots, v_n)$, we construct a perturbed instance $\mathcal{I}(\varepsilon)$ with valuation profile $(v'_1, \ldots, v'_n)$, where each $v'_i$ is defined by
\[
v_i'(g_j) = v_i(g_j) + \varepsilon \cdot 2^j.
\]
 
For any $k < \ell$ and any $i \in N$, we have
\[
v'_i(g_k) = v_i(g_k) + \varepsilon 2^k < v_i(g_\ell) + \varepsilon 2^\ell = v'_i(g_\ell).
\]
Hence $\mathcal{I}(\varepsilon)$ preserves the original ordering of items for every agent, so the perturbed instance also has identical rankings.

For every $i \in N$, define 
\[
\delta_1(i) = \min_{S,T : \alpha v_i(S) \neq v_i(T)} |\alpha v_i(S) - v_i(T)|, \quad
\delta_2(i) = \min_{S,T : v_i(S) \neq v_i(T)} |v_i(S) - v_i(T)|.
\]
Set $\delta_1 = \min_i \delta_1(i)$, $\delta_2 = \min_i \delta_2(i)$, and $\delta = \min(\delta_1, \delta_2)$. Choose $\varepsilon > 0$ small enough that 
\begin{align*}
    \varepsilon \cdot 2^{m+1} < \delta.
\end{align*}

We first show that for any agent $i$ and any $S, T \subseteq M$,
\begin{equation} \label{order_preserving}
     v_i(S) > v_i(T) \Longrightarrow  v'_i(S) > v'_i(T).
\end{equation}
Suppose $v_i(S) > v_i(T)$. Then
\begin{align*}
v'_i(S) - v'_i(T)
&= v_i(S) - v_i(T) + \varepsilon \!\left(\sum_{g_j \in S} 2^j - \sum_{g_j \in T} 2^j\right) \\
&\ge \delta_2 - \varepsilon \sum_{g_j \in T} 2^j \\
&\ge \delta - \varepsilon (2^{m+1} - 1) \\
&> 0.
\end{align*}
\newline
By the same argument, for any agent $i$ and any $S, T \subseteq M$,
\begin{equation} \label{order_preserving_alpha}
    \alpha v_i(S) > v_i(T) \Longrightarrow \alpha v'_i(S) > v'_i(T).
\end{equation}
Indeed, if $\alpha v_i(S) > v_i(T)$, then
\begin{align*}
\alpha v'_i(S) - v'_i(T)
&= \alpha v_i(S) - v_i(T) + \varepsilon \!\left(\sum_{g_j \in S} \alpha 2^j - \sum_{g_j \in T} 2^j\right) \\
&\ge \delta_1 - \varepsilon \sum_{g_j \in T} 2^j \\
&\ge \delta - \varepsilon (2^{m+1} - 1) \\
&> 0.
\end{align*}

We next verify that $\mathcal{I}(\varepsilon)$ is non-degenerate. Take any two distinct sets $S, T \subseteq M$ and any agent $i \in N$. If $v_i(S) \neq v_i(T)$, then $v'_i(S) \neq v'_i(T)$ by Equation~\eqref{order_preserving}. If instead $v_i(S) = v_i(T)$, then
\[
v'_i(S) - v'_i(T) = \varepsilon \!\left(\sum_{g_j \in S \setminus T} 2^j - \sum_{g_j \in T \setminus S} 2^j\right) \neq 0,
\]
since $S \neq T$ and the sums of distinct subsets of powers of two differ. Hence $\mathcal{I}(\varepsilon)$ is non-degenerate.

It remains to show that any $\alpha$-PMMS allocation in $\mathcal{I}(\varepsilon)$ is also an $\alpha$-PMMS allocation in $\mathcal{I}$. Suppose, for contradiction, that $X$ is an $\alpha$-PMMS allocation in $\mathcal{I}(\varepsilon)$ but not in $\mathcal{I}$. Then there exist agents $i, j$ such that $\alpha \mu_i(2, X_i \cup X_j) > v_i(X_i)$, so there is a partition $(A_1, A_2)$ of $X_i \cup X_j$ satisfying $\alpha v_i(A_1) > v_i(X_i)$ and $\alpha v_i(A_2) > v_i(X_i)$. Applying Equation~\eqref{order_preserving_alpha}, we obtain $\alpha v'_i(A_1) > v'_i(X_i)$ and $\alpha v'_i(A_2) > v'_i(X_i)$, which gives
\[
\alpha \mu_{v'_i}(2, X_i \cup X_j)
  \ge \alpha \min(v'_i(A_1), v'_i(A_2))
  > v'_i(X_i),
\]
contradicting the assumption that $X$ is an $\alpha$-PMMS allocation in $\mathcal{I}(\varepsilon)$.

Therefore, whenever every non-degenerate instance admits an $\alpha$-PMMS allocation, any instance $\mathcal{I}$ can be handled by passing to its non-degenerate perturbation $\mathcal{I}(\varepsilon)$, and any $\alpha$-PMMS allocation in $\mathcal{I}(\varepsilon)$ is also an $\alpha$-PMMS allocation in $\mathcal{I}$.
\end{proof}

\end{document}